\documentclass[reprint,aps,twocolumn,superscriptaddress,showpacs,notitlepage,nofootinbib,floatfix,10pt]{revtex4-2}

\usepackage{amsmath,amssymb,amsfonts,amstext,amsthm}
\usepackage{graphicx,breakurl,color,setspace}
\usepackage{mathtools}
\usepackage{dsfont}
\usepackage[T1]{fontenc}
\usepackage[latin9]{inputenc}
\usepackage[normalem]{ulem}
\usepackage[unicode=true,pdfusetitle,pdfborder={0 0 1}]{hyperref}
\usepackage{xcolor}
\usepackage{ragged2e}
\hypersetup{
    colorlinks,
    linkcolor=[RGB]{243 68 110},
    citecolor=[RGB]{0 160 63},
    urlcolor=[RGB]{57 115 239}
}

\usepackage{upgreek}
\usepackage{bm}
\usepackage{mathrsfs}
\usepackage[linesnumbered, ruled, vlined]{algorithm2e}
\SetArgSty{textnormal}
\SetKw{Return}{Return}
\RestyleAlgo{ruled}
\theoremstyle{definition}
\newtheorem{theorem}{Theorem}

\newtheorem{definition}[theorem]{Definition}

\newtheorem{lemma}[theorem]{Lemma}

\newtheorem{corollary}[theorem]{Corollary}
\theoremstyle{remark}

\newcommand{\smallsquare}{\scriptscriptstyle\square}

\begin{document}

\title{Hierarchical Logical Processor on the Rotated Surface Code with Shuttle Buses}
\author{Zi-Han Chen}

\email{czh007@mail.ustc.edu.cn}

\affiliation{
Hefei National Research Center for Physical Sciences at the Microscale and School of Physical Sciences,
University of Science and Technology of China, Hefei 230026, China}
\affiliation{
Shanghai Research Center for Quantum Science and CAS Center for Excellence in Quantum Information and Quantum Physics,
University of Science and Technology of China, Shanghai 201315, China}
\affiliation{
Hefei National Laboratory, University of Science and Technology of China, Hefei 230088, China}

\author{Ming-Cheng Chen}

\affiliation{
Hefei National Research Center for Physical Sciences at the Microscale and School of Physical Sciences,
University of Science and Technology of China, Hefei 230026, China}
\affiliation{
Shanghai Research Center for Quantum Science and CAS Center for Excellence in Quantum Information and Quantum Physics,
University of Science and Technology of China, Shanghai 201315, China}
\affiliation{
Hefei National Laboratory, University of Science and Technology of China, Hefei 230088, China}

\author{Chao-Yang Lu}

\affiliation{
Hefei National Research Center for Physical Sciences at the Microscale and School of Physical Sciences,
University of Science and Technology of China, Hefei 230026, China}
\affiliation{
Shanghai Research Center for Quantum Science and CAS Center for Excellence in Quantum Information and Quantum Physics,
University of Science and Technology of China, Shanghai 201315, China}
\affiliation{
Hefei National Laboratory, University of Science and Technology of China, Hefei 230088, China}

\author{Jian-Wei Pan}

\affiliation{
Hefei National Research Center for Physical Sciences at the Microscale and School of Physical Sciences,
University of Science and Technology of China, Hefei 230026, China}
\affiliation{
Shanghai Research Center for Quantum Science and CAS Center for Excellence in Quantum Information and Quantum Physics,
University of Science and Technology of China, Shanghai 201315, China}
\affiliation{
Hefei National Laboratory, University of Science and Technology of China, Hefei 230088, China}

\date{\today}

\begin{abstract}
    Quantum platforms with beyond-planar connectivity provide new opportunities for fault-tolerant quantum computation (FTQC). While quantum low-density parity-check (qLDPC) codes offer high encoding efficiency, their direct implementation requires non-local couplings in every round of syndrome extraction, incurring additional physical error and implementation complexity. To reduce the frequency of such couplings, we propose the Hierarchical Logical Processor (HLP), which concatenates a high-rate quantum CSS code with the rotated surface code (RSC). HLPs can achieve beyond-RSC encoding efficiency while requiring long-range connectivity only once every $\Theta(d_0)$ rounds of level-0 error correction, where $d_0$ denotes the base-code distance, substantially reducing the frequency of non-local couplings relative to direct implementations of qLDPC codes. HLPs introduce elongated RSC patches called shuttle buses. Using transversal hybrid-unit CNOT gates, a single shuttle bus can simultaneously couple to multiple standard RSC patches. This capability enables efficient level-1 syndrome extraction with suppressed level-1 error correlations and supports highly parallel logical Pauli measurements. We perform circuit-level simulations of several concrete HLP constructions and benchmark both logical memory and logical Pauli measurement performance. At a physical error rate of $10^{-3}$, an HLP based on the [[256,194,4]] code achieves $3$\textendash{}$4$ times higher qubit efficiency than the standard RSC. Compared with the yoked surface code~\cite{gidney_yoked_2025} on the same level-1 code, this HLP reduces the space overhead per logical qubit by 100\textendash{}200 physical qubits and shortens the logical error-correction cycle time by a factor of 20\textendash{}30. 
\end{abstract}

\maketitle		
Utility-scale quantum algorithms require quantum operations at extremely low error rates that are orders of magnitude lower than state-of-the-art error rates on physical qubits~\cite{gidney_how_2025,kivlichan_improved_2020,campbell_early_2022}. Fault-tolerant quantum computation (FTQC) aims to bridge this gap between the desired error rate and physical error rates by encoding logical quantum information into quantum error correction (QEC) codes. The rotated surface code (RSC) is a leading QEC code candidate for experimental implementation, since the RSC has simple stabilizer patterns and a high threshold~\cite{dennis_topological_2002,horsman_rotated_surgery_2012}. However, a marked shortcoming of the RSC is its low qubit efficiency\textemdash{}a single patch of distance-$d$ RSC encodes only one logical qubit with $d^2$ physical qubits. Recent constructions of quantum low-density parity-check (qLDPC) codes~\cite{gottesman_fault_tolerant_2014,breuckmann_quantum_2021}, such as bivariate bicycle codes~\cite{bravyi_high_threshold_2024,yoder_tour_2025,webster_pinnacle_2026}, tile codes~\cite{liang_planar_2025,steffan_tile_2025,choe_barbell_2026}, hypergraph product codes~\cite{tillich_quantum_2013,xu_constant_overhead_2024}, and lifted product codes~\cite{panteleev_quantum_2022,cain_shors_2026}, offer significantly higher encoding efficiency than the RSC. Direct implementations of qLDPC codes generally require non-local couplings in each round of syndrome extraction (SE), with the required coupling range depending on the specific code construction.

Reconfigurable neutral atom arrays, a rapidly evolving quantum platform supporting long-range connectivity~\cite{bluvstein_quantum_2022,bluvstein_logical_2024,reichardt_fault_tolerant_2025}, are promising for realizing these qLDPC codes. Nevertheless, long-range gates, implemented via atom shuttling, can introduce additional physical errors and implementation complexity~\cite{bluvstein_quantum_2022,bluvstein_logical_2024}. Thus, an important consideration is to reduce qubit overhead while limiting the use of long-range gates. For instance, measuring only a fraction of stabilizer generators in each SE round is an effective approach to mitigate the cost of long-range connectivity for implementing qLDPC codes~\cite{berthusen_partial_2024,berthusen_toward_2025,berthusen_adaptive_2025}. A few recent protocols propose using native $\mathrm{iSWAP}$ gates on superconducting platforms~\cite{mcewen_relaxing_2023,eickbusch_demonstration_2025,chen_efficient_2025} to extract syndromes and perform routing simultaneously, thereby compiling the code-specific non-local data\textendash{}ancilla connectivity of certain qLDPC codes into SE circuits with only nearest-neighbor coupling~\cite{geher_directional_2025,nixon_vine_2026,gu_nearest_neighbour_2026}. Alternatively, code concatenation\textemdash{}concatenating a level-1 QEC code on top of a level-0 RSC~\cite{gidney_yoked_2025,pattison_hierarchical_2025}\textemdash{}allows a long time window for measuring non-local level-1 stabilizers, thereby amortizing the demand on non-local connectivity. In particular, a recently proposed quantum memory architecture, yoked surface codes~\cite{gidney_yoked_2025}, implements specific high-rate level-1 codes on the RSC with only nearest-neighbor connectivity and achieves a three-times higher qubit efficiency compared to the RSC. These recent developments demonstrate promising routes toward high-rate quantum memories under restricted connectivity, while motivating the exploration of logical operations under similar hardware constraints.  

In this work, we propose a new FTQC architecture, the hierarchical logical processor (HLP), which implements a high-rate quantum CSS code concatenated with the RSC. HLPs, based on light-weight ancilla patches and transversal CNOT gates, can combine high encoding efficiency with infrequent use of long-range gates and support parallel logical measurements. The key mechanism underlying HLPs is a type of transversal CNOT gate, called the hybrid-unit CNOT gate, that couples an elongated RSC patch (a shuttle bus) to multiple RSC patches (cores) simultaneously. In HLPs, cores and shuttle buses function as level-1 data and ancilla qubits, respectively. We design transversal-CNOT-based readout gadgets, capable of measuring long level-1 Pauli $\mathsf{X}$ or $\mathsf{Z}$ operators using only a single shuttle bus. By concurrently using multiple readout gadgets, we can perform level-1 SE substantially faster and with smaller qubit overhead compared with yoked surface codes~\cite{gidney_yoked_2025}. We validate our constructions by both theoretical analysis and circuit-level simulations. 

Extending beyond logical memory, we propose logical measurement sequences, each composed of a series of readout gadgets, to reliably measure logical Pauli $X$ or $Z$ operators. Readout gadgets from different logical measurement sequences can be densely packed to allow parallel measurements of multiple logical operators at the cost of increased use of shuttle buses and non-local transversal CNOT gates. We prove that this protocol achieves full level-1 distance and benchmark it with circuit-level simulations. Additionally, we propose two extension modules that can further enhance the logical functionality of HLPs. First, we propose $H$-transformed and $HS$-transformed readout gadgets for general logical Pauli measurements. Second, we extend the hybrid-unit CNOT gates to enable joint measurements of logical operators on an HLP and external RSC patches. With relaxed requirements on long-range connectivity, beyond-RSC encoding efficiency, and parallel logical measurement capability, the HLP provides a promising building block for large-scale FTQC architectures.

\begin{figure}[h]
    \centering
    \includegraphics[width=\columnwidth]{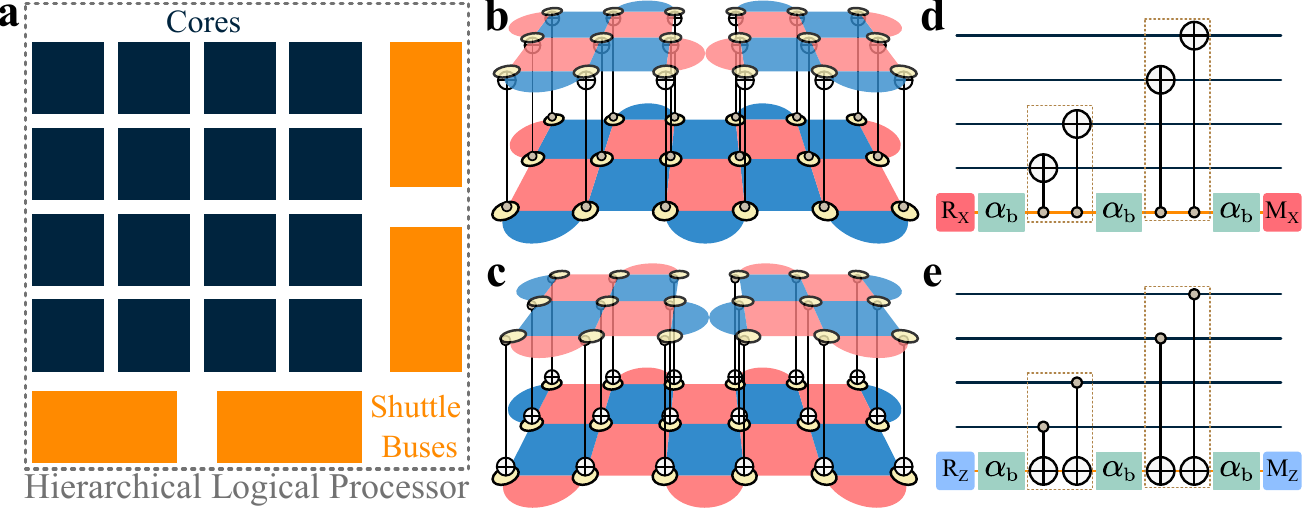}
    \caption{Architecture of a hierarchical logical processor. (\textbf{a}) Basic working units\textemdash{}cores and shuttle buses\textemdash{}for a hierarchical logical processor. (\textbf{b}) A bus-core CNOT gate between an $X$ bus and two cores with $d_{0}=3$ and $d_{1}=2$. (\textbf{c}) A core-bus CNOT gate between two cores and a $Z$ bus. (\textbf{d}\textendash{}\textbf{e}) Level-1 circuits representing an $X$-basis and a $Z$-basis readout gadget, respectively. Logical qubits on cores and on shuttle buses are shown as dark and orange lines, respectively. Level-1 CNOT gates in every dashed box are implemented by a bus-core or core-bus CNOT gate. Each teal block labeled by $\alpha_{\mathsf{b}}$ represents at least $\alpha_{\mathsf{b}}$ level-0 SE rounds.}
    \label{fig: processor architecture}
\end{figure}
Consider an HLP implementing a distance-$d_{1}$ level-1 CSS code $\mathcal{C}_1$ concatenated with the distance-$d_{0}$ RSC (the level-0 code). The HLP is supported on two types of working units: cores and shuttle buses (Fig.~\ref{fig: processor architecture}(\textbf{a})). Each core is a distance-$d_{0}$ RSC patch, serving as a level-1 data qubit for $\mathcal{C}_1$; each shuttle bus is an elongated RSC patch of width $d_{0}$ and length $d_{0}d_{1}$. A shuttle bus whose logical $X$ (or $Z$) operator runs along the long edge is called an $X$ (or $Z$) bus. We can perform a transversal CNOT gate with a single $X$ bus on the control side and up to $d_{1}$ cores on the target side (Fig.~\ref{fig: processor architecture}(\textbf{b})). Similarly, we can perform a transversal CNOT gate with up to $d_{1}$ cores on the control side and a single $Z$ bus on the target side (Fig.~\ref{fig: processor architecture}(\textbf{c})). We refer to the first and the second types of transversal CNOT gates as bus-core and core-bus CNOT gates, respectively. We refer to them collectively as hybrid-unit CNOT gates. With an $X$ (or $Z$) bus as a level-1 ancilla qubit, an $X$-basis (or Z-basis) readout gadget can measure a level-1 Pauli $X$ (or $Z$) operator using core-bus (or bus-core) CNOT gates as shown in Fig.~\ref{fig: processor architecture}(\textbf{d}\textendash{}\textbf{e}). We perform level-1 SE on an HLP by extracting level-1 stabilizers with these readout gadgets.

We first consider basic requirements on readout gadgets and their arrangements for running level-1 SE. 
For every $X$-basis readout gadget, $X$ errors on the $X$ bus can propagate to multiple cores through bus-core CNOT gates, thereby inducing correlated errors on cores, which would deteriorate the performance of level-1 SE. We note that an $X$ bus with an $X$-basis distance of $d_{0}d_1$ already strongly suppresses logical $X$ errors on the bus, which directly induce level-1 hook errors on cores. We insert at least $\alpha_{\mathsf{b}}d_{0}$ level-0 SE rounds on the $X$ bus between its initialization (or measurement) and the first (or last) core-bus CNOT gate, and at least $\alpha_{\mathsf{b}}d_{0}$ level-0 SE rounds between every two adjacent core-bus CNOT gates to limit the propagation speed of $X$ errors on the bus (Fig.~\ref{fig: processor architecture}). Here, the parameter $\alpha_{\mathsf{b}}$ depends on $d_{1}$. Similarly, $Z$ errors on cores can propagate to multiple $X$ buses through bus-core CNOT gates, thereby inducing correlated readout errors on these $X$-basis readout gadgets. Define the separation between two bus-core CNOT gates as the minimum number of padded level-0 SE rounds between them on every working unit acted on by both of them. Define the separation between two $X$-basis readout gadgets as the minimum separation between every pair of bus-core CNOT gates from each gadget. We require every two $X$-basis readout gadgets to be separated by at least $\alpha_{\mathsf{c}}d_{0}$ SE rounds. The same argument applies to $Z$-basis gadgets, for which we set the same constraints as above. We refer to $\alpha_{\mathsf{b}}$ and $\alpha_{\mathsf{c}}$ as compilation parameters. For simplicity, we restrict to HLPs with perfect time boundaries: all level-0 and level-1 stabilizers are measured perfectly at initialization and final measurement. Thus, cores have perfect time boundaries, whereas shuttle buses are still initialized and measured transversally. We prove (under a phenomenological noise model) that by setting $\alpha_{\mathsf{b}}=d_{1}$ and $\alpha_{\mathsf{c}}=1$, we can effectively suppress level-1 error correlations. See the Supplementary Information for the formal theorem statement and proof.
\begin{theorem}[Approximate error reduction; informal version of Theorem~\ref{thm: approximate error reduction}]
    Consider a circuit composed of level-0 SE rounds on cores and shuttle buses and X- and $Z$-basis readout gadgets. Set compilation parameters to $\alpha_{\mathsf{b}}=d_1$ and $\alpha_{\mathsf{c}}=1$, and assume perfect time boundaries for cores. Then, under the phenomenological depolarizing noise model with a sufficiently small error rate $p$, except for a rare event of probability $\sim p^{d_{0}d_{1}/2}$, the induced level-1 $X$ (or $Z$) errors are local stochastic with an effective level-1 error rate $p_1\sim p^{d_{0}/2}$; that is, for every level-1 $X$ (or $Z$) error configuration $\mathsf{f}$, the probability of inducing $\mathsf{f}$ is upper bounded by $\sim p_1^{|\mathsf{f}|}$.  
\end{theorem}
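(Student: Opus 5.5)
The plan is to treat level-0 error chains in the space-time decoding graph as the basic objects and to define the induced level-1 error as the logical class left after minimum-weight (or ideal) level-0 decoding of each working unit. We work with $X$ errors; the $Z$ case follows by the symmetric argument, exchanging $X$ and $Z$ buses and bus-core and core-bus CNOT gates.

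The first step is to build the space-time graph of every unit: cores with perfect time boundaries and buses with transversal initialization and measurement. Hybrid-unit CNOT gates glue these graphs together. Through a bus-core CNOT an $X$ error on the bus is copied onto each target core, so a chain on the bus can continue into a core at the time slice of the gate. We then classify every level-0 failure by its connected error cluster $E$ (a connected component under the usual adjacency, extended across the CNOT gluing). Each cluster can produce a nontrivial level-1 effect only if it causes a logical failure on some set of units. The standard counting lemma applies here: the number of connected clusters of size $w$ containing a fixed location is at most $c^w$. Hence the probability that some cluster of size at least $w$ occurs is at most roughly $N(cp)^{w}$.

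The second step is the geometric lemma. A cluster that flips the logical $X$ of $k$ distinct cores must contain at least about $k d_0/2$ faults. There are two cases.
\begin{itemize}
\item If the cluster is confined to the cores, each core must be crossed by its own chain of weight at least $\lceil d_0/2\rceil$, since the effective distance is $d_0$ and the time boundaries are perfect.
\item If the cluster passes through an $X$ bus, it must connect the cores' boundaries via the bus. Because the bus has width $d_0$ and its $X$ distance $d_0d_1$ runs along the long edge, a hook of a bus chain is harmless unless the chain spans the bus logically. Any chain on the bus that reaches the time slices of $k$ distinct bus-core CNOT gates must have temporal extent at least $(k-1)\alpha_{\mathsf b}d_0$. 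The padding $\alpha_{\mathsf b}=d_1$ before the first gate and after the last gate also forbids it from terminating cheaply on the transversal time boundary.
\end{itemize}
We therefore get weight at least $\min\{kd_0/2,\, d_0d_1/2\}$. The bound $d_0d_1/2$ comes from a bus logical failure or from a chain of temporal length at least $d_1 d_0$, either of which is the rare event. For $Z$ errors propagating from cores into several buses, the condition $\alpha_{\mathsf c}=1$ guarantees that consecutive gadgets sharing a core are separated by $d_0$ rounds. A single core chain therefore touches each gadget's measurement at cost about $d_0/2$ per additional gadget, so the readout errors it induces on the buses are also local stochastic.

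The third step combines these. Fix a level-1 configuration $\mathsf f$ and exclude the event that any cluster has weight at least $d_0d_1/2$; by the counting lemma this has probability about $p^{d_0d_1/2}$. Every remaining way to induce $\mathsf f$ is a union of clusters $E_1,\dots,E_m$ whose supports cover $\mathsf f$, with $|E_i|\ge |\mathsf f\cap \mathrm{supp}(E_i)|\,d_0/2$. We sum over such cluster collections, using the $c^w$ counting together with the bounded number of units each cluster can anchor near. This gives
\[
\Pr[\mathsf f]\lesssim \sum_{\{E_i\}}\prod_i (cp)^{|E_i|}\le (C p^{d_0/2})^{|\mathsf f|}.
\]
This is exactly local stochasticity with $p_1\sim p^{d_0/2}$.

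The main obstacle is the second step: showing rigorously that chains crossing several hybrid-unit CNOT gates cannot take shortcuts. Such a shortcut could run along a core in time, jump to a bus, and return to another core, combining boundaries so that the cost falls below $kd_0/2$. I would first try a projection argument. Project each chain onto the time axis and onto each unit's spatial axis. Then show that every core flip requires a full spatial crossing of length $d_0$ inside that core, or at least half of it, because a flip may be split between the core and its mirror image on the bus. After that, charge the temporal separations $\alpha_{\mathsf b}d_0$ between gates to cover any multi-core correlation.
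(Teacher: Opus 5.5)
Your outline roughly matches the paper's shape: exclude heavy bus cycles, show every level-1 error costs about $d_0/2$ physical faults, then count. But step 2 carries all the weight, it is not established, and the reason you give for the bus case is wrong.

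A bus chain that does not span the bus is not harmless. Take a residual $X$-cycle on an $X$ bus, of weight far below $d_0d_1$, that crosses a single CNOT membrane. It is equivalent to partial $X$ chains on every core touched by that bus-core CNOT, i.e.\ on up to $d_1$ cores in the same level-0 time step. These combine with native core errors into core logical errors. The padding $\alpha_{\mathsf b}d_0$ only stops one bus cycle from reaching two membranes; it does nothing about the $d_1$ cores coupled by one gate. Your fallback that ``at least half'' of a flip lies inside the core is also false, since the bus can deposit almost a whole core logical.

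What holds, and what the paper proves, is that the bus pays for what it deposits. This needs two ingredients:
\begin{enumerate}
\item A cross-membrane walk $\ell$ propagates to walks of total weight at most $|\ell_{||}|/2$ (Lemma~\ref{lemma: propagation of cross-membrane cycle}).
\item The level-0 decoder is sequential (Algorithm~\ref{Alg: level-0 decoding subroutine}): match on $\mathcal{G}_{\mathsf{B}_{\lozenge}}$, then on $\mathcal{G}_{\mathsf{C}}$ with syndrome corrected by $\partial_0\kappa_{\mathsf{B}_{\lozenge}}$, then on $\mathcal{G}_{\mathsf{B}_{\blacklozenge}}$. Matching on a downstream graph then gives $|\kappa|\le|\epsilon|+|\text{propagated}|$ on each closed residual walk there.
\end{enumerate}
Together these give the bound you want. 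If one bus cycle deposits weights $a_i$ on $k$ cores, it contains at least $|\ell_{||}|/2\ge\sum_i a_i$ physical faults. Each native core residual contains at least $(d_0-2a_i)/2$. The total is therefore at least $kd_0/2$.

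Without this ordering, ``minimum-weight decoding of each working unit'' is not even well defined, because the CNOT hyperedges couple the syndromes of different units. The charging must also be organized to avoid double counting when several core walks feed one $Z$-bus lightning walk and one core walk feeds several $Z$ buses. Core lightning walks that cross membranes must be retracted to a definite segment (Lemma~\ref{lemma: lightning error decomposition}). This is the downstream-web analysis of Lemma~\ref{lemma: bounds on a downstream web}, with its leading and trailing bus errors. Your projection heuristic supplies none of this.

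The counting in steps 1 and 3 also needs repair. With a matching decoder, the physical faults behind a logical failure need not form a connected cluster. Take two physical chains of length $d_0/3$ attached to opposite boundaries and separated by a gap of length $d_0/3$. The decoder joins them through the gap and produces a logical error, although neither cluster alone has weight $d_0/2$. So you cannot attribute each level-1 error to one connected physical cluster with $|E_i|\ge d_0/2$, and you cannot assume the decoder acts cluster by cluster.

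The paper instead works with subsets $\mu$ of the residual $f=\epsilon+\kappa$, containing both physical and inferred errors. Each connected component of $\mu$ is anchored at an element of the receptive zone of some primitive level-1 error. The paper proves $|\mu\cap\epsilon|\ge\max(|\mu|/4,\,d_0|\mathsf{f}|/2)$. Lemma~\ref{lemma: sufficient condition for bounding level 1 error} then sums over such $\mu$ and over physical subsets of $\mu$.

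Replacing physical clusters with residual clusters that have a guaranteed physical fraction, and your geometric claim with the charging argument above, would turn your outline into the paper's proof. As written, step 2 is an assertion whose stated justification is incorrect.
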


The compilation parameter $\alpha_{\mathsf{b}}$ characterizes the sparsity of hybrid-unit CNOT gates on every shuttle bus. In other words, a layer of long-range physical CNOT gates is required at most once per $\alpha_{\mathsf{b}} d_{0}$ level-0 SE rounds for every shuttle bus. Similarly, the compilation parameter $\alpha_{\mathsf{c}}$ characterizes the sparsity of core-bus (or bus-core) CNOT gates on a core. We can also make a stronger requirement that every two readout gadgets (regardless of their basis) should be separated by at least $\alpha_{\mathsf{c}}d_{0}$ level-0 SE rounds. (All level-1 SE circuits for our concrete HLP constructions in the following satisfy this constraint.) In this way, long-range CNOT gates are required only once per $\Theta(d_{0})$ level-0 SE rounds for every working unit. 
We note that larger compilation parameters relax the demand on long-range connectivity at the cost of longer level-1 SE round time, which serves as the fundamental timescale for logical operations. Thus, we would like to minimize compilation parameters as long as (i) the level-1 $X$ (or $Z$) errors are still effectively uncorrelated, and (ii) the hardware can keep up with the required long-range connectivity. 
\begin{figure}
    \centering
    \includegraphics[width=\columnwidth]{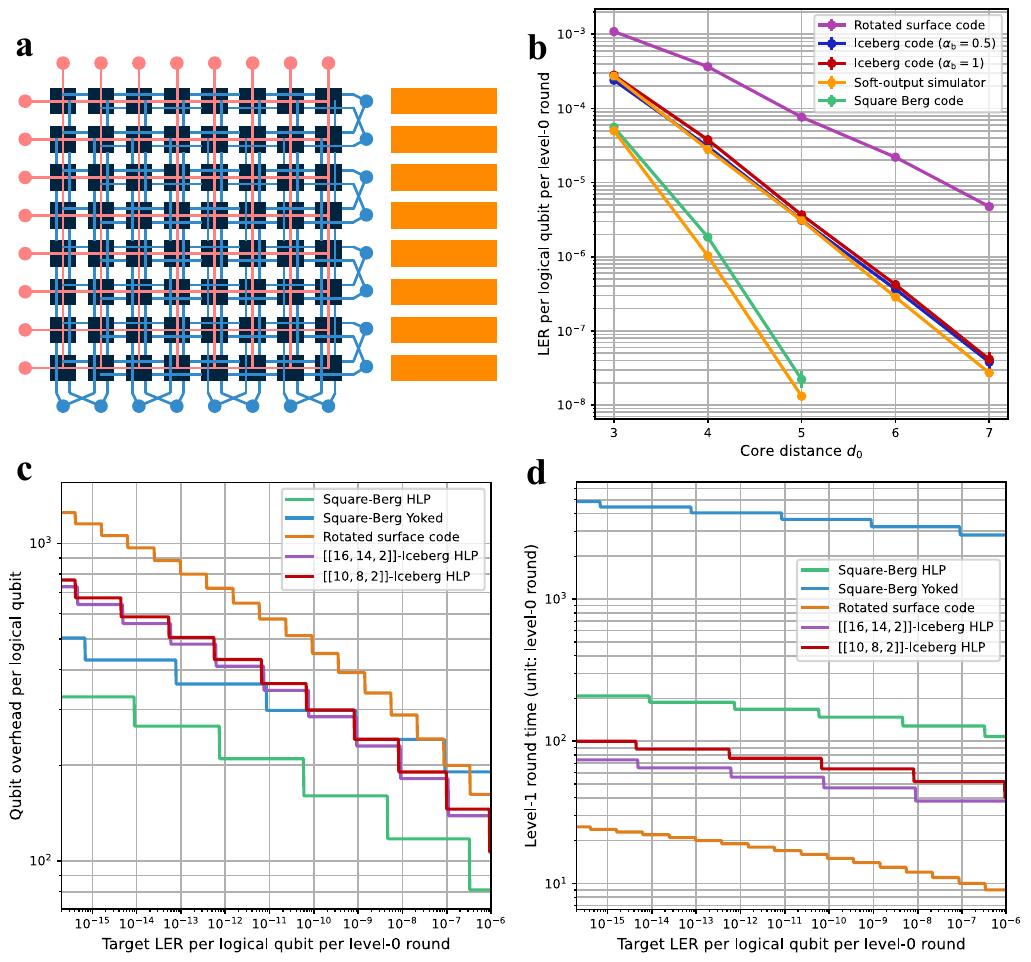}
    \caption{Benchmarking the memory performance of various hierarchical logical processors (HLP). (\textbf{a}) Level-1 layout of an HLP based on the $[[64,34,4]]$ Square Berg code~\cite{gidney_yoked_2025}. Cores (level-1 data qubits) and shuttle buses (level-1 ancilla qubits) are represented by dark squares and orange rectangles, respectively. Each Level-1 $X$ (or $Z$) stabilizer is shown as a red (or blue) circle and is connected to all level-1 data qubits on its support by red (or blue) lines. (\textbf{b}) Circuit-level simulations of two HLPs based on the $[[4,2,2]]$ Iceberg code and the $[[64,34,4]]$ Square Berg code, respectively. We fix $\alpha_{\mathsf{c}}=1$. The first HLP is operated under either $\alpha_{\mathsf{b}}=0.5$ or $\alpha_{\mathsf{b}}=1$. The second HLP is operated under $\alpha_{\mathsf{b}}=1$. We also perform soft-output simulations of both HLPs with $\alpha_{\mathsf{b}}=1$. (\textbf{c}\textendash{}\textbf{d}) Extrapolated qubit overhead and level-1 SE round time for the RSC, HLPs, and a yoked surface code~\cite{gidney_yoked_2025}.  The HLP based on the $[[10,8,2]]$ Iceberg code uses a single shuttle bus at a time as in (\textbf{b}); the HLP based on the $[[16,14,2]]$ Iceberg code uses two shuttle buses to extract both level-1 $X$ and $Z$ stabilizers in parallel. The HLP based on the $[[256,194,4]]$ Square Berg code uses 16 shuttle buses in a similar way to (\textbf{a}). The yoked surface code is based on the same Square Berg code. The level-1 time for the RSC baseline in (\textbf{d}) is set as $d$ level-0 SE rounds for a distance-$d$ RSC. All numerical results are obtained assuming (i) the circuit-level uniform depolarizing noise model with a physical error rate of $10^{-3}$ and (ii) perfect time boundaries for HLPs.}
    \label{fig: hlp performance}
\end{figure}

A Level-1 memory circuit on an HLP is implemented by a level-0 circuit, consisting of level-0 SE on all working units, transversal initialization and measurement on shuttle buses, and hybrid-unit CNOT gates. The level-0 circuit generates level-0 detectors (composed of level-0 stabilizer measurements from at most two adjacent level-0 SE rounds); the level-1 memory circuit generates level-1 detectors corresponding to measurement results of level-1 stabilizers in two adjacent level-1 SE rounds. To decode the level-1 memory circuit, we first perform a matching-based level-0 decoding on level-0 detectors, which amounts to decoding circuits on the RSC (and the elongated RSC) with level-0 SE rounds interleaved by structured transversal CNOT gates~\cite{cain_correlated_2024,sahay_error_2025,wan_iterative_2025,zhou_low_overhead_2025,serra_peralta_decoding_2025,cain_fast_2025,turner_scalable_2025}; we then extract soft outputs~\cite{meister_efficient_2024} from segments of the level-0 decoding graphs on cores and shuttle buses, such that each soft output estimates the error probability at the corresponding level-1 error location; finally, we perform level-1 decoding on level-1 detectors, with level-1 error probabilities informed by soft outputs.

We benchmark the memory performance of our HLP design with two types of matchable level-1 codes, the Iceberg code and the 2D parity check code, following Ref.~\cite{gidney_yoked_2025}. We refer to the latter as the Square Berg code in the following to avoid confusion with qLDPC codes. We fix $\alpha_{\mathsf{c}}=1$ in our numerical simulations. We perform circuit-level simulation of (i) an HLP with the $[[4,2,2]]$ Iceberg code as its level-1 code and (ii) an HLP with the $[[64,34,4]]$ Square Berg code (Fig.~\ref{fig: hlp performance}(\textbf{a})) as its level-1 code under the circuit-level uniform depolarizing noise model with a physical error rate of $10^{-3}$. The first HLP extracts the level-1 $X$ and $Z$ stabilizers sequentially in a level-1 SE round. We observe that, under both compilation-parameter settings $\alpha_{\mathsf{b}}=0.5$ and $\alpha_{\mathsf{b}}=1$, the logical error rate (LER) decays with increasing core distance at nearly twice the rate observed for the RSC. The second HLP on the $[[64,34,4]]$ Square Berg code uses eight concurrent shuttle buses to extract  level-1 stabilizers in parallel (Fig.~\ref{fig: hlp performance}(\textbf{a})). Its LER decays with increasing core distance at approximately twice the rate observed for the first HLP. These circuit-level simulations indicate that both HLPs operate close to their full distance $d_{0}d_{1}$. We develop a high-speed heuristic simulator, the soft-output simulator (similar to the gap simulator in Ref.~\cite{gidney_yoked_2025}), that directly samples errors and performs decoding at level 1, thereby enabling fast benchmarking of large-scale HLPs. We find close agreement between the soft-output and circuit-level simulation results (Fig.~\ref{fig: hlp performance}(\textbf{b})). With the soft-output simulator, we perform level-1 simulation on HLPs with even larger level-1 code sizes and extrapolate their memory performance. We see that with the $[[256,194,4]]$ Square Berg code as the level-1 code, our HLP (now with sixteen concurrent shuttle buses) has a three to four times higher qubit efficiency compared to the RSC over a wide range of target logical error rates, from $10^{-10}$ to $10^{-15}$. Compared to the yoked surface code~\cite{gidney_yoked_2025} with the same level-1 code and circuit-level noise model, our HLP offers a reduction of 100\textendash{}200 physical qubits in space overhead per logical qubit and a $20$\textendash{}$30$-fold reduction in level-1 SE round time at similar target logical error rates. See the Supplementary Information for details on circuit-level and soft-out simulations. We note that our HLP on the $[[256,194,4]]$ Square Berg code still has roughly ten times larger level-1 SE round time compared to the RSC baseline set as $d$ level-0 SE rounds for a distance-$d$ RSC. This round time sets a lower bound on the latency for logical feedforward operations. However, the logical processing speed also depends on the logical circuit at hand and the parallelism of logical operations. We now show how logical Pauli measurements can be performed in parallel on an HLP.

\begin{figure}
    \centering
    \includegraphics[width=\columnwidth]{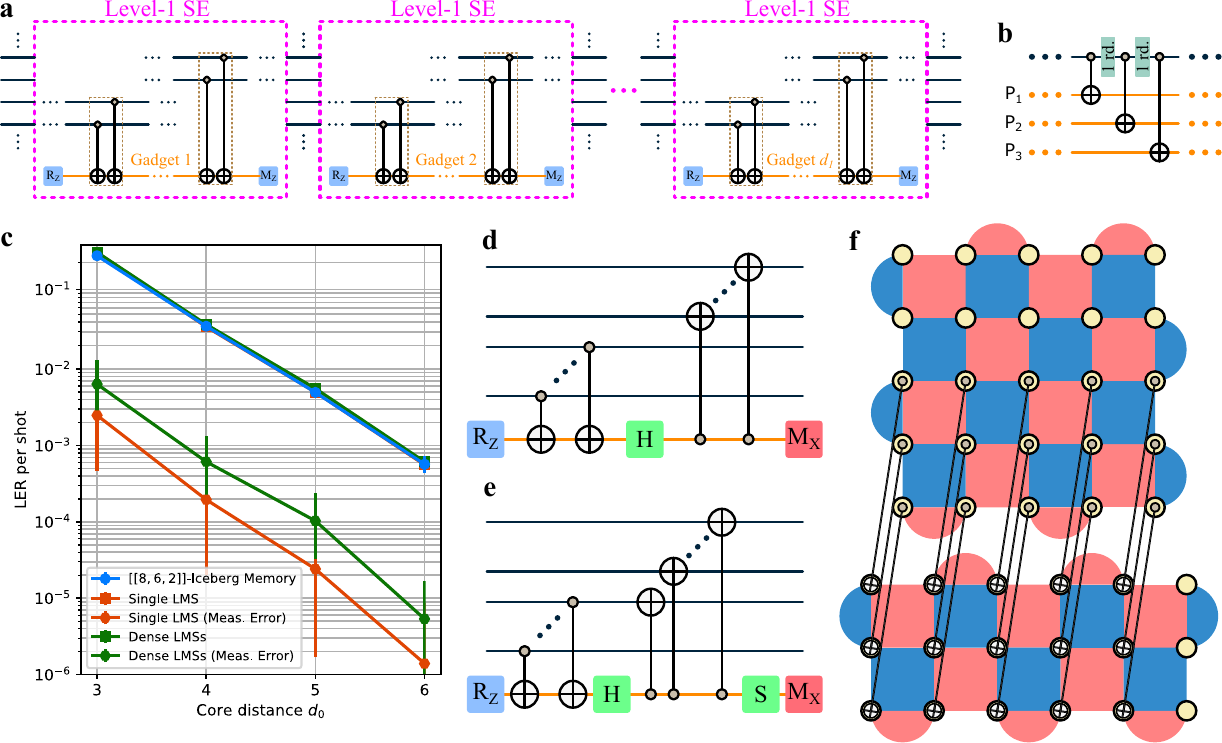}
    \caption{Logical Pauli measurements on a hierarchical logical processor (HLP). (\textbf{a}) A $Z$-basis logical measurement sequence of $d_{1}$ logical readout gadgets, in $d_1$ consecutive level-1 syndrome extraction (SE) rounds; each logical readout gadget is embedded in a distinct level-1 SE round. (\textbf{b}) Dense packing of logical readout gadgets from different logical measurement sequences targeting logical Pauli $Z$ operators $\mathsf{P}_{1}$, $\mathsf{P}_2$, and $\mathsf{P}_3$, respectively. Each teal strip on the core represents a single level-0 SE round. (\textbf{c}) Performance of logical $Z$ measurements on an HLP whose level-1 code is the $[[8,6,2]]$ Iceberg code. The memory baseline is composed of five level-1 SE rounds with perfect time boundaries. Logical error rates (LERs) accounting for both logical measurement errors and memory errors are shown as squares; LERs for logical measurement errors alone are shown as hexagons.  (\textbf{d}) $H$-transformed readout gadget for measuring a logical Pauli operator $\mathsf{P}_{xz}$ with $x\cdot z=0$. (\textbf{e}) $HS$-transformed readout gadget for measuring a logical Pauli operator $\mathsf{P}_{xz}$ with $x\cdot z=1$. (\textbf{f}) Extended core-bus CNOT gate between an external core and a $Z$ bus.} 
    \label{fig: logical pauli measurement}
\end{figure}

We use a sequence of $X$-basis (or $Z$-basis) readout gadgets to reliably measure a logical Pauli $X$ (or $Z$) operator. We call such a sequence a logical measurement sequence (LMS), and call the readout gadgets in an LMS logical readout gadgets. More specifically, an LMS consists of $d_{1}$ logical readout gadgets embedded in $d_{1}$ consecutive level-1 SE rounds, such that each level-1 SE round contains exactly one logical readout gadget (Fig.~\ref{fig: logical pauli measurement}(\textbf{a})). Every pair of adjacent logical readout gadgets in an LMS generates an additional level-1 detector to check for readout errors on either gadget. We refer to an LMS based on $X$-basis (or $Z$-basis) logical readout gadgets as an $X$-basis (or $Z$-basis) LMS. We require each $X$-basis (or $Z$-basis) logical readout gadget in an $X$-basis (or $Z$-basis) LMS to be separated by at least $\alpha_{\mathsf{c}} d_{0}$ level-0 SE rounds from both the other logical readout gadgets in the same LMS and $X$-basis (or $Z$-basis) readout gadgets for level-1 SE, so as to suppress correlations among readout errors on these gadgets. When it comes to measuring multiple logical Pauli $X$ or $Z$ operators, we can perform level-1 decoding for each LMS independently, based on the level-1 detectors from level-1 SE and the level-1 detectors generated by this LMS. Moreover, we do not need to suppress error correlation between readout errors on logical readout gadgets in different LMSs. Therefore, the separation between each pair of logical readout gadgets from two different LMSs can be as small as one level-0 SE round (Fig.~\ref{fig: logical pauli measurement}(\textbf{b})). This allows dense packing of logical readout gadgets from different LMSs inside a level-1 SE round, thereby enabling high parallelism in logical operations at the cost of increased use of ancillary shuttle buses and long-range CNOT gates. 

We benchmark logical Pauli measurements on an HLP using the $[[8,6,2]]$ Iceberg code as the level-1 code. We label level-1 data qubits of the code by $\{1,\cdots,8\}$ and logical qubits by $\{1,\cdots,6\}$. We choose the logical $Z$ operator for the $i$th logical qubit according to $\overline{\mathsf{Z}}_{i}:=\mathsf{Z}_{i+1}\otimes\mathsf{Z}_{8}$. Consider the following three logical $Z$ operators $\mathsf{Z}_2\otimes\mathsf{Z}_{3}\otimes\cdots\otimes\mathsf{Z}_{7}$, equivalently $\overline{\mathsf{Z}}_{1}\cdot\overline{\mathsf{Z}}_{2}\cdots\overline{\mathsf{Z}}_{6}$; $\mathsf{Z}_{2}\otimes\mathsf{Z}_{3}\otimes\mathsf{Z}_{4}\otimes\mathsf{Z}_{8}$, equivalently $\overline{\mathsf{Z}}_1\cdot\overline{\mathsf{Z}}_2\cdot\overline{\mathsf{Z}}_3$; and $\mathsf{Z}_{5}\otimes\mathsf{Z}_{6}\otimes\mathsf{Z}_7\otimes\mathsf{Z}_8$, equivalently $\overline{\mathsf{Z}}_4\cdot\overline{\mathsf{Z}}_5\cdot\overline{\mathsf{Z}}_6$. We examine two circuits, both with five level-1 SE rounds and perfect time boundaries. The first circuit uses a single LMS to measure the first logical $Z$ operator, whereas the second circuit uses three LMSs to measure the three logical $Z$ operators listed above. Time boundary conditions for logical operators in both circuits guarantee predetermined LMS outcomes and allow logical errors on unmeasured logical degrees of freedom to be detected; see the Supplementary Information for a detailed description. Moreover, for the second circuit, the separation between logical readout gadgets belonging to different LMSs can be as small as one level-0 SE round. We observe that the presence of LMSs does not appreciably increase the total logical error rate, and that the error rates for LMSs are significantly lower than logical memory error rates. These results demonstrate both (i) the feasibility of using an LMS to measure long logical Pauli operators and (ii) the parallelizability of LMSs. 

Now consider measuring a general logical operator $\mathsf{P}_{xz}:=i^{x\cdot z}\mathsf{X}_{1}^{x_1}\mathsf{Z}_1^{z_1}\otimes \cdots\otimes \mathsf{X}_{n_1}^{x_{n_1}}\mathsf{Z}_{n_1}^{z_{n_1}}$, where $n_1$ is the number of level-1 data qubits; $x=(x_1,\cdots,x_{n_1})$ and $z=(z_{1},\cdots,z_{n_1})$ are both elements in $\mathbb{Z}_2^{n_1}$. If $x\cdot z=0$, we propose the $H$-transformed gadget (Fig.~\ref{fig: logical pauli measurement}(\textbf{d})) to measure $\mathsf{P}_{xz}$. This gadget starts with a $Z$ bus and performs core-bus CNOT gates according to the $Z$ component in $\mathsf{P}_{xz}$, then switches to an $X$ bus via a transversal $H$ gate and performs bus-core CNOT gates according to the $X$ component in $\mathsf{P}_{xz}$, and finally measures the bus in $X$ basis. Due to the close resemblance between the first (or second) half of the $H$ gadget and a $Z$-basis (or $X$-basis) readout gadget, we can similarly construct LMSs from $H$-transformed gadgets, thereby achieving full level-1 distance for measuring $\mathsf{P}_{xz}$ and parallelizability comparable to that of LMSs based on $X$-basis or $Z$-basis readout gadgets. For the $x\cdot z=1$ case, we can also use the $H$-transformed gadget to measure $\mathsf{P}_{xz}$ given an ancillary logical $Y$ state (which would remain invariant under the measurement)~\cite{chamberland_universal_2022}. To remove the need for this catalytic logical $Y$ state, we propose the $HS$-transformed readout gadget in Fig.~\ref{fig: logical pauli measurement}(\textbf{e}), where the logical $S$ gate in the gadget can be performed by a mid-cycle fold transversal gate~\cite{mcewen_relaxing_2023,chen_transversal_2026}. Finally, the hybrid-unit CNOT gates between shuttle buses and cores with the same width can be straightforwardly extended to act between shuttle buses and external cores (RSC patches) with distance $d_{ext}$ satisfying $d_{0}\leq d_{ext}\leq d_{0}d_{1}$ (see Fig.~\ref{fig: logical pauli measurement}(\textbf{f})). This allows a shuttle bus to perform joint logical Pauli measurements involving logical qubits hosted on both an HLP and external cores. We leave numerical benchmarking of these protocols for future work.  

To summarize, we design the HLP, a concatenation-based FTQC architecture that can offer beyond-RSC encoding efficiency while using long-range CNOT gates as infrequently as once every $\Theta(d_0)$ level-0 SE rounds for each working unit, where $d_{0}$ is the base RSC distance. The HLP architecture crucially relies on transversal hybrid-unit CNOT gates between cores (RSC patches) and shuttle buses (elongated RSC patches). Leveraging these hybrid-unit CNOT gates, we design readout gadgets to perform level-1 SE and LMSs to perform logical measurements in parallel. These protocols and our numerical benchmarking serve as a starting point for using HLPs in large-scale FTQC. There is a broad range of directions for future investigation. First, we expect that the memory overhead can be further lowered by using level-1 codes with higher rates and slightly larger level-1 distances compared to the Square Berg code used here. One may also consider changing the level-0 code from the RSC to a small qLDPC code and redesigning shuttle buses for it. Recent works on logical operations~\cite{cohen_low_overhead_2022,huang_homomorphic_2023,williamson_low_overhead_2026,ide_fault_tolerant_2025,zhang_time_efficient_2025,xu_fast_2025,swaroop_universal_2026,he_extractors_2025,xu_batched_2025,cowtan_fast_2025,chang_constant_time_2026} and soft-information decoding~\cite{lee_efficient_2026,xie_simple_2026,gu_scalable_2026,wills_forced_2026} for qLDPC codes can help in this new design. In this way, we may obtain a logical memory architecture achieving utility-scale logical error rates at an encoding rate close to a small qLDPC code with moderate use of long-range connectivity\textemdash{}long-range connectivity for a level-0 SE round should be at the same scale as the small qLDPC code, and long-range connectivity for level-1 SE should be amortized in time. Secondly, as for logical operations, since a hierarchical structure necessarily leads to longer logical operation timescales, leveraging the parallelizability of logical measurements and the capability of measuring long logical operators will be important for achieving higher logical processing throughput. Alternative ways to perform logical operations on an HLP\textemdash{}such as transversal logical gates at level 1 assisted by level-1 flag qubits~\cite{chao_quantum_2018,chao_fault_tolerant_2018,he_quantum_2025,tansuwannont_construction_2026}\textemdash{}are also worth investigation. Moreover, a careful study is needed to understand how HLPs and external cores can cooperate to solve quantum tasks with high efficiency. Finally, it is important to consider the hardware implementation of an HLP. For instance, one promising candidate is the local-CZ-based architecture~\cite{sunami_transversal_2025,radnaev_universal_2025,rines_demonstration_2026} for neutral atoms, which allows in-place implementation of entangling gates in the level-0 SE without atom shuttling. It would be interesting to explore whether this architecture offers a significant speedup for our HLP compared to the zone-based architecture~\cite{bluvstein_logical_2024,bluvstein_fault_tolerant_2026}.  \\

\textbf{Note}\textemdash{}See the public repository~\cite{chen_hlp} for circuits and source codes used in our numerical experiments. \\

\textbf{Related works}\textemdash{}After completing this manuscript, we became aware of two recent related works~\cite{wills_concatenating_2026,low_denser_2026}. Ref.~\cite{wills_concatenating_2026} studies a hierarchical logical memory architecture based on concatenating an algebraic code with a qLDPC code, using a bivariate bicycle code as the level-0 code and logical cat states for level-1 syndrome extraction. Ref.~\cite{low_denser_2026} introduces `dense-packing', a variant of the surface code with densely packed twist defects, achieving $2\times$ encoding rate of the rotated surface code. Ref.~\cite{low_denser_2026} further studies the concatenation of Iceberg codes with dense-packing in a similar manner to yoked surface codes~\cite{gidney_yoked_2025}. In contrast, our work develops a transversal-gate-based hierarchical logical processor (HLP) architecture built on the rotated surface code, featuring short level-1 syndrome extraction time (measured in level-0 syndrome extraction rounds) and highly parallel logical measurements. We additionally perform full circuit-level simulations of several concrete HLP constructions. Overall, these works~\cite{wills_concatenating_2026,low_denser_2026} and ours are largely complementary: together they explore different hardware assumptions, level-0 code choices, and hierarchical structures toward the common goal of reducing the resource overhead of FTQC with moderate demands on hardware platforms.

\bibliography{bibliography}

\clearpage  
\onecolumngrid
\begingroup
\renewcommand{\thefootnote}{*}
\begin{center} 
    {\large\bfseries Supplemental Material for\\[0.5em] ``Hierarchical Logical Processor on the Rotated Surface Code with Shuttle Buses''\par} 
    \vspace{1em} 
    {\normalsize Zi-Han Chen,$^{1,2,3,}$\footnotemark $\ $Ming-Cheng Chen,$^{1,2,3}$ Chao-Yang Lu,$^{1,2,3}$ and Jian-Wei Pan$^{1,2,3}$\par} 
    \vspace{0.5em} 
    {\small $^{1}$Hefei National Research Center for Physical Sciences at the Microscale and School of Physical Sciences, University of Science and Technology of China, Hefei 230026, China\\ 
    $^{2}$Shanghai Research Center for Quantum Science and CAS Center for Excellence in Quantum Information and Quantum Physics, University of Science and Technology of China, Shanghai 201315, China\\ 
    $^{3}$Hefei National Laboratory, University of Science and Technology of China, Hefei 230088, China\par} 
    \vspace{0.5em} 
\end{center} 
\vspace{1.5em}

\twocolumngrid
\footnotetext{\href{mailto:czh007@mail.ustc.edu.cn}{czh007@mail.ustc.edu.cn}}
\endgroup

We present a detailed description of the hierarchical logical processor (HLP) architecture in the supplementary material. In Sec.~\ref{sec: hlp basic modules}
, we describe basic modules for an HLP and its decoding procedure.  In Sec.~\ref{sec: approximate level 1 error reduction}, we analyze the induced level-1 error model from level-0 errors. We prove that the correlation between level-1 $X$ (or $Z$) errors can be effectively suppressed. In Sec.~\ref{sec: logical pauli measurements}, we prove that our logical $X$ (or $Z$) measurement protocol (with an LMS) achieves full level-1 distance. We then upper bound the probability of logical measurement errors for LMSs using the induced level-1 error model analyzed in Sec.~\ref{sec: approximate level 1 error reduction}. We describe the construction of two additional readout gadgets, the $H$-transformed readout gadget and the $HS$-transformed readout gadget. For LMSs based on $H$-transformed readout gadgets, we prove similar results to $X$-basis (or $Z$-basis) LMSs. We describe the extension of hybrid-unit CNOT gates to enable interfacing an HLP with external cores. In Sec.~\ref{sec: details on numerical simulation}, we describe details on numerical simulations, including circuit-level simulations, soft-output simulations, and logical error rate extrapolation. See the public repository~\cite{chen_hlp} for circuits and source codes used in our numerical experiments. 

\section{Hierarchical Logical Processor: Basic Modules}\label{sec: hlp basic modules}

Our HLP essentially implements fault-tolerant logical operations (preserving logical information and measuring logical Pauli operators) on a concatenated code\textemdash{}logical information is encoded in a level-1 high-rate CSS code $\mathcal{C}_1$ with a distance of $d_1$, which is run on top of the level-0 distance-$d_0$ RSC. In this section, we describe key components for our HLP and how to perform level-1 SE and logical Pauli $X$ or $Z$ measurements. We describe extensions to these modules, including performing general logical Pauli measurements and interfacing with logical information directly encoded in external RSC patches (external cores), in later sections. 

\subsection{Cores, shuttle buses, and hybrid-unit CNOT gates}
An HLP consists of two working units: cores and shuttle buses, both of which are code patches whose logical qubits serve as level-1 qubits. Every level-1 data qubit of $\mathcal{C}_1$ is the logical qubit of a distance-$d_{0}$ RSC patch, referred to as a core. Every level-1 ancilla qubit for level-1 SE and logical Pauli measurements in our HLP is the logical qubit of an elongated RSC patch, referred to as a shuttle bus. We require each shuttle bus to have a width of $d_0$ and a length of $d_0d_1$. We note that our specific choice here is for a clean presentation of results, and that it is possible to use shuttle buses of different widths and lengths. A shuttle bus whose logical $X$ (or $Z$) operator is supported on the long edge is called an $X$ (or $Z$) bus. See Fig.~\ref{fig: cores and buses} for an illustration of cores, $X$ buses, and $Z$ buses. We can perform (with a depth-1 layer of physical CNOT gates) transversal logical CNOT gates with an $X$ (or $Z$) bus on the control side (or target side) and up to $d_1$ cores on the target side (or control side). We refer to a depth-1 layer of physical CNOT gates implementing logical CNOT gates between (up to $d_1$) cores and a shuttle bus as a hybrid-unit CNOT gate. More specifically, a hybrid-unit CNOT gate between an $X$ (or $Z$) bus and up to $d_1$ cores is called a bus-core (or core-bus) CNOT gate. 

\begin{figure}
    \centering
    \includegraphics[width=\columnwidth]{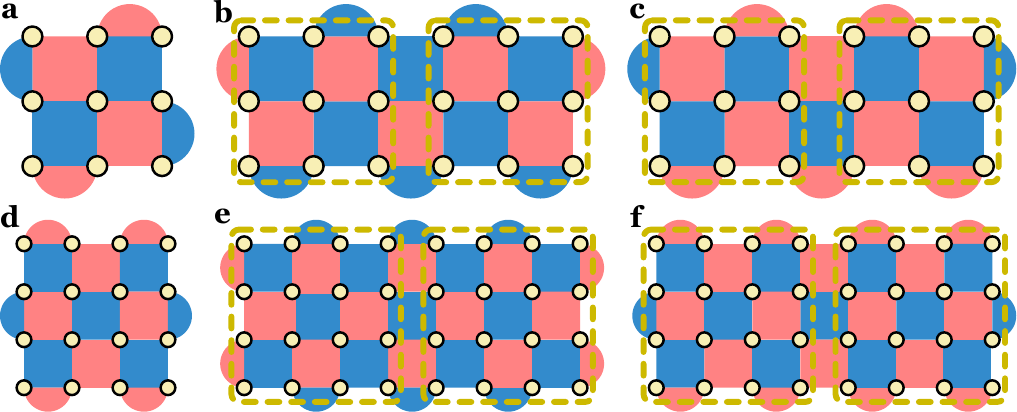}
    \caption{Cores and shuttle buses. (\textbf{a}\textendash{}\textbf{c}) A core, an $X$ bus, and a $Z$ bus, respectively, with $d_0=3$ and $d_1=2$. (\textbf{d}\textendash{}\textbf{f}) A core, an $X$ bus, and a $Z$ bus, respectively, with $d_{0}=4$ and $d_1=2$. Every dashed yellow region supports a bus-core (or core-bus) CNOT gate between an $X$ (or $Z$) bus and a core.}
    \label{fig: cores and buses}
\end{figure}

\subsection{Circuit structure and error model}\label{subsec: circuit structure and error model}
\textit{Level-0 circuit structure}\textemdash{}For conceptual simplicity and hardware friendliness, we implement level-0 SE rounds on cores and shuttle buses synchronously and require that there is at most a depth-1 layer of hybrid-unit CNOT gates between every two consecutive level-0 SE rounds. We refer to each level-0 SE round with its preceding depth-1 layer of hybrid-unit CNOT gates (if any) as a level-0 time step. We now define the (time-like) separation between two hybrid-unit CNOT gates.
\begin{definition}
    [Separation between two hybrid-unit CNOT gates]\label{def: separation between two cnot gates} If the supports of two hybrid-unit CNOT gates are not disjoint, their separation is defined as the absolute difference between their corresponding level-0 time steps. Otherwise, their separation is defined to be infinity.    
\end{definition}
We merge transversal initialization (or measurement) on a shuttle bus into the following (or preceding) level-0 SE round on the working unit. A shuttle bus is activated by the transversal initialization and deactivated by the transversal measurement. A deactivated shuttle bus will no longer be activated again. Instead, we reallocate all qubits of a deactivated shuttle bus to activate new shuttle buses later. We define the lifetime of a shuttle bus as the period between the level-0 time step at which the shuttle bus is initialized and the level-0 time step at which the shuttle bus is measured.  

\textit{Level-1 circuit structure}\textemdash{}Every level-0 circuit on cores and shuttle buses induces a corresponding level-1 circuit that represents operations on logical qubits of these working units. More specifically, level-0 SE rounds correspond to idling operations at level 1; transversal initialization and measurement on shuttle buses correspond to reset and measurement operations on their corresponding logical qubits at level 1; hybrid-unit CNOT gates correspond to level-1 CNOT gates. In every layer of a level-1 circuit, each level-1 qubit (corresponding to the logical qubit of a working unit) participates in one type of level-1 operation. Since each hybrid-unit CNOT gate can simultaneously couple a level-1 ancilla with up to $d_1$ level-1 data qubits, we require that in every level-1 layer, (i) every level-1 data qubit participates in at most one CNOT gate with a level-1 ancilla and (ii) every level-1 ancilla qubit on an $X$ (or $Z$) bus participates in at most $d_1$ CNOT gates on the control side (or target side). 

\textit{Perfect time boundaries}\textemdash{}Throughout this work, we consider HLPs with perfect initialization and measurement. More specifically, an HLP is initialized by (i) measuring all level-0 and level-1 stabilizers perfectly and (ii) measuring a collection of logical Pauli operators perfectly according to the logical initialization basis. Similarly, the final measurement on an HLP applies the same perfect measurement procedure used during perfect initialization. 

\textit{Uniform depolarizing circuit noise model}\textemdash{}This model is a standard circuit-level noise model used for benchmarking fault-tolerant protocols. We perform all numerical simulations on level-0 circuits using this circuit noise model, with error locations listed as follows.
\begin{itemize}
    \item Each single-qubit initialization in $X$ (or $Z$) basis is followed by a single-qubit $Z$ (or $X$) error with probability $p$.
    \item  Each single-qubit measurement in $X$ (or $Z$) basis is preceded by a single-qubit $Z$ or ($X$) error with probability $p$.
    \item Each single-qubit gate (including idling) is followed by a single-qubit depolarizing channel with noise strength $p$, consisting of three independent Pauli errors $\{X,Y,Z\}$, each with probability $p/3$.  
    \item Each two-qubit gate is followed by a two-qubit depolarizing channel with noise strength $p$, consisting of 15 independent Pauli errors $\{I,X,Y,Z\}^{\otimes2}\backslash\{II\}$, each with probability $p/15$.
    \item  Each single-qubit measurement result is flipped with probability $p$. 
\end{itemize}

\textit{Phenomenological depolarizing noise model}\textemdash{}In this noise model, errors may only occur on level-0 data qubits at the beginning of each level-0 time step and on level-0 measurement results. We regard an $X$-basis (or $Z$-basis) measurement error as a $Z$ (or $X$) error. The error locations are listed as follows.
\begin{itemize}
    \item For each working unit $\sigma$ at the beginning of every level-0 time step, if $\sigma$ is transversally initialized in the $X$ (or $Z$) basis, then every data qubit of $\sigma$ experiences a Pauli $Z$ (or $X$) error with probability $p$ following the transversal initialization. Otherwise, every data qubit experiences a single-qubit depolarizing channel with noise strength $p$.
    \item Every measurement result is flipped with probability $p$. 
\end{itemize}

\subsection{Level-1 readout gadget}\label{subsec: level 1 readout gadget}
An $X$-basis (or $Z$-basis) readout gadget performs a projective measurement on a level-1 Pauli $X$ (or $Z$) operator on level-1 data qubits with a level-1 ancilla implemented by an $X$ bus (or $Z$ bus). More specifically, for a level-1 Pauli $X$ operator $\mathsf{P}_{x}:=\mathsf{X}_1^{x_1}\otimes\cdots\otimes\mathsf{X}_{n_1}^{x_{n_1}}$ where $x:=(x_1,\cdots,x_{n_1})$ is a vector in $\mathbb{Z}_2^{n_1}$, the level-1 circuit of an $X$-basis readout gadget to measure $\mathsf{P}_{x}$ proceeds as follows:
\begin{enumerate}
    \item Initialize the level-1 ancilla in the $X$ basis.
    \item Perform a sequence of CNOT gates between the level-1 ancilla (control) and level-1 data qubits (target), such that for every level-1 data qubit $i$ with $x_i=1$, there is exactly one CNOT gate between this level-1 data qubit and the level-1 ancilla. Every level-1 layer of CNOT gates has at most $d_1$ level-1 data qubits on the target side. 
    \item  Measure the level-1 ancilla in the $X$ basis. 
\end{enumerate}
Similarly, the level-1 circuit of a $Z$-basis readout gadget to measure a level-1 Pauli $Z$ operator $\mathsf{P}_z:=\mathsf{Z}_1^{z_1}\otimes\cdots\otimes\mathsf{Z}_{n_1}^{z_{n_1}}$ with $z:=(z_1,\cdots,z_n)\in\mathbb{Z}_2^{n_1}$ proceeds as follows:
\begin{enumerate}
    \item Initialize the level-1 ancilla in the $Z$ basis.
    \item Perform a sequence of CNOT gates between level-1 data qubits (control) and the level-1 ancilla qubit (target), such that for every level-1 data qubit $i$ with $z_i=1$, there is exactly one CNOT gate between this level-1 data qubit and the level-1 ancilla. Every level-1 layer of CNOT gates has at most $d_1$ level-1 data qubits on the control side. 
    \item  Measure the level-1 ancilla in the $Z$ basis. 
\end{enumerate}

We now describe the level-0 implementation of an $X$-basis (or $Z$-basis) readout gadget. The level-1 ancilla of the readout gadget is the logical qubit of the shuttle bus associated with the gadget. Initialization and measurement on the level-1 ancilla in the $X$ or $Z$ basis are implemented by transversal initialization and measurement on the corresponding shuttle bus, respectively. Every level-1 layer of CNOT gates in the readout gadget is implemented by a hybrid-unit CNOT gate. We insert padding level-0 SE rounds on the shuttle bus so that (i) every two adjacent hybrid-unit CNOT gates in the readout gadget are separated by at least $\alpha_{\mathsf{b}}d_{0}$ level-0 SE rounds and (ii) transversal reset (or measurement) of the shuttle bus is separated from the following (or preceding) hybrid-unit CNOT gate by at least $\alpha_{\mathsf{b}}d_{0}$ rounds. Note that level-0 $X$ (or $Z$) errors on $X$ (or $Z$) buses may propagate to cores via hybrid-unit CNOT gates, thereby inducing correlated errors at level 1. The parameter $\alpha_{\mathsf{b}}$ should therefore be chosen to suppress such error correlations. Moreover, the level-1 measurement result, referred to as the readout value, of a readout gadget may not be reliable, since a logical $X$ (or $Z$) error on the $Z$ (or $X$) bus of a $Z$-basis (or $X$-basis) readout gadget would flip the readout value, causing a readout error.

Readout gadgets (each with its own shuttle bus) can be juxtaposed, interleaved, or sequentially performed to implement an HLP. As mentioned in Sec.~\ref{subsec: circuit structure and error model}, hybrid-unit CNOT gates in readout gadgets should not overlap at any level-0 time step to guarantee that the level-0 implementation has at most a depth-1 layer of physical CNOT gates at the beginning of each level-0 time step. We define the separation between two readout gadgets as follows. 
\begin{definition}
    [Separation between two readout gadgets]\label{def: separation between readout gadgets} Given two readout gadgets, the separation between them is defined as the minimum separation between a hybrid-unit CNOT gate in one gadget and a hybrid-unit CNOT gate in the other. 
\end{definition}

\subsection{Basic Modules I: logical memory} 
The most fundamental functionality of an HLP is to protect quantum information by performing SE on the level-1 code $\mathcal{C}_1$. We perform each level-1 SE round by measuring the same set of level-1 stabilizers, with each stabilizer measured using a readout gadget. Note that once a readout gadget is completed, the qubits used for its shuttle bus are now available again for a new readout gadget. Thus, we can trade off between the qubit overhead for readout gadgets and time overhead for a level-1 SE round by limiting the number of level-1 stabilizers measured in parallel. We require that every two $X$-basis (or $Z$-basis) readout gadgets for level-1 SE should be separated by at least $\alpha_{\mathsf{c}}d_0$ level-0 time steps, where $\alpha_{\mathsf{c}}$ is a positive parameter. As discussed in the main text, level-0 $X$ (or $Z$) errors on cores may induce correlated $X$ (or $Z$) errors on $Z$ (or $X$) buses via hybrid-unit CNOT gates, thereby inducing correlated readout errors on $Z$-basis (or $X$-basis) readout gadgets. The parameter $\alpha_{\mathsf{c}}$ is chosen to suppress correlations among readout errors. 

We refer to the parameters $\alpha_{\mathsf{b}}$ and $\alpha_{\mathsf{c}}$ as compilation parameters. We later prove (under the phenomenological depolarizing noise model) that we can effectively suppress correlations among level-1 errors by setting $\alpha_{\mathsf{c}}=1$ and $\alpha_{\mathsf{b}}=d_1$. Larger compilation parameters imply sparser hybrid-unit CNOT gates and lower requirements on long-range connectivity, while also resulting in longer level-1 SE round time. Choosing these parameters\textemdash{}balancing level-1 round time, level-1 error correlation, and demand on long-range connectivity\textemdash{}is subtle and requires extensive numerical study. 

\subsection{Basic Modules II: logical $X$ or $Z$ measurements}
We use a sequence of $X$-basis (or $Z$-basis) readout gadgets, called a logical measurement sequence (LMS), to reliably measure a single logical $X$ (or $Z$) operator. We refer to readout gadgets in an LMS as logical readout gadgets. More specifically, to measure an $X$-basis (or $Z$-basis) logical operator $\mathsf{P}$, an LMS consists of $d_1$ $X$-basis (or $Z$-basis) logical readout gadgets $\mathsf{R}_1,\cdots,\mathsf{R}_{d_1}$, sequentially placed in $d_1$ contiguous level-1 SE rounds from $\mathsf{t}$ to $\mathsf{t}+d_{1}-1$, such that each logical readout gadget $\mathsf{R}_{j}$ ($j\in\{1,\cdots,d_1\}$) measures $\mathsf{P}$ and is embedded in the level-1 SE round $\mathsf{t}+j-1$. Moreover, we require that every $X$-basis logical readout gadget is separated from $X$-basis readout gadgets for level-1 SE and other logical readout gadgets in the same LMS by at least $\alpha_{\mathsf{c}}d_{0}$ level-0 SE rounds. When measuring multiple (mutually commuting) logical operators in parallel, $X$-basis (or $Z$-basis) logical readout gadgets from different LMSs are allowed to be separated by as little as one level-0 time step. This enables parallel logical measurements at the cost of additional shuttle buses and increased use of long-range connectivity for hybrid-unit CNOT gates.

\subsection{Detectors and decoding}\label{subsec: detectors and decoding}
There are two levels of detectors for an HLP: level-0 detectors directly generated from level-0 SE rounds in the level-0 circuit and level-1 detectors induced by the level-1 circuit. Note that since an HLP is implemented physically by the level-0 circuit, level-1 detectors are also compiled into sums over level-0 measurement results. In the following, we describe how to set up detectors and perform decoding for an HLP with basic modules described in the previous two subsections. 

\textit{Level-0 detectors}\textemdash{}The construction of level-0 detectors is essentially the same as constructing detectors on circuits composed of SE and transversal CNOT gates on the RSC. Every level-0 detector is supported on measurement results in up to two adjacent level-0 time steps. Consider two adjacent level-0 time steps $t-1$ and $t$. If there are no hybrid-unit CNOT gates at time step $t$, for each level-0 X or Z stabilizer $\mathrm{S}$ measured in both time steps $t-1$ and $t$, there is a level-0 detector $\mathrm{D}(\mathrm{S},t):=m_{\mathrm{S},t-1}\oplus m_{\mathrm{S},t}$, where $\oplus$ is the XOR operator; $m_{\mathrm{S},t-1}$ and $m_{\mathrm{S},t}$ denote measurement results of $\mathrm{S}$ at time steps $t-1$ and $t$, respectively; and $t$ also denotes the time coordinate of the detector. On the other hand, suppose there is a layer of hybrid-unit CNOT gates $\Lambda_t$ at time step $t$. Through the gate $\Lambda_{t}$, a stabilizer $\mathrm{S}$ on a working unit propagates to $\Lambda_t\mathrm{S}\Lambda_t^{\dagger}$, which we denote as $\Lambda_t(\mathrm{S})$. Since $\Lambda_t(\mathrm{S})$ is a product of stabilizer generators, we will also use it to denote the set of these stabilizer generators. Then, in this case, the level-0 detector corresponding to $\mathrm{S}$ generated at time step $t$ is $\mathrm{D}(\mathrm{S},t):=m_{\mathrm{S},t-1}\oplus\left(\oplus_{\mathrm{S'}\in\Lambda_t(\mathrm{S})}m_{\mathrm{S}',t}\right)$. 

\textit{Level-1 detectors}\textemdash{}Level-1 detectors are induced by readout gadgets for level-1 SE and level-1 logical measurements. For every two adjacent level-1 SE rounds $\mathsf{t}$ and $\mathsf{t}+1$, the two readout gadgets for each level-1 stabilizer $\mathsf{S}$ induce a level-1 detector. Every two adjacent logical readout gadgets in an LMS induce a level-1 detector. 

\textit{Level-0 decoding hypergraph}\textemdash{}Let $\mathcal{E}$ be the set of all level-0 single-qubit $X$ and $Z$ error locations; let $\mathcal{D}_{0}$ be the set of all level-0 detectors; let $\mathcal{D}_1$ be the set of all level-1 detectors. Configurations of errors (subsets of $\mathcal{E}$), configurations of level-0 syndromes (subsets of $\mathcal{D}_0$), and configurations of level-1 syndromes (subsets of $\mathcal{D}_1$) are naturally identified with vectors in $\mathbb{Z}_2^{|\mathcal{E}|}$, $\mathbb
{Z}_2^{|\mathcal{D}_{0}|}$, and $\mathbb{Z}_2^{|\mathcal{D}_1|}$, respectively. Define syndrome maps $\partial_{0}:\mathbb{Z}_2^{|\mathcal{E}|}\to\mathbb{Z}_2^{|\mathcal{D}_{0}|}$ and $\partial_{1}:\mathbb{Z}_2^{|\mathcal{E}|}\to\mathbb{Z}_2^{|\mathcal{D}_{1}|}$, representing level-0 and level-1 syndromes of errors, respectively. There are three types of errors in $\mathcal{E}$.
\begin{enumerate}
    \item Canonical graphlike errors. A canonical graphlike error refers to an error that (i) has at most two endpoints and (ii) all its endpoints are on the same core or shuttle bus.
    \item  Decomposable hyperedge errors. A decomposable hyperedge error refers to an error that (i) can be decomposed into canonical graphlike errors and (ii) has endpoints over exactly two units, a core and a shuttle bus.   
    \item Primitive hyperedge errors. A primitive hyperedge error refers to an error that cannot be decomposed into canonical graphlike errors. 
\end{enumerate}
A primitive hyperedge error is always associated with a hybrid-unit CNOT gate. Consider a primitive hyperedge $X$ error $e$ in level-0 time step $t$, then there is a hybrid-unit CNOT gate $\Lambda_t$ at time step $t$ such that the error $e$ occurs on a working unit $\sigma$ on the control side of $\Lambda_t$. Moreover, this error triggers three $Z$ detectors $\mathrm{D}_{1}$, $\mathrm{D}_2$, and $\mathrm{D}_{3}$, such that the first two detectors are on the unit $\sigma$ with time coordinates $t$ and $t+1$, respectively, and the third one is on a different working unit $\sigma'$ on the target side of $\Lambda_t$ with a time coordinate $t$. We say the detector $\mathrm{D}_{3}$ is on the pointy end of $e$. Similar statements hold for primitive hyperedge $Z$ errors.  

Let $\mathcal{E}_{0}$ be the collection of all canonical graphlike errors and primitive hyperedge errors in $\mathcal{E}$. We construct the level-0 decoding hypergraph $\mathcal{G}_0$ with the vertex set $\mathcal{D}_{0}$ and the hyperedge set $\mathcal{E}_{0}$. Every hyperedge $e\in\mathcal{E}_{0}$ is connected exactly to all vertices in $\partial_{0}e$. Note that our analysis of single-qubit $X$ and $Z$ errors above already encapsulates all two-qubit $X$ and $Z$ errors, since a two-qubit $X$ (or $Z$) error following a CNOT gate is equivalent to a single-qubit $X$ (or $Z$) error preceding the CNOT gate. 

\textit{Level-0 decoding graphs}\textemdash{}Let $\mathcal{D}_{\mathsf{B}_{\lozenge}}$ be the collection of all level-0 $Z$ detectors on $X$ buses and $X$ detectors on $Z$ buses; let $\mathcal{D}_{\mathsf{C}}$ be the collection of all level-0 detectors on cores; let $\mathcal{D}_{\mathsf{B}_{\blacklozenge}}$ be the collection of all level-0 $X$ detectors on $X$ buses and $Z$ detectors on $Z$ buses. Similarly, for errors in $\mathcal{E}_{0}$, let $\mathcal{E}_{\mathsf{B}_{\lozenge}}$ be the collection of all $X$ errors on $X$ buses and $Z$ errors on $Z$ buses; let $\mathcal{E}_{\mathsf{C}}$ be the collection of all $X$ and $Z$ errors on cores; let $\mathcal{E}_{\mathsf{B}_{\blacklozenge}}$ be the collection of all $Z$ errors on $X$ buses and $X$ errors on $Z$ buses. For a level-0 syndrome configuration $\eta\in\mathbb{Z}_2^{|\mathcal{D}_{0}|}$, denote its restriction to $\mathcal{D}_{\mathsf{B}_{\lozenge}}$, $\mathcal{D}_{\mathsf{C}}$, and $\mathcal{D}_{\mathsf{B}_{\blacklozenge}}$ as $\eta|_{\mathsf{B}_{\lozenge}}$, $\eta|_{\mathsf{C}}$, and $\eta|_{\mathsf{B}_{\blacklozenge}}$, respectively. Similarly, for a level-0 error configuration $\epsilon\in\mathbb{Z}_2^{|\mathcal{E}_{0}|}$, denote its restriction to $\mathcal{E}_{\mathsf{B}_{\lozenge}}$, $\mathcal{E}_{\mathsf{C}}$, and $\mathcal{E}_{\mathsf{B}_{\blacklozenge}}$ as $\epsilon|_{\mathsf{B}_{\lozenge}}$, $\epsilon|_{\mathsf{C}}$, and $\epsilon|_{\mathsf{B}_{\blacklozenge}}$, respectively. 

From a level-0 decoding hypergraph $\mathcal{G}_{0}$ for a level-0 circuit, we now construct three level-0 decoding graphs $\mathcal{G}_{\mathsf{B}_{\lozenge}}$, $\mathcal{G}_{\mathsf{C}}$, and $\mathcal{G}_{\mathsf{B}_{\blacklozenge}}$ with vertex sets $\mathcal{D}_{\mathsf{B}_\lozenge}$, $\mathcal{D}_{\mathsf{C}}$, and $\mathcal{D}_{\mathsf{B}_{\blacklozenge}}$, respectively. Edges for the decoding graphs $\mathcal{G}_{\mathsf{B}_{\lozenge}}$, $\mathcal{G}_{\mathsf{C}}$, and $\mathcal{G}_{\mathsf{B}_{\blacklozenge}}$ have one-to-one correspondence to level-0 errors in $\mathcal{E}_{\mathsf{B}_{\lozenge}}$, $\mathcal{E}_{\mathsf{C}}$, and $\mathcal{E}_{\mathsf{B}_{\blacklozenge}}$, respectively, such that every error $e$ in $\mathcal{E}_{\mathsf{B}_{\lozenge}}$ (or $\mathcal{E}_{\mathsf{C}}$ or $\mathcal{E}_{\mathsf{B}_{\blacklozenge}}$) corresponds to an edge in $\mathcal{G}_{\mathsf{B}_{\lozenge}}$ (or $\mathcal{G}_{\mathsf{C}}$ or $\mathcal{G}_{\mathsf{B}_{\blacklozenge}}$) with endpoints $(\partial_{0}e)|_{\mathsf{B}_\lozenge}$ (or $(\partial_0e)|_{\mathsf{C}}$ or $(\partial_{0}e)|_{\mathsf{B}_{\blacklozenge}}$). By construction, every canonical graphlike error in $\mathcal{E}_{\mathsf{B}_{\lozenge}}$ (or $\mathcal{E}_{\mathrm{C}}$ or $\mathcal{E}_{\mathsf{B}_{\blacklozenge}}$) has a level-0 syndrome contained in $\mathcal{D}_{\mathsf{B}_{\lozenge}}$ (or $\mathcal{D}_{\mathsf{C}}$ or $\mathcal{D}_{\mathsf{B}_{\blacklozenge}}$); every primitive hyperedge error in $\mathcal{E}_{\mathsf{B}_{\lozenge}}$ (or $\mathcal{E}_{\mathrm{C}}$ or $\mathcal{E}_{\mathsf{B}_{\blacklozenge}}$) triggers two detectors in $\mathcal{D}_{\mathsf{B}_{\lozenge}}$ (or $\mathcal{D}_{\mathsf{C}}$ or $\mathcal{D}_{\mathsf{B}_{\blacklozenge}}$).

Consider a working unit $\sigma$. If $\sigma$ is a core, define the $X$-basis (or $Z$-basis) decoding subgraph on $\sigma$ as the subgraph of $\mathcal{G}_{\mathsf{C}}$ induced by the vertex set of all level-0 $X$ (or $Z$) detectors on $\sigma$. If $\sigma$ is an $X$ bus, define the $X$-basis (or $Z$-basis) decoding subgraph on $\sigma$ as the subgraph of $\mathcal{G}_{\mathsf{B}_{\blacklozenge}}$ (or $\mathcal{G}_{\mathsf{B}_{\lozenge}}$) induced by the vertex set of all level-0 $X$ (or $Z$) detectors on $\sigma$. We similarly make these definitions for $Z$ buses. Define the $X$-basis (or $Z$-basis) segment on $\sigma$ between level-0 time steps $t_{0}$ and $t_{1}$ (including both $t_{0}$ and $t_{1}$) as the subgraph of the $X$-basis (or $Z$-basis) decoding subgraph on $\sigma$ induced by the vertex set of all level-0 $X$ (or $Z$) detectors with time coordinates between $t_{0}$ and $t_{1}$. For every $t_{0}\leq t\leq t_{1}$, we say the level-0 time step $t$ is contained in this segment. We refer to a segment on a core (or a shuttle bus) as a core (or bus) segment. 

\textit{Level-0 decoding subroutine}\textemdash{}Since $X$ (or $Z$) buses are always on the control (or target) side of hybrid-unit $\mathrm{CNOT}$ gates, we can see that (i) errors in $\mathcal{E}_{\mathsf{B}_{\lozenge}}$  are the only errors in $\mathcal{E}_{0}$ that trigger level-0 detectors in $\mathcal{D}_{\mathsf{B}_\lozenge}$, (ii) primitive hyperedge errors in $\mathcal{E}_{\mathsf{B}_{\lozenge}}$ may only have a pointy end in $\mathcal{D}_{\mathsf{C}}$, (iii) primitive hyperedge errors in $\mathcal{E}_{\mathsf{C}}$ may only have a pointy end in $\mathcal{D}_{\mathsf{B}_\blacklozenge}$, and (iv) errors in $\mathcal{E}_{\mathsf{B}_\blacklozenge}$ should always be canonical graphlike errors. Based on these structural properties of the three level-0 decoding graphs, we develop our level-0 decoding subroutine, Algorithm~\ref{Alg: level-0 decoding subroutine}, that takes a level-0 syndrome configuration $\eta\in\mathbb{Z}_2^{|\mathcal{D}_{0}|}$ induced by some level-0 error configuration $\epsilon\in\mathbb{Z}_2^{|\mathcal{E}_{0}|}$ and infers a level-0 error event $\kappa\in\mathbb{Z}_2^{|\mathcal{E}_{0}|}$ with $\partial_0\kappa=\partial_0\epsilon=\eta$. While the error configuration $\epsilon+\kappa$ triggers no level-0 syndrome, it may still trigger level-1 detectors and logical observables. We refer to the configuration $\epsilon+\kappa$ as the residual error configuration after level-0 decoding; we then need to perform level-1 decoding, which will be described later. 

  \begin{algorithm}
        \let\oldnl\nl
        \newcommand{\nonl}{\renewcommand{\nl}{\let\nl\oldnl}}
       \caption{Level-0 decoding subroutine}\label{Alg: level-0 decoding subroutine}
       \KwIn{\justifying \small Level-0 syndrome configuration $\eta\in\mathbb{Z}_2^{|\mathcal{D}_{0}|}$.}
       \KwData{\justifying \small Level-0 decoding graphs $\mathcal{G}_{\mathsf{B}_{\lozenge}}$, $\mathcal{G}_{\mathsf{C}}$, and $\mathcal{G}_{\mathsf{B}_\blacklozenge}$.}   
       \KwOut{\justifying\small Level-0 error configuration $\kappa\in\mathbb{Z}_2^{|\mathcal{E}_{0}|}$ with $\partial_{0} \kappa=\eta$.} 
       \small \justifying
       Infer a level-0 error configuration $\kappa_{\mathsf{B}_{\lozenge}}$ (consisting only of $X$ errors on $X$ buses and $Z$ errors on $Z$ buses) by running a matching-based decoder on $\mathcal{G}_{\mathsf{B}_{\lozenge}}$ with the syndrome configuration $\eta|_{\mathsf{B}_{\lozenge}}$. 

       Infer a level-0 error configuration $\kappa_{\mathsf{C}}$ (consisting only of errors on cores) by running a matching-based decoder on $\mathcal{G}_{\mathsf{C}}$ with the corrected syndrome configuration $(\eta+\partial_{0}\kappa_{\mathsf{B}_{\lozenge}})|_{\mathsf{C}}$.   

        Infer a level-0 error configuration $\kappa_{\mathsf{B}_{\blacklozenge}}$ (consisting only of $X$ errors on $Z$ buses and $Z$ errors on $X$ buses) by running a matching-based decoder on $\mathcal{G}_{\mathsf{B}_{\blacklozenge}}$ with the corrected syndrome configuration $(\eta+\partial_{0}\kappa_{\mathsf{B}_{\lozenge}}+\partial_{0}\kappa_{\mathsf{C}})|_{\mathsf{B}_{\blacklozenge}}$. 

        \Return{$\kappa=\kappa_{\mathsf{B}_{\lozenge}}+\kappa_{\mathsf{C}}+\kappa_{\mathsf{B}_{\blacklozenge}}$.}
   \end{algorithm}
Note that we can combine Algorithm~\ref{Alg: level-0 decoding subroutine} with correlated matching~\cite{fowler_optimal_2013} to enhance the decoding performance by taking correlations between $X$ and $Z$ errors into account. More specifically, given a level-0 syndrome configuration induced by physical errors, we can first run Algorithm~\ref{Alg: level-0 decoding subroutine}, then change the weights of all decoding graphs based on the output following the approach in Ref.~\cite{fowler_optimal_2013}, and finally run Algorithm~\ref{Alg: level-0 decoding subroutine} again on reweighted decoding graphs. The output of the second run of Algorithm~\ref{Alg: level-0 decoding subroutine} is taken as the level-0 decoding result. We call this procedure above the correlated level-0 decoding. We use the original level-0 decoding routine in Algorithm~\ref{Alg: level-0 decoding subroutine} for theoretical analysis and the correlated level-0 decoding for numerical benchmarking.  

\textit{Level-1 errors}\textemdash{}Consider a collection $\mathcal{M}$ of readout gadgets, such that the separation between every two $X$-basis (or $Z$-basis) readout gadgets in $\mathcal{M}$ is lower bounded by $\alpha_{\mathsf{c}} d_{0}$ level-0 time steps. We now define level-1 error locations and assign each a corresponding segment. For every core $\sigma$, readout gadgets in $\mathcal{M}$ assign a sequence of core-bus CNOT gates $\Lambda_{t_1},\cdots,\Lambda_{t_{i}}$ (at level-0 time steps $t_{1}<\cdots<t_{i}$) and a sequence of bus-core CNOT gates $\Lambda_{t_{1}'},\cdots,\Lambda_{t'_{j}}$ acting on $\sigma$ (at level-0 time steps $t_{1}'<\cdots<t_{j}'$). Between every two adjacent level-1 CNOT gates with $\sigma$ on the control side (corresponding to core-bus CNOT gates $\Lambda_{t_{c}}$ and $\Lambda_{t_{c+1}}$ with $c\in\{1,\cdots,i-1\}$), we assign a level-1 $X$ error location on the core and define the $Z$-basis segment on $\sigma$ between level-0 time steps $t_{c}+1$ and $t_{c+1}$ as the associated segment for the level-1 error location. We assign another two level-1 $X$ error locations on $\sigma$, one before the first level-1 CNOT gate corresponding to the core-bus CNOT gate $\Lambda_{t_1}$ and the other after the last level-1 CNOT gate corresponding to the core-bus CNOT gate $\Lambda_{t_{i}}$. We define $Z$-basis segments on $\sigma$ before level-0 time step $t_{1}$ and after level-0 time step $t_{i}+1$ as the associated segments for the first and the last level-1 $X$ error locations on $\sigma$, respectively. See Fig.~\ref{fig: level 1 error locations} for an illustration of all level-1 $X$ error locations set above. Similarly, for the sequence of bus-core CNOT gates, we assign level-1 $Z$ error locations on the core and define an associated segment for each level-1 $Z$ error location. For every $X$ bus $\sigma'$ of an $X$-basis readout gadget in $\mathcal{M}$, we assign a level-1 $Z$ error location on $\sigma'$, which does not propagate out of $\sigma'$ and only triggers a readout error on the readout gadget. We define the $X$-basis segment on $\sigma'$ spanning over the lifetime of $\sigma'$ as the associated segment for this level-1 $Z$ error location. We similarly assign a level-1 $X$ error location for the $Z$ bus in every $Z$-basis readout gadget in $\mathcal{M}$ and define the associated bus segment for the level-1 $X$ error location. These level-1 error locations defined above allow us to perform level-1 decoding. Every level-1 $X$ (or $Z$) error on a level-1 $X$ (or $Z$) error location is called a primitive level-1 error. Every segment associated with a level-1 error location is called a canonical segment. 
\begin{figure}
    \centering
    \includegraphics[width=0.8\columnwidth]{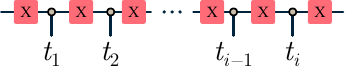}
    \caption{Level-1 $X$ error locations on a core. Only the control side of core-bus CNOT gates in $\mathcal{M}$ acting on the core is illustrated.}
    \label{fig: level 1 error locations}
\end{figure}

Every primitive level-1 $X$ (or $Z$) error with an associated canonical segment on a working unit $\sigma$ between level-0 time steps $t_1$ and $t_2$ is equivalent to a logical $X$ (or $Z$) error (as a level-0 error configuration) on $\sigma$ at the beginning of a level-0 time step $t$ with $t_1\leq t\leq t_2$. Let $\mathsf{E}$ be the collection of all primitive level-1 errors. A level-1 error configuration (as a subset of $\mathsf{E}$) is naturally identified with a vector in $\mathbb{Z}_2^{|\mathsf{E}|}$. Let $\overline{\partial}_{1}:\mathbb{Z}_2^{|\mathsf{E}|}\to \mathbb{Z}_2^{|\mathcal{D}_1|}$ be the level-1 syndrome map. We construct a level-1 decoding hypergraph $\mathsf{G}$ with the vertex set $\mathcal{D}_1$ and the hyperedge set $\mathsf{E}$, such that each hyperedge $\mathsf{e}\in\mathsf{E}$ is exactly connected to vertices in $\overline{\partial}_{1}\mathsf{e}$. 

\textit{Soft outputs and level-1 error probability}\textemdash{}For every level-1 error location, we can extract a soft output on the associated canonical segment during the level-0 decoding process to estimate the probability of this level-1 error. This error probability is then converted to the weight of the corresponding hyperedge in the level-1 decoding hypergraph. In the following, we briefly review how to extract the soft output~\cite{meister_efficient_2024} from a segment during a level-0 decoding shot with a sparse-blossom decoder~\cite{higgott_sparse_2025}. Consider a non-negatively weighted decoding graph $\mathcal{G}(w)$ ($\mathcal{G}$ for short), with $w$ the assignment of edge weights that maps each edge to a non-negative number. At the end of each decoding shot with a sparse-blossom decoder, a region $\mathcal{R}$ (combined graph fill region~\cite{higgott_sparse_2025}) with an associated radius map $r$ is constructed on $\mathcal{G}$, where $\mathcal{R}$ is a collection of vertices in $\mathcal{G}$ and $r$ assigns each vertex in $\mathcal{G}$ a non-negative radius. Based on the region, we define the postdecoding weight for each edge $e\in\mathcal{E}$ as $\tilde{w}(e):=\max(0,w(e)-\sum_{v\in\partial e}r(v))$, where $\partial e$ denotes the endpoints of $e$ (Fig.~\ref{fig: postdecoding weight}). Define a new graph $\tilde{\mathcal{G}}(\tilde{w})$ with the same vertex and edge structure as $\mathcal{G}$, and postdecoding weights for all edges. 
\begin{figure}[h]
    \centering
    \includegraphics[width=0.8\columnwidth]{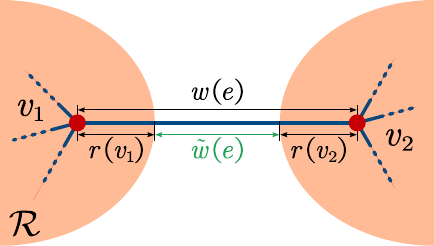}
    \caption{Postdecoding weight for an edge $e$ connecting two vertices $v_1$ and $v_2$ at the end of a decoding shot. The region $\mathcal{R}$ created during the decoding process is illustrated in orange.}
    \label{fig: postdecoding weight}
\end{figure}

We are now ready to describe the extraction of the soft output from a segment during a level-0 decoding routine. Consider an $X$-basis core segment on the core $\mathsf{c}$ containing time steps between $t_1$ and $t_2$ as shown in Fig.~\ref{fig: core segment soft output}. 
\begin{figure}[h]
    \centering
    \includegraphics[width=\columnwidth]{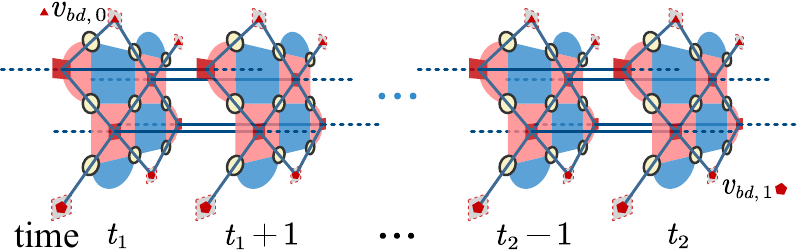}
    \caption{$X$-basis core segment (between level-0 time steps $t_{1}$ and $t_{2}$) embedded in the decoding graph $\mathcal{G}_{\mathsf{C}}$. Edges correspond to errors in a phenomenological depolarizing noise model. Vertices ($X$ detectors) in the segment are shown as small crimson squares. Each small crimson triangle (or pentagon) in a gray square denotes the virtual boundary vertex $v_{bd,0}$ (or $v_{bd,1}$) on the upper (or lower) boundary.}
    \label{fig: core segment soft output}
\end{figure}
There are two types of boundary edges (weight-1 edges) in the core segment, each corresponding to one of two opposing $Z$ boundaries. We attach a virtual boundary vertex $v_{bd,0}$ or $v_{bd,1}$ to boundary edges corresponding to either boundary. After decoding on $\mathcal{G}_{\mathsf{C}}$, we obtain a reweighted graph $\tilde{\mathcal{G}}_{\mathsf{C}}$ by applying the postdecoding reweight procedure described above. The soft output corresponding to the segment is defined as the minimum path length from $v_{bd,0}$ to $v_{bd,1}$ on $\tilde{\mathcal{G}}_{\mathsf{C}}$, and can be efficiently computed by Dijkstra's algorithm~\cite{meister_efficient_2024}. Soft outputs for other canonical segments are extracted similarly. 

\textit{Full decoding procedure}\textemdash{} Consider an HLP with $X$-basis and $Z$-basis LMSs under perfect time boundaries. Each LMS generates a logical observable representing the measurement result of the corresponding logical Pauli operator. Moreover, logical measurements at perfect time boundaries may generate additional logical observables. Let $\mathsf{L}$ be the collection of all logical observables. See Algorithm~\ref{Alg: full decoding procedure} for the full decoding procedure. Intuitively, we first perform a round of level-0 decoding. Then, at level 1, we decode logical observables supported on logical measurements at time boundaries independently from the other logical observables generated by LMSs. We then decode the logical observable induced by each LMS independently.

\begin{algorithm}
        \let\oldnl\nl
        \newcommand{\nonl}{\renewcommand{\nl}{\let\nl\oldnl}}
       \caption{Full decoding procedure for an HLP with basic modules.}\label{Alg: full decoding procedure}
       \KwIn{\justifying \small Level-0 syndrome configuration $\eta_{0}\in\mathbb{Z}_2^{|\mathcal{D}_{0}|}$, level-1 syndrome configuration $\eta_{1}\in\mathbb{Z}_2^{|\mathcal{D}_{1}|}$.}
       \KwData{\justifying \small An HLP with basic modules and perfect time boundaries.} 
       \KwOut{\justifying\small Inferred correction on logical observables $\mathbb{Z}_2^{|\mathsf{L}|}$.} 
       \small \justifying
      Construct level-0 decoding graphs and perform level-0 decoding according to Algorithm~\ref{Alg: level-0 decoding subroutine}. Perform the postdecoding reweight procedure on decoding graphs $\mathcal{G}_{\mathsf{C}}$ and $\mathcal{G}_{\mathsf{B}_{\blacklozenge}}$, and keep both reweighted graphs $\tilde{\mathcal{G}}_{\mathsf{C}}$ and $\tilde{\mathcal{G}}_{\mathsf{B}_{\blacklozenge}}$. Initialize an empty register $L\in\mathbb{Z}_2^{|\mathsf{L}|}$.

        Consider logical observables supported on logical measurements at time boundaries. Let $\mathcal{M}_{\mathrm{SE}}$ denote the collection of all readout gadgets for level-1 SE. Assign level-1 errors and construct the level-1 decoding hypergraph for $\mathcal{M}_{\mathrm{SE}}$. For every hyperedge in the level-1 decoding hypergraph, extract the soft output from the associated canonical segment. Perform most-likely-error (MLE) decoding~\cite{gottesman_fault_tolerant_2014,cain_correlated_2024,zhou_low_overhead_2025} on the level-1 decoding hypergraph and add the correction on logical observables supported on final measurements to $L$. 

        \For{each logical measurement sequence (LMS)}{
        \nonl Let $\mathcal{M}_{1}$ denote the collection of all readout gadgets for level-1 SE and all logical readout gadgets in this LMS. Similar to the above step, construct the level-1 decoding hypergraph for $\mathcal{M}_1$, extract soft output for each hyperedge, and perform MLE decoding on the decoding hypergraph. Then, add the correction to the logical observable generated by the LMS to $L$. 
        }

        \Return{$L$.}
   \end{algorithm}

\section{Induced Level-1 Error Model}\label{sec: approximate level 1 error reduction}
In this section, we theoretically study how physical errors under the phenomenological depolarizing noise model, together with the level-0 decoding result, induce level-1 errors. We first set up the necessary notation for later discussion.

Let $\mathcal{E}_{\mathrm{dep}}$ be the collection of all primitive level-0 errors in the phenomenological depolarizing noise model. We refer to an element of $\mathbb{Z}_2^{|\mathcal{E}_{\mathrm{dep}}|}$, which specifies a subset of $\mathcal{E}_{\mathrm{dep}}$, as a general level-0 error configuration to distinguish it from vectors in $\mathbb{Z}_2^{|\mathcal{E}_{0}|}$, which are called level-0 error configurations. Since $\mathcal{E}_{0}$, containing only $X$ and $Z$ errors, is a subset of $\mathcal{E}_{\mathrm{dep}}$, there is a natural injection from $\mathbb{Z}_{2}^{|\mathcal{E}_{0}|}$ to $\mathbb{Z}_2^{|\mathcal{E}_{\mathrm{dep}}|}$. Therefore, every level-0 error configuration is also a general level-0 error configuration. We use the same notation $\partial_{0}$ and $\partial_{1}$ for the syndrome maps extended from level-0 error configurations to general level-0 error configurations. Define a linear map $\mathbb{P}_{\mathrm{X}}:\mathbb{Z}_2^{|\mathcal{E}_{\mathrm{dep}}|}\to\mathbb{Z}_2^{|\mathcal{E}_{0}|}$ that maps every primitive level-0 error in $\mathcal{E}_{\mathrm{dep}}$ to its $X$-basis component in $\mathcal{E}_{0}$. We similarly define $\mathbb{P}_{\mathrm{Z}}:\mathbb{Z}_2^{|\mathcal{E}_{\mathrm{dep}}|}\to\mathbb{Z}_2^{|\mathcal{E}_{0}|}$. For a general level-0 error configuration $g$, we define its $X$-basis (or $Z$-basis) component as $g_{X}:=\mathbb{P}_{\mathrm{X}}(g)$ (or $g_Z:=\mathbb{P}_{\mathrm{Z}}(g)$). 

Consider a configuration $\epsilon\in\mathbb{Z}_2^{|\mathcal{E}_{\mathrm{dep}}|}$ of level-0 physical errors triggering level-0 detectors $\eta=\partial_{0}\epsilon$.  We use Algorithm~\ref{Alg: level-0 decoding subroutine} to obtain an inferred error configuration $\kappa\in\mathbb{Z}_2^{|\mathcal{E}_{0}|}$ with $\partial_{0}\kappa=\eta$. Let $f:=\epsilon+\kappa$, which we call the general residual error configuration. Let $\mathcal{M}$ be the collection of readout gadgets whose measurement results are of interest. We construct level-1 error locations for $\mathcal{M}$ and let $\mathsf{E}$ be the collection of primitive level-1 errors. Let $\mathcal{D}_{\mathsf{B}_{\blacktriangledown}}\subset\mathcal{D}_{\mathsf{B}_{\blacklozenge}}$ be the subset of detectors in $\mathcal{D}_{\mathsf{B}_{\blacklozenge}}$ on buses in $\mathcal{M}$; let $\mathcal{E}_{\mathsf{B}_{\blacktriangledown}}$ be the subset of errors in $\mathcal{E}_{\mathsf{B}_{\blacklozenge}}$ on buses in $\mathcal{M}$. Let $\partial_{\blacktriangledown}$ be the restricted level-0 syndrome map, which restricts level-0 syndromes to $\mathcal{D}_{\mathsf{B}_{\lozenge}}\cup \mathcal{D}_{\mathsf{C}}\cup \mathcal{D}_{\mathsf{B}_{\blacktriangledown}}$. We note that level-0 errors in $\mathcal{E}_{\mathsf{B}_{\blacklozenge}}\backslash\mathcal{E}_{\mathsf{B}_{\blacktriangledown}}$ do not affect readout gadgets in $\mathcal{M}$. We say two level-0 error configurations $g_{1}$ and $g_{2}$ are $\mathcal{M}$-equivalent\textemdash{}denoted as $g_{1}\simeq_{\mathcal{M}}g_{2}$\textemdash{}if they differ by at most a space-time stabilizer (described later) and an error configuration in $\mathcal{E}_{\mathsf{B}_{\blacklozenge}}\backslash\mathcal{E}_{\mathsf{B}_{\blacktriangledown}}$. 

In Sec.~\ref{subsec: error propagation through cnots}, we prove that $f$ induces a level-1 error configuration in $\mathbb{Z}_2^{|\mathsf{E}|}$. This observation guarantees the existence of a solution for our level-1 decoding procedure\textemdash{}the residual error after the level-0 decoding procedure indeed corresponds to a level-1 error configuration. Without loss of generality, we assume $f\in\mathbb{Z}_2^{|\mathcal{E}_{0}|}$ by decomposing every $Y$ error in $\epsilon$ into its $X$- and $Z$-basis components. We will show that $f|_{\mathsf{B}_{\lozenge}}$, the residual error restricted to $\mathcal{E}_{\mathsf{B}_{\lozenge}}$, is equivalent to a level-0 error configuration $f^{\smallsquare}$ on cores. We say $f^{\smallsquare}$ is the propagated error from $f|_{\mathsf{B}_{\lozenge}}$. Then, the combination of the propagated error $f^{\smallsquare}$ and $f|_{\mathsf{C}}$ is shown to be $\mathcal{M}$-equivalent to the combination of a level-1 error configuration $\mathsf{f}_{\mathsf{C}}$ on cores and a level-0 error configuration $f^{\scriptscriptstyle\blacksquare}$ on shuttle buses in $\mathcal{M}$. Similarly, we say $f^{\scriptscriptstyle\blacksquare}$ is the propagated errors from $f^{\scriptscriptstyle\square}+f|_{\mathsf{C}}$, restricted to shuttle buses in $\mathcal{M}$. We will show that $f^{\scriptscriptstyle\blacksquare}+f|_{\mathsf{B}_{\blacktriangledown}}$ induces a level-1 error configuration $\mathsf{f}_{\mathsf{B}}$ on shuttle buses in $\mathcal{M}$. Finally, we see that $f$ induces the level-1 error configuration $\mathsf{f}_{\mathsf{C}}+\mathsf{f}_{\mathsf{B}}$ (or equivalently, $f\simeq_{\mathcal{M}}\mathsf{f}_{\mathsf{C}}+\mathsf{f}_{\mathsf{B}}$).

In Sec.~\ref{subsec: error reduction theorem}, we show the induced level-1 $X$ (or $Z$) errors almost satisfy a local stochastic noise model.   

\subsection{Error propagation through hybrid-unit CNOT gates}\label{subsec: error propagation through cnots}
We start with a general study of the error propagation phenomena and then analyze $f$ in detail.      

\textit{Space-time stabilizers and CNOT membranes}\textemdash{}A level-0 error configuration that triggers no (level-0 or level-1) detector or logical observable is called a space-time stabilizer~\cite{delfosse_spacetime_2023,williamson_low_overhead_2026,serra_peralta_decoding_2025}. We say two level-0 error configurations $g_1$ and $g_2$ are equivalent if they differ by a space-time stabilizer (this is stronger than $\mathcal{M}$-equivalence described above). We denote this equivalence relation as $g_1\simeq g_2$. Thus, we can deform a level-0 error configuration into another equivalent configuration by multiplying the original configuration by space-time stabilizers. For a data qubit $i$ on a working unit $\sigma$, denote the collection of $X$ (or $Z$) stabilizer generators of the working unit supported on qubit $i$ as $\delta_{X,i}$ (or $\delta_{Z,i}$). At a level-0 time step $t$, let $\Lambda_{t}$ denote the depth-1 layer of hybrid CNOT gates; let $e_{A,i,t}$ denote the $A$-basis ($A\in\{X,Z\}$) data qubit error on qubit $i$; let $e_{\mathrm{S},t}$ denote the measurement error on stabilizer $\mathrm{S}$. If qubit $i$ is not acted on by $\Lambda_{t}$, then $X$ errors $e_{X,i,t}$, $e_{X,i,t+1}$ on this qubit at the level-0 time steps $t$ and $t+1$, together with stabilizer measurement errors $\{e_{\mathrm{S},t}|\mathrm{S}\in\delta_{Z,i}\}$ at level-0 time step $t$, form a spacetime stabilizer. Similarly, $e_{Z,i,t}+e_{Z,i,t+1}+\sum_{\mathrm{S}\in \delta_{X,i}}e_{\mathrm{S},t}$ is also a spacetime stabilizer. Now consider the case where qubit $i$ is acted on by $\Lambda_{t}$. Denote the qubit that participated in the same physical CNOT gate as $i$ in $\Lambda_{t}$ as $j$. If qubit $i$ is on the target side of $\Lambda_t$, then $e_{Z,i,t}+e_{Z,i,t+1}+e_{Z,j,t}+\sum_{\mathrm{S}\in\delta_{X,i}}e_{\mathrm{S},t}$ is a spacetime stabilizer. Notice that the level-0 error configuration $e_{Z,i,t}+e_{Z,i,t+1}+\sum_{\mathrm{S}\in\delta_{X,i}}e_{\mathrm{S},t}$ on the working unit $\sigma$ is now equivalent to the qubit error $e_{Z,j,t}$ on a different working unit. This gives rise to the error propagation phenomenon that we mentioned at the beginning of Sec.~\ref{sec: approximate level 1 error reduction}. We define a CNOT membrane associated with $\Lambda_{t}$ for the $X$-basis decoding subgraph on the working unit $\sigma$ as a plane with a time coordinate $t+\frac{1}{2}$ (only in the coordinate system of the subgraph). We say every $X$ stabilizer measurement error on $\sigma$ at level-0 time step $t$ is on the CNOT membrane. Moreover, we say a level-0 error configuration intersects with this CNOT membrane if the configuration contains at least one primitive level-0 error on the CNOT membrane. See Fig.~\ref{fig: spacetime stabilizer} for an illustration of space-time stabilizers and a CNOT membrane. 
\begin{figure}
    \centering
    \includegraphics[width=\columnwidth]{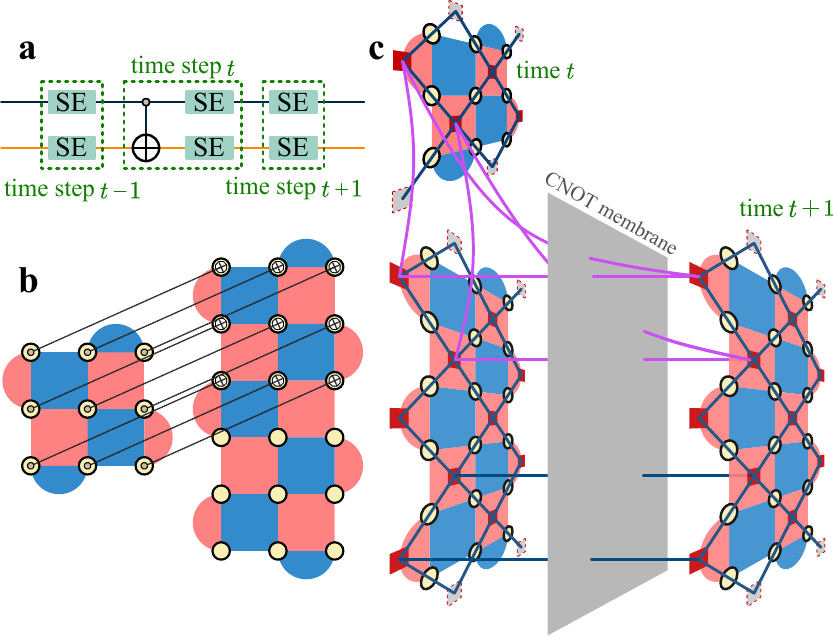}
    \caption{Space-time stabilizers and a CNOT membrane. (\textbf{a}) A level-0 circuit on a core and a $Z$ bus with a core-bus CNOT gate at level-0 time step $t$. (\textbf{b}) Illustration of the core-bus CNOT gate at level-0 time step $t$. (\textbf{c}) Illustration of primitive hyperedge errors, spacetime stabilizers, and the CNOT membrane associated with the core-bus CNOT gate in (\textbf{a}). Each purple curved triangle corresponds to a primitive hypergraph $Z$ error (also an $X$ stabilizer measurement error on the CNOT membrane). The two purple curved triangles, together with three qubit errors sandwiched by these two triangles, form a space-time stabilizer. The two blue timelike edges (intersecting with the CNOT membrane), together with the two qubit errors sandwiched by these two timelike edges, also form a space-time stabilizer.}
    \label{fig: spacetime stabilizer}
\end{figure}
Similarly, if qubit $i$ is on the control side, then $e_{X,i,t}+e_{X,i,t+1}+e_{X,j,t}+\sum_{\mathrm{S}\in\delta_{Z,i}}e_{\mathrm{S},t}$ is a spacetime stabilizer. In this case, we define a CNOT membrane associated with $\Lambda_t$ on $\sigma$ as a plane with a time coordinate $t+\frac{1}{2}$ in the coordinate system of the $Z$-basis decoding subgraph for $\sigma$. 

\textit{Walks, paths, and cycles}\textemdash{}Borrowing from standard terminologies of walks, paths, and cycles on a graph in graph theory~\cite{diestel_graph_2025}, we define these terms as sequences of primitive level-0 errors on a decoding subgraph in the following. Consider an $X$- or $Z$-basis decoding subgraph for a working unit $\sigma$. We introduce an additional virtual vertex $v_{\varnothing}$ (representing the lack of a vertex), which is considered an endpoint for every boundary edge in the decoding subgraph. A walk $\omega$ on the decoding subgraph is a sequence of primitive level-0 errors $e_1,\cdots,e_{r}$ (each corresponds to an edge in the decoding subgraph), such that there is a sequence of vertices $v_{0},\cdots,v_{r}$ on the decoding graph satisfying the following conditions: (i) only vertices $v_{0}$ and $v_{r}$ are allowed to be $v_{\varnothing}$ (in other words, only $e_{1}$ and $e_{r}$ are allowed to be boundary edges), and (ii) for every $i\in\{1,\cdots,r\}$, both $v_{i-1}$ and $v_{i}$ are endpoints of $e_{i}$. The vertex sequence $v_{0},\cdots,v_{r}$ is referred to as the vertex trail of $\omega$. A subwalk of $\omega$ is a subsequence of $\omega$. Each walk $\omega$ is associated with a level-0 error configuration, $[\omega]$, as the sum of all primitive level-0 errors in $\omega$ as vectors over $\mathbb{Z}_2$. This level-0 error configuration $[\omega]$ is called the error value of $\omega$. If $\omega$ contains no repeated primitive level-0 errors, then we can safely identify $\omega$ with $[\omega]$. We let $|\omega|$ denote the length of the sequence $\omega$ and $|[\omega]|$ denote the weight of $[\omega]$. Similarly, consider a collection of walks $\overline{\omega}$, we define its error value $[\overline{\omega}]$ as the sum of all primitive level-0 errors in $\overline{\omega}$ as vectors over $\mathbb{Z}_2$. We also identify $\overline{\omega}$ with $[\overline{\omega}]$ if $\overline{\omega}$ contains no repeated level-0 primitive errors; we let $|\overline{\omega}|$ denote the sum of lengths of all walks in $\overline{\omega}$ and $|[\overline{\omega}]|$ denote the weight of $[\overline{\omega}]$. We introduce two subscripts $||$ and $\perp$, representing the qubit-error component and the stabilizer-measurement-error component, respectively. In this way, $|\overline{\omega}_{||}|$ and $|\overline{\omega}_{\perp}|$ denote the number of occurrences of qubit errors and stabilizer measurement errors in $\overline{\omega}$ (rather than in its error value $[\overline{\omega}]$), respectively. Similarly, this definition also applies to individual walks and error configurations. We say a walk $\omega$ is closed if the starting vertex coincides with the ending vertex in the vertex trail of $\omega$. A simple walk is a walk with no repeated primitive level-0 errors. A walk $\omega$ with its vertex trail $v_{0},\cdots,v_{r}$ is a path if all these vertices are disjoint; the walk $\omega$ is a cycle if the only repeated vertices in the vertex trail are $v_{0}$ and $v_{r}$. A closed walk can be decomposed as a collection of cycles; a walk that is not closed can be decomposed as a path and a collection of cycles. We note that every walk that we analyze in the following has no repeated primitive level-0 errors on any CNOT membrane.

\textit{Classification of closed walks on decoding graphs}\textemdash{}We classify closed walks in an $X$-basis or $Z$-basis decoding subgraph on a working unit into the following categories.
\begin{enumerate}
    \item Lightning walk. A lightning walk is a closed walk terminating at two opposite spatial boundaries. A canonical lightning walk is a lightning walk that does not intersect with any CNOT membrane. 
    \item Sprout walk. A sprout walk is a closed walk terminating at a time boundary. A benign sprout walk is a sprout walk that does not intersect with any CNOT membrane. A benign sprout walk does not propagate errors.  
    \item Cross-membrane walk. A cross-membrane walk is a closed walk that does not belong to the above two categories and intersects with at least one CNOT membrane.
    \item  Benign walk. A benign walk is a closed walk that does not belong to the above three categories. The error value of a benign walk is a level-0 space-time stabilizer. 
\end{enumerate}

We establish the following two lemmas. The first allows us to equivalently deform a lightning walk into a canonical lightning walk and a collection of benign and cross-membrane walks. The second shows that the error value of a sprout walk is equivalent to the combination of benign sprout cycles and cross-membrane cycles.
\begin{lemma}\label{lemma: lightning error decomposition}
    If a lightning walk $\ell$ on a working unit $\sigma$ intersects with at least one CNOT membrane and has no repeated edges on it, we can find a sequence $g$ of error paths on $\sigma$ such that (i) $g$ also forms a canonical lightning walk, (ii) each path in $g$ is immediately after the earliest CNOT membrane intersected by $\ell$, and (iii) we can rearrange edges in $\ell$ and paths in $g$ into a collection of cross-membrane walks and benign walks. We regard $g$ also as a canonical lightning walk and refer to it as the canonical retraction of $\ell$. 
\end{lemma}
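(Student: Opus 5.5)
We construct $g$ directly and then check the rearrangement claim by a parity argument on each time slab. Work in the decoding subgraph of $\ell$ on $\sigma$; say it is the $Z$-basis subgraph, the other case being symmetric. Let $t^\ast+\tfrac12$ be the time coordinate of the earliest CNOT membrane that $\ell$ intersects.

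\textbf{Construction of $g$.} Take $g$ to be a straight spatial path of qubit errors in the level-0 time step $t^\ast+1$, i.e. immediately after that membrane. It runs between the same two opposite spatial boundaries that $\ell$ connects, and it is realized on the detectors of that time step. As a sequence of boundary-to-boundary edges, $g$ is a lightning walk. It uses no stabilizer-measurement errors at time $t^\ast$ or on any other membrane, so it intersects no CNOT membrane and is canonical. This gives (i) and (ii). If it is more convenient, one can instead let $g$ consist of several paths, each lying in slab $t^\ast+1$.

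\textbf{Rearrangement.} We show that $\ell+g$, as a multiset of edges, splits into closed walks, each either cross-membrane or benign.

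\emph{Step 1: $\ell+g$ is a closed-walk multiset with no boundary endpoints.} Every vertex has even degree in $\ell+g$. At the virtual vertex $v_\varnothing$, $\ell$ and $g$ each end on the same pair of opposite boundaries, so their boundary endpoints cancel in pairs. An Eulerian decomposition then writes $\ell+g$ as closed walks (cycles) that do not terminate on opposite boundaries. The boundary-touching pieces pair up into closed walks through $v_\varnothing$ on a single boundary, and these count as benign because they are homologically trivial. Choose the decomposition so that no resulting walk is a lightning walk; this is possible because the total spatial homology class of $\ell+g$ is trivial.

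\emph{Step 2: no walk is a sprout walk.} Since $\ell$ is a lightning walk, it does not terminate at a time boundary, and neither does $g$. So no cycle in the decomposition is a sprout walk.

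\emph{Step 3: classify the cycles.} Each cycle either intersects some CNOT membrane, making it cross-membrane, or intersects none, making it benign. This is exactly the stated classification. The only requirement left is the standing assumption that no primitive error on a membrane is repeated. This holds because $\ell$ has no repeated edges on membranes and $g$ contributes no membrane edges.

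\textbf{Main obstacle.} The delicate step is Step 1: making sure the Eulerian decomposition does not leave a residual lightning walk. Such a walk would pair a boundary endpoint of $\ell$ with an endpoint on the opposite boundary. I would handle this by choosing the pairing at $v_\varnothing$ explicitly: match each boundary endpoint of $\ell$ on boundary $b$ with the endpoint of $g$ on the same boundary $b$. Every cycle through $v_\varnothing$ then starts and ends on one boundary, which makes it non-lightning.

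A secondary check concerns detectors at the time step right after the membrane. There, the detectors on a control-side unit of the CNOT are defined through $\Lambda_{t}(\mathrm S)$. We must confirm that the spatial path in slab $t^\ast+1$ is still a valid boundary-to-boundary walk in this subgraph. I expect it is, because primitive hyperedge errors only add pointy ends on other units, and the restricted graph on $\sigma$ keeps the standard RSC structure.
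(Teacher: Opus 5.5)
Your argument has a genuine gap at Step 1. The straight path $g$ in slice $t^\ast+1$ is chosen without regard to where $\ell$ crosses the membrane $\mathfrak{m}$, so nothing forces $\ell$ and $g$ to share an interior vertex.

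Here is a concrete failure. Let $\ell$ run at time $t^\ast$ from one boundary to an interior vertex. It then crosses $\mathfrak{m}$ once through a measurement-error edge $e_1$, takes one more timelike step to time $t^\ast+2$, and runs to the opposite boundary there. Its only vertex in slice $t^\ast+1$ is $\partial_>e_1$. A straight $g$ that misses this vertex meets $\ell$ only at $v_\varnothing$. Since a walk may visit $v_\varnothing$ only at its two ends, the only split of the edges of $\ell+g$ into closed walks is $\{\ell,g\}$ itself. That is two lightning walks, and $\ell$ is not even canonical.

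Two of your supporting claims do not close this gap:
\begin{itemize}
\item Triviality of the homology class of $[\ell]+[g]$ is a statement about the chain. Whether the edges can be regrouped into non-lightning closed walks is a connectivity property of the edge multiset, which homology does not control.
\item Prescribing the transitions at $v_\varnothing$ does not help either. A circuit obeying your rule can leave $v_\varnothing$ along $g$ at one boundary and reach $v_\varnothing$ again at the other. Your rule then sends it back along $\ell$, so the circuit is exactly $g$ followed by $\ell^{-1}$. Your assertion that every cycle through $v_\varnothing$ starts and ends on one boundary is therefore false.
\end{itemize}
Relatedly, if $g$ is a single path, (iii) cannot hold literally. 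Any walk containing all of $g$ as a unit is $g$ itself, which is a lightning walk.

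The missing idea is to anchor $g$ at the points where $\ell$ crosses $\mathfrak{m}$. The paper does this as follows.
\begin{itemize}
\item List the edges $e_1,\dots,e_r$ of $\ell$ on $\mathfrak{m}$ in traversal order, and write $\ell=(l_0,e_1,l_1,\dots,e_r,l_r)$, where the subwalks $l_a$ alternate between the two sides of $\mathfrak{m}$.
\item Take $g=(g_0,\dots,g_r)$, all minimum-weight paths in the slice just after $\mathfrak{m}$. Here $g_0$ joins the first boundary to $\partial_>e_1$, $g_a$ joins $\partial_>e_a$ to $\partial_>e_{a+1}$, and $g_r$ joins $\partial_>e_r$ to the opposite boundary.
\item The pieces then close up explicitly. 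Each before-side subwalk, with its adjacent membrane edge(s) and the matching $g_a$, is a cross-membrane walk, for example $(l_0,e_1,g_0)$ or $(e_a,l_a,e_{a+1},g_a)$.
\item Each after-side subwalk gives a closed walk $(l_a,g_a)$ that is either benign or crosses only later membranes.
\end{itemize}
No resulting walk touches two opposite boundaries, and $g$ is used path by path.

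A single shared vertex would already rescue your Eulerian approach. If $g$ passes through $\partial_>e_1$, splitting both $\ell$ and $g$ there gives two closed walks, each touching only one boundary. Your construction, however, does not guarantee such a vertex.
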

\begin{proof}
     Denote the earliest CNOT membrane intersected by $\ell$ as $\mathfrak{m}$. As $\ell$ terminates at two opposing space boundaries, we introduce two virtual boundary vertices $v_{bd,0}$ and $v_{bd,1}$ for these two space boundaries, respectively. We connect each virtual boundary vertex to the boundary edges terminating on the corresponding space boundary. Without loss of generality, we suppose $\ell$ starts with a boundary edge connected to $v_{bd,0}$ and ends with a boundary edge connected to $v_{bd,1}$. By traversing $\ell$, we obtain an ordered sequence of time-like edges $e_1, e_2, \cdots, e_r$ on $\mathfrak{m}$, corresponding to stabilizer measurement errors. For each edge $e_i$, denote its endpoints before and after $\mathfrak{m}$ on $\sigma$ as $\partial_{<}e_i$ and $\partial_{>}e_i$, respectively.  Let $l_i$ ($0<i<r$) be the subwalk in $\ell$ between $e_{i}$ and $e_{i+1}$ (excluding both $e_{i}$ and $e_{i+1}$); let $l_{0}$ be the subwalk in $\ell$ from $v_{bd,0}$ to $e_1$ (excluding $e_{1}$); let $l_{r}$ be the subwalk in $\ell$ from $e_{r}$ to $v_{bd,1}$ (excluding $e_{r}$). In this way, $\ell$ is exactly the sequence $(l_0,e_{1},l_1,e_2,l_2,\cdots,e_r,l_r)$. Moreover, subwalks $l_0,\cdots,l_r$ alternate between the two sides of $\mathfrak{m}$. We introduce $g_0$ (or $g_r$) as the minimum-weight path connecting the vertex $\partial_{>}e_1$ (or $\partial_{>}e_{r}$) to the spatial boundary resided by $v_{bd,0}$ (or $v_{bd,1}$). For $i\in\{1,2,\cdots,r-1\}$, we introduce $g_{i}$ as the minimum-weight path connecting $\partial_{>}e_{i}$ and $\partial_{>}e_{i+1}$. Define the sequence $g:=(g_{0},\cdots,g_{r})$, which is immediately after $\mathfrak{m}$ by construction. Then, the sequence $g$ also forms a canonical lighting walk.  
     
     We now show that we can rearrange subwalks in the sequence representing $\ell$
     \begin{equation}\label{eq: subwalk sequence}
         (l_0,e_1,l_1,\cdots,e_r,l_r)
     \end{equation}
and paths in $g=(g_{0},\cdots,g_{r})$ into a collection of cross-membrane walks and benign walks. For every subwalk $l_a$ before $\mathfrak{m}$, we can pair it with its neighboring on-membrane edge(s) and a corresponding path in $g$ to form a cross-membrane walk; for every subwalk $l_b$ after $\mathfrak{m}$, we can pair it with a corresponding path in $g$ to form a benign walk. We describe the detailed construction as follows. If $l_0$ lies before $\mathfrak{m}$, we can see that $(l_0,e_{1},g_{0})$ forms a cross-membrane walk. Then in this case, for any non-negative even (or odd) integer $a\leq r$, $l_{a}$ is before (or after) $\mathfrak{m}$. Thus, for a non-negative odd index $a\leq r$, $(l_a,g_a)$ forms a benign walk; for a positive even index $a<r$, $(e_{a},l_a,e_{a+1},g_{a})$ forms a cross-membrane walk. Finally, if $r$ is even, then $(l_r,e_r,g_r)$ forms a cross-membrane walk. In this way, we rearrange subwalks in $\ell$ and paths in $(g_{0},\cdots,g_{r})$ into the collection of cross-membrane walks and benign walks above. We note that each subwalk of the sequence in Eq.~\ref{eq: subwalk sequence} and each path in $g$ is contained exactly once in these walks. A similar argument also holds if $l_0$ is after $\mathfrak{m}$. Therefore, we have successfully obtained $g$ that satisfies all three conditions in the lemma. See Fig.~\ref{fig: lightning cycle projection} for two examples illustrating the above procedure. 
\end{proof}
\begin{figure}
    \centering
    \includegraphics[width=\columnwidth]{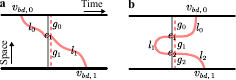}
    \caption{Two examples of lightning cycles and their canonical retractions. Solid lines represent lighting cycles; dashed lines represent canonical contractions of lighting cycles.}
    \label{fig: lightning cycle projection}
\end{figure}

\begin{lemma}
    \label{lemma: decomposition of a sprout cycle}  For a sprout walk $\ell$ that intersects with at least one CNOT membrane, we can find a level-0 error configuration $g$ on the same working unit as $\ell$, such that (i) $g$ is the combination of benign sprout cycles and (ii) $g+[\ell]$ is equivalent to a collection of cross-membrane cycles. 
\end{lemma}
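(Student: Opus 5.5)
We mirror the proof of Lemma~\ref{lemma: lightning error decomposition}, with the time boundary in the role of a space boundary. Since the time boundary is fixed, the sprout walk $\ell$ terminates at the perfect time boundary of a core, or at the transversal initialization/measurement boundary of a shuttle bus. For a cleaner argument, first decompose $[\ell]$ into a path with both ends at the time boundary plus a collection of cycles. Each cycle is either benign, a space-time stabilizer, or intersects a CNOT membrane, and so is already a cross-membrane cycle. It therefore suffices to treat a single sprout path, which we still call $\ell$. Attach a virtual vertex $v_{bd}$ to all edges ending on that time boundary.

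Order the CNOT membranes intersected by $\ell$ by their distance from the time boundary, and let $\mathfrak{m}$ be the one closest to the boundary. Suppose the boundary is initial, so $\mathfrak{m}$ is the earliest membrane; the final-boundary case is symmetric. Traversing $\ell$ gives an ordered list of on-membrane time-like edges $e_1,\dots,e_r$. The path then splits as $(l_0,e_1,l_1,\dots,e_r,l_r)$. The subwalks $l_i$ alternate sides of $\mathfrak{m}$, and $l_0$ and $l_r$ both start at $v_{bd}$, so both lie in the region between the time boundary and $\mathfrak{m}$. Hence $r$ is even. This parity fact is the key difference from the lightning case.

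Next build the correction $g$ in the region between the boundary and $\mathfrak{m}$, which contains no membranes. For each on-membrane edge $e_i$, take a minimum-weight time-like path $h_i$ from its pre-membrane endpoint $\partial_< e_i$ straight down to the time boundary. Since nothing in this region crosses a membrane, each $h_i$ closes a benign sprout walk. Then pair the pieces:
\begin{itemize}
\item $(l_0,h_1)$ forms a closed walk through $v_{bd}$ lying entirely before $\mathfrak{m}$, hence a benign sprout walk. Likewise $(h_r,l_r)$ is a benign sprout walk.
\item For odd $a$, $(e_a,l_a,e_{a+1})$ together with $h_a$ and $h_{a+1}$ closes into a walk through $v_{bd}$ that crosses $\mathfrak{m}$ twice.
\end{itemize}
Those crossing walks are sprout walks, not cross-membrane walks. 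To fix this, replace $h_a+h_{a+1}$ by a pre-membrane path $q_a$ joining $\partial_<e_a$ to $\partial_<e_{a+1}$. The discrepancy $h_a+h_{a+1}+q_a$ is a benign sprout cycle, which we absorb into $g$. After this, $(e_a,l_a,e_{a+1},q_a)$ is a closed walk avoiding all boundaries that intersects $\mathfrak{m}$, so it is cross-membrane. For even $a$ with $0<a<r$, the subwalk $l_a$ lies before $\mathfrak{m}$. Close it with $q$-paths so that the pieces either become benign cycles (space-time stabilizers, which we drop from the equivalence) or remain benign sprout cycles placed into $g$.

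To finish, collect all $h$'s and discrepancy cycles into $g$, so that $g$ is a sum of benign sprout cycles. Summing over $\mathbb{Z}_2$, $[\ell]+g$ equals a sum of cross-membrane closed walks plus benign walks. The benign walks are space-time stabilizers, so $g+[\ell]$ is equivalent to a collection of cross-membrane walks. Decompose each closed walk into cycles. Any resulting cycle that is benign is dropped as a stabilizer, and any that touches the membrane stays cross-membrane, which gives (ii).

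The main obstacle is the bookkeeping in the pairing step. We must check that every piece of $\ell$ and of $g$ is used exactly once, and that the later membranes crossed by the $l_a$ do not turn some constructed pieces into sprout walks that hit membranes. The resolution is that every auxiliary path lives strictly between the time boundary and $\mathfrak{m}$, so it never crosses a membrane. Any membrane crossings then come only from $\ell$'s own subwalks, which occur inside closed walks not touching the boundary, and those walks are cross-membrane by definition.
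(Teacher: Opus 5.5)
Your construction only handles sprout walks whose two ends lie on the \emph{same} time boundary. You assume this in your first reduction (a path ``with both ends at the time boundary''), and it is exactly what makes $r$ even. A sprout walk, however, is any closed walk through $v_{\varnothing}$ that terminates at a time boundary. It may instead end on a time boundary and a space boundary. It may also run from the initialization boundary to the measurement boundary of a shuttle bus, e.g.\ a time-like string of $Z$-stabilizer measurement errors in the $Z$-basis subgraph of an $X$ bus, which crosses every bus-core membrane. Both kinds can arise among the cycles of $f|_{\mathsf{B}_{\lozenge}}$ to which Lemma~\ref{lemma: syndromeless level 0 error induces a level 1 error} applies the statement. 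In these cases $l_r$ can lie after $\mathfrak{m}$, so $r$ is odd, and after your pairing the leftover piece $(h_r,e_r,l_r)$ is again a sprout walk crossing $\mathfrak{m}$. This is not merely a bookkeeping failure. Every closed piece your construction produces touches the boundary only through $v_{bd}$, so it has both ends on that one time boundary. Hence $[\ell]+g$ keeps an odd number of edges on the time boundary where $\ell$ starts. The cross-membrane and benign walks you want to end up with have no time-boundary edges at all, so no $g$ of your type can give the identity you aim for.

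The missing ingredient is a cap joining the time boundary to a \emph{space} boundary; with it, the membrane-by-membrane analysis becomes unnecessary, and this is the paper's proof. For each time-boundary edge $e_{\pm1}$ at an end of $\ell$, let $g_{\pm1}$ be a minimum-weight spatial path from its non-virtual endpoint to a space boundary. Use the same space boundary at both ends, or, if $\ell$ already ends on a space boundary, that one. Each $g_{\pm1}+e_{\pm1}$ sits at the boundary time slice, which the padding rounds separate from every hybrid-unit CNOT gate, so it is a benign sprout cycle. Then $g+[\ell]$ terminates only on a single space boundary, so it is neither a lightning nor a sprout walk. Its cycle decomposition therefore consists of cross-membrane cycles together with benign cycles, which are space-time stabilizers. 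In the same-boundary case your route can be salvaged, although as written it double-counts (e.g.\ $h_1$ appears both in $(l_0,h_1)$ and in the $a=1$ crossing walk). The clean version observes that $l_0,q_1,l_2,q_3,\dots,q_{r-1},l_r$ concatenate into a single walk $B$ lying before $\mathfrak{m}$ with both ends on the time boundary, i.e.\ a benign sprout walk. Each $(e_a,l_a,e_{a+1},q_a)$ with $a$ odd is cross-membrane, so $g:=[B]$ satisfies $g+[\ell]=\sum_{a\ \mathrm{odd}}[(e_a,l_a,e_{a+1},q_a)]$.
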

\begin{proof}
    Denote the two boundary edges of $\ell$ as $e_1$ (a time-boundary edge) and $e_{-1}$, respectively. If $\ell$ only terminates at time boundaries, then define $g_1$ as the minimum-weight error path connecting the (non-virtual) endpoint of $e_1$ to a space boundary, and $g_{-1}$ as the minimum-weight path connecting the endpoint of $e_{-1}$ to the same space boundary. Define a level-0 error configuration $g:=(g_1+e_1)+(g_{-1}+e_{-1})$, which is composed of two benign sprout cycles $(g_{1}+e_{1})$ and $(g_{-1}+e_{-1})$. We can see that $g+[\ell]$ does not terminate at time boundaries and should be equivalent to a combination of cross-membrane cycles. On the other hand, if $\ell$ also terminates at a space boundary, then $e_{-1}$ is a space-boundary edge. Define a level-0 error configuration $g_1$ as the minimum-weight path connecting the vertex of $e_{1}$ to the same space boundary as $e_{-1}$. Then, we can find a benign sprout cycle $g:=g_1+e_1$ satisfying the requirements in the lemma.   
\end{proof}

\textit{Error propagation}\textemdash{}In the following, we show (i) how canonical $X$-basis (or $Z$-basis) lightning cycles on an $X$ (or $Z$) bus propagate errors to cores through CNOT membranes, and (ii) how cross-membrane walks propagate errors through CNOT membranes. The latter is a recurrently used technique in Sec.~\ref{subsec: error reduction theorem}.
\begin{figure}
    \centering
    \includegraphics[width=0.7\columnwidth]{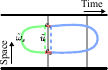}
    \caption{Cleaning a cross-membrane walk $\omega$ across its earliest intersected CNOT membrane $\mathfrak{m}$ (gray vertical line). Let $\Lambda_{t}$ be the hybrid-unit CNOT gate associated with $\mathfrak{m}$. The cross-membrane walk $\omega$ is illustrated as the loop with three colors. Edges $e_{1}$ and $e_{2}$ in $\omega$ on $\mathfrak{m}$ are illustrated in red. Let $v_{1}$ and $v_{2}$ be the endpoints of $e_1$ and $e_2$, respectively, that lie before $\mathfrak{m}$. According to Algorithm~\ref{Alg: membrane crossing}, we can find a subwalk $\tilde{\omega}_s$ (green curve) in $\omega$ before $\mathfrak{m}$ connecting $v_{1}$ to $v_{2}$. The error configuration $\tilde{u}$ (green dashed line) corresponds to a minimum-weight perfect matching of $\{v_1,v_2\}$. The propagated error to the opposite side of $\Lambda_{t}$ is obtained as $\Lambda_t(\tilde{u})+\tilde{u}$. The blue curve, together with the blue dashed line, forms a closed walk after $\mathfrak{m}$, which is the output of Algorithm~\ref{Alg: membrane crossing}.}
    \label{fig: membrane crossing}
\end{figure}
\begin{lemma}
    \label{lemma: propagation of lightning cycle}
    Consider a $Z$-basis (or $X$-basis) canonical lighting cycle $\ell$ on a $Z$ (or $X$) bus $\sigma$. Suppose CNOT membranes on this bus after $\ell$ correspond to hybrid-unit CNOT gates $\Lambda_{t_1},\cdots,\Lambda_{t_{o}}$ at level-0 time steps $t_{1}<\cdots<t_{o}$, respectively. Then, the lightning cycle $\ell$ propagates to canonical lightning cycles at level-0 time step $t_i$ on every core on the control side (or target side) of the core-bus (or bus-core) CNOT gate $\Lambda_{t_i}$ for all $i\in\{1,\cdots,o\}$.   
\end{lemma}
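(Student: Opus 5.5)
The plan is to push the lightning cycle forward in time through each CNOT membrane of $\sigma$ using only the space-time stabilizers listed above, and to read off what is deposited on the partner working units at each membrane. I treat the $Z$ bus case; the $X$ bus case follows by exchanging $X\leftrightarrow Z$ and control $\leftrightarrow$ target.

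\emph{Step 1 (normal form before the first membrane).} A canonical lightning cycle $\ell$ lies between two consecutive CNOT membranes on $\sigma$ (or before the first one) and intersects none of them. Its error value $[\ell]$ joins the two opposite spatial boundaries of the relevant decoding subgraph. I will show that $[\ell]$ is equivalent, modulo benign walks, to a straight string $s_{t}$ of qubit errors. This string represents the bus logical operator of the corresponding type, is supported at a single level-0 time step $t$, and satisfies $t\le t_1$ with no membrane between $\ell$ and $t$.

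The key facts are these. A closed walk in a membrane-free time slab of the subgraph that avoids time boundaries and membranes is a benign walk, so its value is a space-time stabilizer. Moreover, $[\ell]+s_{t'}$ is such a closed walk for any fixed straight string $s_{t'}$ inside the slab, because both connect the same pair of boundaries. Then I slide $s_{t'}$ forward to time $t_1$ using the elementary stabilizers $e_{i,t}+e_{i,t+1}+\sum_{\mathrm{S}}e_{\mathrm{S},t}$ on each qubit $i$ of the string. For these qubits $\Lambda_t$ is trivial, so no propagation occurs.

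\emph{Step 2 (crossing one membrane).} At $t_1$, each qubit $i$ in the support of the string participates in a physical CNOT of $\Lambda_{t_1}$ with a partner qubit $j$ on the control side. The relevant space-time stabilizer is the one from the text, $e_{i,t}+e_{i,t+1}+e_{j,t}+\sum_{\mathrm{S}}e_{\mathrm{S},t}$, with the appropriate Pauli type. Multiplying it in for every $i$ in the string does two things: it moves the string to time $t_1+1$ on $\sigma$, and it deposits $\sum_i e_{j(i),t_1}$ on the partner qubits.

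The membrane measurement-error terms must cancel in pairs. This holds because the string is a logical operator, so every stabilizer generator meets it an even number of times; this is the same commutation argument that makes the non-membrane sliding work.

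It then remains to check geometrically that the image of the bus string under the qubit pairing, restricted to each core acted on by $\Lambda_{t_1}$, is exactly a straight logical string of that core. That string connects the core's two opposite boundaries of the matching type, and is therefore a canonical lightning cycle on the core at time $t_1$. This follows from the layout in Fig.~\ref{fig: cores and buses}: the bus has width $d_0$, each core occupies a $d_0\times d_0$ window of it, and the short-edge logical of the bus restricts on each window to the core logical of the same type.

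\emph{Step 3 (induction).} After the crossing, the string on $\sigma$ is again a canonical straight string between $t_1$ and $t_2$. Repeating Steps 1–2 for $\Lambda_{t_2},\dots,\Lambda_{t_o}$ yields the claimed canonical lightning cycles on every core on the control side of each $\Lambda_{t_i}$. What remains on the bus after $t_o$ is a string that reaches the bus's final transversal measurement.

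I expect the main obstacle to be the bookkeeping in Step 2. One must verify that the membrane measurement-error edges cancel exactly, and that the propagated pattern on each core is a genuine boundary-to-boundary logical string rather than a stabilizer-equivalent but non-canonical configuration. This is especially delicate for even $d_0$, where the alignment of the stabilizer patterns in Fig.~\ref{fig: cores and buses}(d–f) differs. My first attempt would be to choose the straight string along a fixed row of the bus and check the two parities of $d_0$ separately, using the explicit qubit correspondence of the hybrid-unit CNOT gate.
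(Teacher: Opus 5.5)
Your route is the same as the paper's: reduce $\ell$ to a straight bus logical $\ell_{\sigma,t}$ and slide it to $t_1$. Then cross the membrane with the target-side space-time stabilizer, so that $\ell_{\sigma,t_1}\simeq\bigl(\Lambda_{t_1}(\ell_{\sigma,t_1})+\ell_{\sigma,t_1}\bigr)+\ell_{\sigma,t_1+1}$, and recurse. Your Step 1 justification is a correct detail the paper leaves implicit. So is your observation that the membrane measurement-error terms cancel because the string commutes with every $X$ stabilizer.

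However, the geometric claim at the heart of Step 2 is wrong as written. The string you must push is not the short-edge logical of the bus. It is the long-edge logical $\ell_\sigma$, of weight $d_0d_1$, which terminates on the two short boundaries. For a $Z$ bus this is the $Z$-type logical. The lemma's ``$Z$-basis cycle on a $Z$ bus'' has to be read as a cycle of $Z$ errors, since only that subgraph carries the CNOT membranes of a target-side unit.

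The short-edge logical fails on two counts:
\begin{itemize}
\item It has length $d_0$ and lies inside a single $d_0\times d_0$ window, so it cannot restrict to a core logical on every coupled window.
\item On a $Z$ bus it is $X$-type, and $X$ errors on the target of a core-bus CNOT never reach the control cores. Your Step 2 stabilizer $e_{Z,i,t}+e_{Z,i,t+1}+e_{Z,j,t}+\sum_{\mathrm{S}\in\delta_{X,i}}e_{\mathrm{S},t}$ therefore does not apply to it.
\end{itemize}
The $X$-bus case is symmetric. If by ``short-edge logical'' you meant the logical ending on the short edges, then your ``fixed row'' choice is the right one and this is only a wording slip, but the claim must be stated for that string.

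With the long-edge string the argument goes through, and the parity-by-parity bookkeeping you anticipate is unnecessary. Because $\Lambda_{t_1}$ implements a logical CNOT from each coupled core to the bus, conjugation sends the bus's long logical to itself times the product of the coupled cores' logicals. Because $\Lambda_{t_1}$ is transversal, $\Lambda_{t_1}(\ell_{\sigma,t_1})+\ell_{\sigma,t_1}$ restricted to each core is the qubit-by-qubit image of the string's segment in that window. That image is a boundary-to-boundary string of qubit errors at time $t_1$. It contains no measurement errors, so it is automatically a canonical lightning cycle, which is exactly the paper's one-line step.

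Also, a bus may be coupled to fewer than $d_1$ cores, so not every qubit of the string has a CNOT partner as you assert. The unpaired segments simply slide forward with the ordinary non-membrane stabilizer.
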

\begin{proof}
   Without loss of generality, we assume $\sigma$ is a $Z$ bus. Let $\ell_\sigma$ be the logical operator along a long edge of $\sigma$; let $\ell_{\sigma,t}$ be the level-0 error configuration corresponding to $\ell_{\sigma}$ at level-0 time step $t$. Then, the cycle $\ell$ is equivalent to the level-0 error configuration $\ell_{\sigma,t_1}$. We can see that $\ell_{\sigma,t_1}$ is equivalent to $\left(\Lambda_{t_1}(\ell_{\sigma,t_1})+\ell_{\sigma,t_1}\right)+\ell_{\sigma,t_1+1}$, where the first term is the product of logical $Z$ operators on every core at the control side of $\Lambda_{t_1}$ at level-0 time step $t_1$ and the second term (at level-0 time step $t_1+1$) is equivalent to $\ell_{\sigma,t_2}$. The lemma is proven by recursively applying the argument above.      
\end{proof}
\begin{lemma}
    [Error propagation of a cross-membrane walk]\label{lemma: propagation of cross-membrane cycle}
Consider a cross-membrane walk $\ell$ on a working unit $\sigma$ that intersects the CNOT membranes associated with hybrid-unit CNOT gates $\Lambda_{t_1},\cdots,\Lambda_{t_o}$ at level-0 time steps $t_{1}<\cdots<t_{o}$, respectively. Then, the level-0 error configuration $[\ell]$ is equivalent to a level-0 error configuration $u:=u_{t_1}+\cdots+u_{t_o}$, where each $u_{t_i}$ ($i\in\{1,\cdots,o\}$) as a level-0 error configuration is a collection of simple walks at level-0 time step $t_{i}$ on working units on the opposite side of $\Lambda_{t_i}$ to $\sigma$. (We also say $\ell$ propagates to $u$.) Moreover, for each $u_{t_i}$, (i) every simple walk in $u_{t_i}$ either terminates at a spatial boundary or at the pointy end of an error in $\ell$ on the CNOT membrane at $t_i+\frac{1}{2}$, and (ii) its weight, $|u_{t_i}|$, is upper bounded by $|\ell_{||}|/2$, with $|\ell_{||}|$ the number of occurrences of qubit errors in $\ell$. 
\end{lemma}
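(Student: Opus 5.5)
We will argue by induction on the number $o$ of CNOT membranes that $\ell$ intersects, peeling off the earliest membrane at each step; this mirrors the figure accompanying Algorithm~\ref{Alg: membrane crossing}. Let $\mathfrak{m}$ be the earliest membrane crossed by $\ell$, with associated gate $\Lambda_{t_1}$.

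Traversing $\ell$ gives an ordered list of on-membrane edges $e_1,\dots,e_{2k}$. There are evenly many because $\ell$ is closed and does not terminate at a time boundary. Since $\ell$ is not a lightning walk, it crosses $\mathfrak{m}$ an even number of times. Consequently, the subwalks of $\ell$ strictly before $\mathfrak{m}$ pair up the pre-membrane endpoints $v_1,\dots,v_{2k}$ of these edges, possibly with some subwalks running to a single spatial boundary.

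\textbf{Single crossing.} Consider one pair $(v_a,v_b)$ joined by a pre-membrane subwalk $\tilde\omega_s$. Replace $\tilde\omega_s$ by a minimum-weight matching path $\tilde u$ at time step $t_1$ between $v_a$ and $v_b$ (or from $v_a$ to the boundary). The difference $\tilde\omega_s+\tilde u$ is a closed walk that lies entirely before $\mathfrak{m}$ and meets no membrane. Because $\mathfrak{m}$ is the earliest membrane crossed by $\ell$, this difference is a benign walk, hence a space-time stabilizer.

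\textbf{Pushing through the gate.} Next push $\tilde u+e_a+e_b$ through $\Lambda_{t_1}$ using the on-membrane space-time stabilizers $e_{Z,i,t}+e_{Z,i,t+1}+e_{Z,j,t}+\sum_{\mathrm S\in\delta_{X,i}}e_{\mathrm S,t}$ (or their $X$ analogues), summed over the qubits $i$ in $\tilde u$. The result is equivalent to three pieces:
\begin{itemize}
\item the propagated configuration $\Lambda_{t_1}(\tilde u)+\tilde u$ on the opposite-side unit at time $t_1$;
\item the copy of $\tilde u$ at time $t_1+1$, which closes up with the post-membrane remainder of $\ell$ into a closed walk after $\mathfrak{m}$;
\item a leftover set of membrane measurement errors.
\end{itemize}
The leftover membrane errors are exactly $e_a,e_b$ plus the measurement errors of stabilizers adjacent to $\tilde u$, and these cancel in pairs except at the endpoints of $\tilde u$. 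An endpoint that sits on a primitive hyperedge leaves the pointy end of $e_a$ or $e_b$ as an endpoint of the propagated walk; otherwise the endpoint is a spatial boundary. The propagated $u_{t_1}$ is the image of the set of matching paths under the qubit-wise map $i\mapsto j$. Therefore it consists of simple walks, each terminating either at a spatial boundary or at a pointy end, which is claim (i).

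\textbf{Weight bound.} For claim (ii), $|u_{t_1}|\le\sum|\tilde u|$, and each $\tilde u$ is minimum weight, so $|\tilde u|\le|(\tilde\omega_s)_{||}|$. We have to count spatial qubit errors here, not timelike ones, because $\tilde u$ lies within a single time slice. The danger is that the spatial separation of $v_a,v_b$ is only bounded by the number of qubit errors in $\tilde\omega_s$. That is fine, but the claimed bound is $|\ell_{||}|/2$ rather than $|\ell_{||}|$. The factor $1/2$ has to come from choosing which side of the closed walk to charge. The pre-membrane pairing and the post-membrane remainder both connect the same endpoints, and for the later membranes we charge only the post-membrane part. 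I would therefore assign $\tilde u$ to whichever of the two connecting subwalks (before or after $\mathfrak{m}$) has fewer qubit errors, and use the corresponding rerouting. This step is the main obstacle: getting the $1/2$ uniformly across all $o$ membranes without double-charging qubit errors reused in later inductive steps.

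\textbf{Induction.} Finally, the closed walk remaining after $\mathfrak{m}$ crosses only $\Lambda_{t_2},\dots,\Lambda_{t_o}$, and its qubit-error count does not exceed that of $\ell$. It also remains a cross-membrane walk or a benign walk, not a lightning or sprout walk, since its endpoints are unchanged. Applying the induction hypothesis to it yields $u_{t_2},\dots,u_{t_o}$ satisfying (i) and (ii), which completes the argument.
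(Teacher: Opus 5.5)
Your overall scheme is the paper's: apply Algorithm~\ref{Alg: membrane crossing} to the earliest membrane $\mathfrak{m}$, then recurse. The equivalence $[\ell]\simeq[\ell']+u_{t_1}$ and claim (i) go through as you describe. The gap is claim (ii), which you flag but do not close.

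Rerouting each pre-membrane subwalk separately only bounds $|u_{t_1}|$ by the number of qubit errors of $\ell$ before $\mathfrak{m}$. Your proposed repair cannot be done pair by pair. The parts of $\ell$ before and after $\mathfrak{m}$ connect the same endpoint \emph{set}, but generally with different pairings. For example, take pre-membrane subwalks joining $(v_1,v_2)$ and $(v_3,v_4)$, and post-membrane subwalks joining $(v_2,v_3)$ and $(v_4,v_1)$. Let $d(v_1,v_2)=d(v_3,v_4)=10$ and $d(v_2,v_3)=d(v_4,v_1)=1$, with each subwalk using exactly that many qubit errors. Then $|\ell_{||}|=22$, but your rerouting costs $20>11$. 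Also, no post-membrane subwalk joins $v_1$ to $v_2$, so ``the cheaper of the two connecting subwalks'' is undefined.

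The missing idea is that $\tilde u$ need not follow any pairing induced by $\ell$. Take any chain of qubit errors at time step $t_1$ whose boundary is the whole set $\{v_{1,t_1},\dots,v_{r,t_1}\}$, plus one boundary vertex when exactly one pre-membrane subwalk ends on a spatial boundary. Together with the pre-membrane part of $\ell$, it forms a closed chain before $\mathfrak{m}$ that meets no membrane. The push-through then works verbatim, because the membrane measurement errors at $\partial\tilde u$ are exactly $e_1,\dots,e_r$.

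So take $\tilde u$ to be a minimum-weight perfect matching of the entire set, as the paper does. Projecting the qubit errors of $\ell$ before $\mathfrak{m}$ onto one time slice gives a chain with this boundary. Projecting those after $\mathfrak{m}$ gives another, since later on-membrane edges are timelike on $\sigma$ and disappear. Hence $|u_{t_1}|\le|\tilde u|\le\min(|[\ell]_{||,<}|,|[\ell]_{||,>}|)\le|\ell_{||}|/2$; in the example, $|\tilde u|=2$.

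Keep $v_{bd,0}$ and $v_{bd,1}$ distinct in this matching. Because $\ell$ is not a lightning walk, its two halves meet each spatial boundary with the same parity. So both projections are admissible competitors, and $\tilde u$ lies in their relative homology class. This, rather than ``unchanged endpoints,'' is what makes the closed-up pre-membrane chain a stabilizer and the remaining walk non-lightning.

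With this base bound, your induction closes without any double-charging issue. The walk left after $\mathfrak{m}$ has at most $|\tilde u|$ plus the post-membrane qubit errors of $\ell$, hence at most $|\ell_{||}|$ qubit errors. The paper instead observes that $u_{t_i}$ depends only on the endpoint set on the $i$th membrane, which earlier deformations leave intact.

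Finally, $\ell$ need not meet $\mathfrak{m}$ an even number of times. If it starts and ends on the same spatial boundary on opposite sides of $\mathfrak{m}$, then $r$ is odd. That is exactly the case where the paper adds the boundary vertex to the matching, and a boundary-terminated subwalk must be rerouted to that same boundary.
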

 \begin{algorithm}[h]
        \let\oldnl\nl
        \newcommand{\nonl}{\renewcommand{\nl}{\let\nl\oldnl}}
       \caption{Subroutine for deforming a cross-membrane walk across the earliest CNOT membrane}\label{Alg: membrane crossing}
       \KwIn{\justifying\small A cross-membrane walk $\omega$ on a working unit $\sigma$ with no repeated edges on CNOT membranes. Suppose the earliest CNOT membrane $\mathfrak{m}$ (induced by a hybrid-unit CNOT gate $\Lambda_{t}$) intersected by $\omega$ has a time coordinate $t+\frac{1}{2}$.}
       \KwOut{\justifying\small A closed walk $\omega'$ on the same working unit supported on edges after $\mathfrak{m}$. A level-0 error configuration $u$ of qubit errors at level-0 time step $t$ on working units on the opposite side of $\Lambda_t$ to $\sigma$.}
       \small \justifying
       Denote edges in $\omega$ on $\mathfrak{m}$ as $e_1,\cdots,e_{r}$. Initialize an empty register $W$, which stores a set of walks. 

       \For{each edge $e_{i}$ ($i\in\{1,\cdots,r\}$)} {
       \nonl Denote its endpoints on the working unit $\sigma$ as $v_{i,t}$, and $v_{i,t+1}$, representing detectors with time coordinates $t$ and $t+1$, respectively. In the walk $\omega$, there exists a subwalk $\tilde{\omega}_s$ before $\mathfrak{m}$ that connects $e_i$ to another edge $e_j$ with $j\in\{1,\cdots,r\}$ or to a space boundary. 
If $\tilde{\omega}_s$ connects $e_i$ to $e_j$, then we say $e_i$ is matched to $e_{j}$. Moreover, there is a subwalk $\omega_{s}$ of $\omega$ composed exactly of $e_{i}$, $e_{j}$, and $\tilde{\omega}_{s}$. On the other hand, if $\tilde{\omega}_{s}$ connects $e_{i}$ to a space boundary, we say $e_{i}$ is matched to the space boundary. In this case, there is a subwalk $\omega_{s}$ of $\omega$ composed exactly of $e_{i}$ and $\tilde{\omega}_{s}$. If $\omega_{s}\notin W$, insert $\omega_{s}$ into $W$.}
        \nonl Now $W$ contains walks corresponding to mutually disjoint edge sequences in $\omega$, such that every occurrence of an edge in $\omega$ on or before $\mathfrak{m}$ is contained in exactly one walk in $W$.     

        In this step, we temporarily attach boundary edges on two opposing space boundaries on $\sigma$ with virtual boundary vertices $v_{bd,0}$, and $v_{bd,1}$, respectively. If either zero or two edges in $\omega$ on $\mathfrak{m}$ are matched to a space boundary, define $\tilde{u}$ as a combination of qubit errors at time step $t$ corresponding to a minimum-weight perfect matching of $\{v_{1,t},\cdots,v_{r,t}\}$. Otherwise, if an edge $e_{i}$ ($i\in\{1,\cdots,r\}$) is matched to a space boundary with a space boundary vertex $v_{bd}\in\{v_{bd,0},v_{bd,1}\}$. Then, define $\tilde{u}$ as a combination of qubit errors at time step $t$ corresponding to a minimum-weight perfect matching of $\{v_{1,t},\cdots,v_{r,t},v_{bd}\}$. Define a level-0 error configuration, $u:=\Lambda_{t}(\tilde{u})+\tilde{u}$, consisting of qubit errors at time step $t$ on the opposite side of $\Lambda_{t}$ to the working unit $\sigma$. 
       
    Define $\omega'$ as another closed walk on $\sigma$ by replacing every subwalk $\omega_{s}\in W$ from $\omega$ with a minimum-weight error path on time step $t+1$ connecting the endpoints of $\omega_s$.  

    \Return{$\omega'$ and $u$}
   \end{algorithm}
\begin{proof}
    The proof proceeds by recursively deforming $\ell$ across the earliest CNOT membrane intersected by it and recording the corresponding propagated error. The deformation is implemented by multiplying $\ell$ with space-time stabilizers. More specifically, in Algorithm~\ref{Alg: membrane crossing}, we show how to obtain a closed walk $\ell_1$ and a level-0 error configuration $u_{t_1}$ with $[\ell]\simeq [\ell_1]+u_{t_1}$ such that (i) the closed walk $\ell_1$ lies after the CNOT membrane at $t_{1}+\frac{1}{2}$ and only intersects with $o-1$ CNOT membranes (corresponding to hybrid-unit CNOT gates $\Lambda_{t_2},\cdots,\Lambda_{t_{o}}$), and (ii) the error configuration $u_{t_1}$ consists of qubit-error simple walks on the opposite side of $\Lambda_{t_1}$ and satisfies condition (i) in the lemma. Denote the weight of qubit errors in $[\ell]$ before (or after) the CNOT membrane $\mathfrak{m}$ as $|[\ell]_{||,<}|$ (or $|[\ell]_{||,>}|$). According to the minimum-weight constraint in the third step of Algorithm~\ref{Alg: membrane crossing}, we can see that $|u_{t_1}|\leq |[\ell]_{||,<}|$ and $|u_{t_1}|\leq|[\ell]_{||,>}|$. Thus, the weight of $u_{t_1}$ is upper bounded by $\max(|[\ell]_{||,<}|,|[\ell]_{||,>}|)\leq|[\ell]_{||}|/2\leq |\ell_{||}|/2$. See Fig.~\ref{fig: membrane crossing} for an example illustrating the procedure in Algorithm~\ref{Alg: membrane crossing}.
 
    We can then apply Algorithm~\ref{Alg: membrane crossing} again on $\ell_1$ to obtain another closed walk $\ell_2$ (intersecting with $o-2$ CNOT membranes) and a level-0 error configuration $u_{t_2}$ with $[\ell_1]\simeq [\ell_{2}]+u_{t_2}$. Thus, by recursively applying Algorithm~\ref{Alg: membrane crossing}, we can obtain a series of closed walks $\ell_{1},\cdots,\ell_{o}$ and a series of level-0 error configurations $u_{t_1},\cdots,u_{t_{o}}$, such that (i) $\ell_{i}$ intersects with $o-i$ CNOT membranes corresponding to hybrid-unit CNOT gates $\Lambda_{t_{i+1}},\cdots,\Lambda_{t_{o}}$ for every $i\in\{1,\cdots,o-1\}$, (ii) $[\ell_{o}]$ is a space-time stabilizer, (iii) every $u_{t_i}$ satisfies the first requirement in the lemma, and (iv) $[\ell_{i}]\simeq[\ell_{i+1}]+u_{t_{i+1}}$ for every $i\in\{1,\cdots,o-1\}$. Thus, we can see that $[\ell]\simeq [\ell_{1}]+u_{t_1}\simeq\cdots\simeq u_{t_1}+\cdots+u_{t_{o}}$. Notice that each $u_{t_i}$ can also be obtained by erasing all CNOT membranes before the $i$th CNOT membrane and then running the Algorithm~\ref{Alg: membrane crossing} on $\ell$. Thus, the weight of $u_{t_i}$ is also upper bounded by $|\ell_{||}|/2$.  
\end{proof}
We now return to the analysis of the level-0 residual error configuration $f\in\mathbb{Z}_2^{|\mathcal{E}_{0}|}$ introduced at the beginning of this section. 
\begin{lemma}
    \label{lemma: syndromeless level 0 error induces a level 1 error} The level-0 residual error configuration $f$ is $\mathcal{M}$-equivalent to a level-1 error configuration.
\end{lemma}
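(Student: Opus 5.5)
The proof will follow the three-stage roadmap at the start of this section, mirroring the order of Algorithm~\ref{Alg: level-0 decoding subroutine}. Since $\partial_0 f=\partial_0\epsilon+\partial_0\kappa=0$, $f$ triggers no level-0 detector. By the decoder structure (items (i)–(iv) preceding Algorithm~\ref{Alg: level-0 decoding subroutine}), the restrictions $f|_{\mathsf{B}_\lozenge}$, $f|_{\mathsf{C}}$ and $f|_{\mathsf{B}_\blacklozenge}$ are each closed in their own decoding graphs, once we account for the pointy ends of primitive hyperedges coming from the previous stage. We therefore proceed stage by stage, each time decomposing the residual on one graph into closed walks and using the classification (lightning, sprout, cross-membrane, benign).

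\emph{Stage 1 (buses $\mathsf{B}_\lozenge$).} Because only errors in $\mathcal{E}_{\mathsf{B}_\lozenge}$ trigger $\mathcal{D}_{\mathsf{B}_\lozenge}$, $f|_{\mathsf{B}_\lozenge}$ has empty syndrome on $\mathcal{G}_{\mathsf{B}_\lozenge}$. It thus decomposes, bus by bus, into closed walks.
\begin{itemize}
\item Benign walks are space-time stabilizers and are discarded.
\item Sprout walks are reduced by Lemma~\ref{lemma: decomposition of a sprout cycle} to benign sprout cycles plus cross-membrane cycles. The benign sprout cycles lie at the transversal initialization or measurement of the bus and are equivalent to bus-basis stabilizers there, so they are trivial.
\item Lightning walks are retracted by Lemma~\ref{lemma: lightning error decomposition} to a canonical lightning walk plus cross-membrane and benign walks. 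We pair lightning walks and invoke Lemma~\ref{lemma: propagation of lightning cycle}: a canonical lightning cycle on a bus propagates to logical operators on the cores at each subsequent hybrid-unit CNOT. These are exactly level-1 errors at the corresponding core locations. Any leftover logical on the bus itself is equivalent to the bus measurement flip, i.e.\ a deterministic logical of the gadget.
\item Cross-membrane walks are pushed through all their membranes by Lemma~\ref{lemma: propagation of cross-membrane cycle}. They become simple walks $u_{t_i}$ on cores at time $t_i$, ending at spatial boundaries or at pointy ends.
\end{itemize}
The sum of all these core contributions is the propagated error $f^{\smallsquare}$.

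\emph{Stage 2 (cores).} The configuration $f^{\smallsquare}+f|_{\mathsf{C}}$ is syndrome-free on $\mathcal{G}_{\mathsf{C}}$. The pointy ends of $f^{\smallsquare}$ are exactly the detectors that $\kappa_{\mathsf{B}_\lozenge}$ corrected in step 2 of Algorithm~\ref{Alg: level-0 decoding subroutine}, so they cancel. We decompose the result into closed walks on each core's $X$- and $Z$-basis subgraphs and apply the same four-way classification.
\begin{itemize}
\item Canonical lightning walks lying inside a canonical segment (between consecutive CNOT membranes of the relevant basis) are equivalent to a logical $X$ or $Z$ at that segment. By the definition of the level-1 error locations, each such walk is one primitive level-1 error in $\mathsf{f}_{\mathsf{C}}$.
\item Non-canonical lightning walks are first retracted by Lemma~\ref{lemma: lightning error decomposition}.
\item Sprout walks are handled as before. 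The perfect time boundaries on cores make benign sprout cycles space-time stabilizers.
\item Cross-membrane walks propagate via Lemma~\ref{lemma: propagation of cross-membrane cycle} onto the buses across the membranes. These are $X$ errors onto $Z$ buses and $Z$ errors onto $X$ buses, i.e.\ into $\mathcal{E}_{\mathsf{B}_\blacklozenge}$.
\end{itemize}
The parts landing on buses outside $\mathcal{M}$ are discarded by $\mathcal{M}$-equivalence. The rest forms $f^{\scriptscriptstyle\blacksquare}$.

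\emph{Stage 3 (buses $\mathsf{B}_\blacktriangledown$).} The configuration $f^{\scriptscriptstyle\blacksquare}+f|_{\mathsf{B}_\blacktriangledown}$ is syndrome-free on the relevant subgraphs of $\mathcal{G}_{\mathsf{B}_\blacklozenge}$. Errors there are canonical graphlike and no CNOT membrane acts in this basis, so there are no cross-membrane walks. Only benign walks, benign sprout walks (trivial at the transversal init/measurement), and lightning walks spanning the bus lifetime remain. Each lightning walk is exactly the level-1 readout error location on that bus, giving $\mathsf{f}_{\mathsf{B}}$. Summing the stages gives $f\simeq_{\mathcal{M}}\mathsf{f}_{\mathsf{C}}+\mathsf{f}_{\mathsf{B}}$.

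The main obstacle I expect is Stage 2's bookkeeping. First, we must show that the pointy ends of $f^{\smallsquare}$ exactly cancel against those produced by the hyperedge parts of $f|_{\mathsf{B}_\lozenge}$. Second, we must show that every core lightning walk can be assigned to a single canonical segment; for this I would use Lemma~\ref{lemma: lightning error decomposition} membrane by membrane. Third, we must argue that sprout walks at the transversal bus boundaries are genuinely trivial rather than hidden logicals. For the first point I would work with the hypergraph $\mathcal{G}_0$ directly, treating each primitive hyperedge as an on-membrane edge with its pointy end, so that closedness in $\mathcal{G}_0$ implies closedness after propagation.
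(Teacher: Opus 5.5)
Your proposal is correct and follows the paper's proof essentially step for step: the same three-stage cleaning ($f|_{\mathsf{B}_\lozenge}$ on buses, then $f^{\smallsquare}+f|_{\mathsf{C}}$ on cores, then $f^{\scriptscriptstyle\blacksquare}+f|_{\mathsf{B}_\blacktriangledown}$ on buses in $\mathcal{M}$) via Lemmas~\ref{lemma: lightning error decomposition}--\ref{lemma: propagation of cross-membrane cycle}. The stage-wise syndrome-freeness you flag as an obstacle is immediate, because equivalent configurations share syndromes: for instance $(\partial_0 f^{\smallsquare}+\partial_0 f|_{\mathsf{C}})|_{\mathsf{C}}=(\partial_0 f)|_{\mathsf{C}}=0$, with no reference to how $\kappa_{\mathsf{B}_\lozenge}$ was chosen. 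One small slip: the leftover bus logical after the last hybrid-unit CNOT (e.g.\ a logical $X$ on an $X$ bus) commutes with the transversal measurement and is a space-time stabilizer, not a measurement flip, so readout errors arise only in your Stage~3.
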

\begin{proof}
    On a high level, this proof successively cleans closed walks in $\mathcal{E}_{\mathsf{B}_\lozenge}$, cores, and finally in $\mathcal{E}_{\mathsf{B}_{\blacktriangledown}}$. At each stage, closed walks either propagate to the next stage or are absorbed as level-1 errors.  
    As $(\partial_{0}f|_{\mathsf{B}_{\lozenge}})|_{\mathsf{B}_{\lozenge}}=(\partial_{0}f)|_{\mathsf{B}_\lozenge}=0$, $f|_{\mathsf{B}_{\lozenge}}$ is a collection of cycles, each of which is either an $X$-basis cycle on an $X$ bus or a $Z$-basis cycle on a $Z$ bus. According to Lemma~\ref{lemma: lightning error decomposition} and Lemma~\ref{lemma: decomposition of a sprout cycle}, $f|_{\mathsf{B}_{\lozenge}}$ can be equivalently deformed into a collection of benign sprout cycles, benign cycles, canonical lightning cycles, and cross-membrane cycles. The first two types of cycles are equivalent to no error. The last two types of cycles are equivalent to a level-0 error configuration $f^{\scriptscriptstyle \square}$ on cores according to Lemma~\ref{lemma: propagation of lightning cycle} and Lemma~\ref{lemma: propagation of cross-membrane cycle}. Thus, we see that $f|_{\mathsf{B}_{\lozenge}}\simeq f^{\scriptscriptstyle\square}$. Similarly, as $(\partial_{0}f^{\scriptscriptstyle\square}+\partial_{0}f|_{\mathsf{C}})|_{\mathsf{C}}=(\partial_{0}f|_{\mathsf{B}_{\lozenge}}+\partial_{0}f|_{\mathsf{C}})|_{\mathsf{C}}=(\partial_0f)|_{\mathsf{C}}=0$, we can see that $f^{\scriptscriptstyle\square}+f|_{\mathsf{C}}$ is equivalent to a collection of canonical lightning cycles and cross-membrane cycles on cores. Each canonical lightning cycle on a core is equivalent to the level-1 primitive core error, whose associated canonical segment contains this lightning cycle. Let $\mathsf{f}_{\mathsf{C}}$ be the level-1 error configuration on cores induced by the canonical lightning cycles in $f^{\scriptscriptstyle\square}+f|_{\mathsf{C}}$. The cross-membrane cycles on cores introduced by $f^{\scriptscriptstyle\square}+f|_{\mathsf{C}}$ propagate to a level-0 error configuration $f^{\scriptscriptstyle\blacksquare}$ of $Z$ errors on $X$ buses and $X$ errors on $Z$ buses in $\mathcal{M}$ according to Lemma~\ref{lemma: propagation of cross-membrane cycle}. Finally, as $(\partial_{0}f^{\scriptscriptstyle\blacksquare}+\partial_{0}f|_{\mathsf{B}_{\blacktriangledown}})|_{\mathsf{B}_{\blacktriangledown}}=(\partial_{0}f)|_{\mathsf{B}_{\blacktriangledown}}=0$, the error configuration $f^{\scriptscriptstyle\blacksquare}+f|_{\mathsf{B}_{\blacktriangledown}}$ is equivalent to a collection of canonical lightning cycles on shuttle buses in $\mathcal{M}$, which incurs a level-1 error configuration $\mathsf{f}_{\mathsf{B}}$ on them. Thus, according to the following equation, $f$ is $\mathcal{M}$-equivalent to the level-1 error configuration $\mathsf{f}_{\mathsf{C}}+\mathsf{f}_{\mathsf{B}}$.
    \begin{align}\large
        f=&f|_{\mathsf{B}_{\lozenge}} + f|_{\mathsf{C}} + f|_{\mathsf{B}_{\blacklozenge}} 
        \simeq f^{\scriptscriptstyle\square} + f|_{\mathsf{C}}+f|_{\mathsf{B}_{\blacklozenge}}\nonumber\\
        \simeq&_{\mathcal{M}}\ \mathsf{f}_{\mathsf{C}} + f^{\scriptscriptstyle\blacksquare}+f|_{\mathsf{B}_{\blacktriangledown}} \simeq \mathsf{f}_{\mathsf{C}}+\mathsf{f}_{\mathsf{B}}
    \end{align}
\end{proof}

\subsection{Error Reduction Theorem}\label{subsec: error reduction theorem}
Throughout this subsection, we fix the compilation parameter $\alpha_{\mathsf{b}}=d_{1}$ and ensure that every pair of $X$-basis or $Z$-basis readout gadgets in $\mathcal{M}$ is separated by at least $d_{0}$ level-0 time steps. In this case, we will show that the probability distribution of induced level-1 $X$ (or $Z$) errors satisfies a local stochastic noise model except for a rare event whose probability is exponentially suppressed in $d_{0}d_1$.

Without loss of generality, we consider level-1 $X$ errors. For now, we assume that $f$ consists only of level-0 $X$ errors. We now perform a more fine-grained analysis on $f$.  According to the proof of Lemma~\ref{lemma: syndromeless level 0 error induces a level 1 error}, we know that there are two types of errors on cores: one is the propagated errors (each is a simple walk according to Lemma~\ref{lemma: propagation of cross-membrane cycle}) from $X$-basis closed walks on $X$ buses; the other is the native errors in $f|_{\mathsf{C}}$. Similarly, there are two types of $X$-basis errors on $Z$ buses: one propagated from cores and the other in $f|_{\mathsf{B}_{\blacktriangledown}}$. We often need to consider propagated errors and native errors in $f$ as distinct objects to track their respective weights. In this section, we regard $f$ as a set of length-1 walks, referred to as native walks, each of which corresponds to a unique primitive level-0 error in $f$. Then, every level-0 error configuration contained in $f$ is regarded as a subset of $f$. Propagated simple walks are treated as distinct objects and are not identified with native walks, even when they have overlapping supports.  

We know that $f|_{\mathsf{B}_{\lozenge}}$ is a collection of error cycles $\{\tilde{\ell}_{1},\cdots,\tilde{\ell}_{r}\}$ on $X$ buses. We restrict to the case where each cycle in $f|_{\mathsf{B}_{\lozenge}}$ has a weight smaller than $d_{0}d_{1}$. (The probability for a cycle of weight at least $d_0d_1$ is so heavily suppressed that we can regard such an event as failure without analyzing it.) In this case, every cycle in $f|_{\mathsf{B}_{\lozenge}}$ is either a benign sprout cycle, a benign cycle, or a cross-membrane cycle that only intersects with a single CNOT membrane. For each $\tilde{\ell}_{i}$ ($i\in\{1,\cdots,r\}$), let $\tilde{\ell}_{i}^{\circ}$ be a set of propagated simple walks, representing the propagated errors from $\tilde{\ell}_i$. If $\tilde{\ell}_i$ is a benign cycle or a benign sprout cycle, then $\tilde{\ell}_i^{\circ}$ is an empty set. Otherwise, we obtain $\tilde{\ell}_{i}^{\circ}$ as a set of propagated simple walks on cores, following Lemma~\ref{lemma: propagation of cross-membrane cycle}. We say $\tilde{\ell}_i$ is the \textit{source} for every propagated simple walk in $\tilde{\ell}_{i}^{\circ}$. Define $f^{\circ}:=\bigcup \tilde{\ell}_i^{\circ}$, the collection of propagated simple walks on cores from $\{\tilde{\ell}_1,\cdots,\tilde{\ell}_r\}$. We note that $f^{\circ}$ may contain repeated level-0 primitive errors; thus, we need to differentiate the set from its error value $[f^{\circ}]$. Moreover, as $\tilde{\ell}_i\simeq \tilde{\ell}_i^{\circ}$, we know that the error configuration $[f^{\circ}]$ is equivalent to $f|_{\mathsf{B}_{\lozenge}}$. The propagated simple walks in $f^{\circ}$, together with the native walks in $f|_{\mathsf{C}}$, can be partitioned into a set of closed walks $\{\ell_{1},\cdots,\ell_{o}\}$ on cores, such that (i) each closed walk $\ell_{j}$ is composed of propagated simple walks in $f^{\circ}$ and native walks in $f|_{\mathsf{C}}$ and (ii) every simple walk in $f^{\circ}$, as well as every native walk in $f|_{\mathsf{C}}$, is contained in exactly one closed walk in $\{\ell_{1},\cdots,\ell_{o}\}$. We also view each closed walk $\ell_{j}$ as a set of the constituent propagated simple walks in $f^{\circ}$ and native walks in $f|_{\mathsf{C}}$. In this way, the disjoint union of two sets of walks $f^{\circ}\cup f|_{\mathsf{C}}$ is also the disjoint union of $\ell_{1},\cdots,\ell_{o}$. 

For each $\ell_j$, we will define its propagation $\ell_j^{\bullet}$ as a set of simple walks on $Z$ buses in $\mathcal{M}$; and we also say $\ell_j$ is the \textit{source} for every simple walk in $\ell_j^{\bullet}$. If $\ell_j$ is a benign walk, then $\ell_j^{\bullet}$ is an empty set. If $\ell_j$ is a cross-membrane walk, define $\ell_j^{\bullet}$ as the set of simple walks propagated from $\ell_j$ according to Lemma~\ref{lemma: propagation of cross-membrane cycle}. If $\ell_i$ is a lightning walk, according to Lemma~\ref{lemma: lightning error decomposition}, we can find the canonical retraction $g_j$ of $\ell_j$, such that we can arrange $\ell$ and paths in $g_{i}$ into a collection of cross-membrane walks and benign walks. Define $\ell_j^{\bullet}$ as the set of all simple walks propagated from these cross-membrane walks. Define $f^{\bullet}:=\bigcup \ell_j^{\bullet}$ as the set of all simple walks on buses in $\mathcal{M}$ propagated from $\{\ell_1,\cdots,\ell_o\}$. Similarly, the set of walks $f^{\bullet}\cup f|_{\mathsf{B}_{\blacktriangledown}}$ can be decomposed as a disjoint union of closed walks on $Z$ buses. We refer to each such closed walk as a closed walk in $f^{\bullet}\cup f|_{\mathsf{B}_{\blacktriangledown}}$. We say a closed walk in $f^{\bullet}\cup f|_{\mathsf{B}_{\blacktriangledown}}$ is native if the closed walk only consists of native walks in $f|_{\mathsf{B}_{\blacktriangledown}}$. We say two closed walks $\ell_{a}$ and $\ell_{b}$ on cores are linked if a simple walk in $\ell_a^{\bullet}$ and a simple walk in $\ell_{b}^{\bullet}$ belong to the same closed walk in $f^{\bullet}\cup f|_{\mathsf{B}_{\blacktriangledown}}$ on a $Z$ bus. We can then construct a graph $\mathcal{W}_{\mathsf{C}}$ with closed walks in $\{\ell_{1},\cdots,\ell_o\}$ as vertices, such that two vertices are connected by an edge if and only if the corresponding closed walks are linked. 
\begin{definition}
    [Downstream web]\label{def: downstream fault web}
    Consider a collection $w$ of $X$-basis walks on cores and native walks on $Z$ buses, such that (i) the subcollection of walks on cores, $w|_\mathsf{C}$, is the union of a subcollection of closed walks (each as a set of walks) in $\{\ell_1,\cdots,\ell_{o}\}$, (ii) the subcollection of native walks on $Z$ buses, $w|_{\mathsf{B}_{\blacktriangledown}}\subset f|_{\mathsf{B}_{\blacktriangledown}}$ contains no native closed walks, and (iii) $\partial_{\blacktriangledown}[w]=0$. We say $w$ is a downstream web if $w|_{\mathsf{C}}$ corresponds to a connected component in the graph $\mathcal{W}_{\mathsf{C}}$. We say a closed walk $\ell_{i}\in\{\ell_{1},\cdots,\ell_{o}\}$ is in $w$ if $\ell_i$ is a subset of $w$. Moreover, we say a downstream web $w$ is a native downstream web if every closed walk in $w|_{\mathsf{C}}$ contains no simple walks in $f^{\circ}$. (Notice that every native downstream web $w$ is composed of native walks and is contained in $f$.) 
\end{definition} 
In this way, we can see that $f^{\circ}\cup( f|_{\mathsf{C}}\cup f|_{\mathsf{B}_{\blacktriangledown}})$ is a disjoint union of downstream webs and native closed walks on $Z$ buses. 

We now set bounds on native downstream webs and then set bounds on $f$. We start with two definitions. First, we define a level-0 adjacency graph $\mathcal{A}$~\cite{gottesman_fault_tolerant_2014}, whose vertices are identified with level-0 error locations, such that every two vertices are connected by an edge if and only if their corresponding primitive level-0 errors overlap in their level-0 syndromes. This construction~\cite{gottesman_fault_tolerant_2014} is a standard technique in fault-tolerance proofs to help with combinatorial analysis of errors. Secondly, we define sets of primitive level-0 $X$ errors in $\mathcal{E}_{0}$ that may contribute to primitive level-1 $X$ errors on cores and $Z$ buses, respectively.
\begin{definition}
    [Receptive zone for a primitive level-1 error]\label{def: receptive zone for X or Z error}
    Consider a primitive level-1 $X$ error $\mathsf{e}$ associated with a canonical segment $\mathrm{G}$ on a working unit $\sigma$. The receptive zone for $\mathsf{e}$ is defined in the following as a set of level-0 $X$ error locations. If $\sigma$ is a core, the receptive zone is the collection of all level-0 $X$ error locations satisfying at least one of the following conditions.
    \begin{itemize}
    \item The level-0 error location is on $\sigma$ and triggers at least one level-0 detector contained in the canonical segment immediately before $\mathrm{G}$ or the canonical segment $\mathrm{G}$.
    \item The error location is on an $X$ bus $\sigma'$, which is on the control side of a bus-core CNOT gate $\Lambda_t$. The gate $\Lambda_{t}$ acts on $\sigma$ at level-0 time step $t$ contained in the segment immediately before $\mathrm{G}$ or the segment $\mathrm{G}$. This error location is on a data qubit and occurs at a level-0 time step no later than $t$ and later than $t-d_{0}d_1$. 
    \end{itemize}
On the other hand, if $\sigma$ is a $Z$ bus, the receptive zone is the collection of all level-0 $X$ error locations satisfying at least one of the following conditions. 
\begin{itemize}
        \item The error location is on $\sigma$ and triggers at least one level-0 $\mathrm{Z}$ detector. 
        \item  The error location is on a CNOT membrane of a core induced by a core-bus CNOT gate $\Lambda_{t}$ at level-0 time step $t$, such that $\Lambda_{t}$ acts on $\sigma$.
\end{itemize}
\end{definition}

\begin{lemma}
    [Bounds on a native downstream web]\label{lemma: bounds on a downstream web}
Consider a native downstream web $w$. Denote the collection of physical errors in $w$ as $w_{\epsilon}$ and the collection of inferred errors in $w$ as $w_{\kappa}$. We can explicitly construct a level-1 error $\mathsf{w}=\mathsf{e}_1+\mathsf{e}_2+\cdots+\mathsf{e}_{|\mathsf{w}|}$, where each $\mathsf{e}_{i}$ is a distinct primitive level-1 error, such that $w$ induces the level-1 error $\mathsf{w}$. Moreover, we can find two sets of native walks (level-0 error configurations) $\lambda\subset\mu\subset w$ such that (i) $\lambda=\{e_1,\cdots,e_{|\mathsf{w}|}\}$ with each $e_{i}$ a distinct primitive level-0 error in the receptive zone of $\mathsf{e}_{i}$, (ii) each connected component of $\mu$ (on the level-0 adjacency graph) contains at least one element in $\lambda$, and (iii) the number of physical errors in $\mu$, $|\mu\cap w_{\epsilon}|$, is lower bounded by $\max(|\mu|/4,d_0|\mathsf{w}|/2)$. 
\end{lemma}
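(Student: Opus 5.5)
Since $w$ is native, every closed walk $\ell_j$ in $w|_{\mathsf{C}}$ consists only of native walks in $f|_{\mathsf{C}}\subset f$. The level-1 error $\mathsf{w}$ is built by following the proof of Lemma~\ref{lemma: syndromeless level 0 error induces a level 1 error}, restricted to $w$. For each $\ell_j$ in $w$ that is a lightning walk, we take its canonical retraction from Lemma~\ref{lemma: lightning error decomposition}. It lies in a single canonical segment, so it is equivalent to one primitive level-1 $X$ error on that core. The remaining cross-membrane walks propagate to $Z$ buses in $\mathcal{M}$ by Lemma~\ref{lemma: propagation of cross-membrane cycle}. Combined with $w|_{\mathsf{B}_{\blacktriangledown}}$, these form closed walks on $Z$ buses with vanishing restricted syndrome, using $\partial_{\blacktriangledown}[w]=0$. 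Any of these that is a lightning cycle on a $Z$ bus contributes a primitive level-1 error on that bus. Summing over both stages and cancelling repeated primitive errors mod 2 gives $\mathsf{w}=\mathsf{e}_1+\cdots+\mathsf{e}_{|\mathsf{w}|}$ with distinct $\mathsf{e}_i$.

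Next we choose $\lambda$ and $\mu$. For a core error $\mathsf{e}_i$ coming from a lightning walk $\ell_j$, the walk $\ell_j$ crosses the whole core spatially, so it contains a qubit error in the segment $\mathrm{G}$ or in the segment immediately before it. We pick such an error as $e_i$; it lies in the receptive zone by the first bullet of Definition~\ref{def: receptive zone for X or Z error}. For a bus error, the lightning cycle on the $Z$ bus is made of native bus walks and of simple walks propagated from core cross-membrane walks. Each propagated walk ends on a spatial boundary or at the pointy end of a membrane error of its source. So either a native bus error or a core error on a membrane of a core-bus gate acting on that bus appears in the cycle's support. We pick it as $e_i$, which lies in the receptive zone by the bus clauses of the definition. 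Distinctness of the $e_i$ follows because distinct $\mathsf{e}_i$ live in distinct canonical segments or buses. We then set $\mu$ to be the union, over all $\ell_j$ and native bus walks involved in producing some $\mathsf{e}_i$, of their native errors. Each such piece is connected in the adjacency graph $\mathcal{A}$ and contains an $e_i$, which gives (ii). Connectivity is inherited through linking, since propagated walks attach at pointy ends adjacent to their sources.

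For (iii), the bound $|\mu\cap w_\epsilon|\ge |\mu|/4$ should come from minimum-weight matching. On each decoding subgraph, the inferred part $w_\kappa$ restricted to a cycle has weight at most the physical part plus corrections. Otherwise replacing it would give a lower-weight matching. A factor of $4$ absorbs the double counting from hyperedges and measurement versus qubit errors. The bound $|\mu\cap w_\epsilon|\ge d_0|\mathsf{w}|/2$ is the main obstacle. The idea is that every level-1 error requires a weight-$\ge d_0$ spatial crossing: a lightning walk on a core, or on a $Z$ bus of width $d_0$. By Lemma~\ref{lemma: propagation of cross-membrane cycle}(ii), propagated weight is at most half the qubit weight of its source. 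So each bus crossing is charged to at least $d_0$ native qubit errors, on the bus or in the source core walks. Matching optimality then forces at least half of each crossing to be physical. The hard part is avoiding double charging when one core walk feeds several bus lightning cycles. This is where the separation of at least $d_0$ between same-basis gadgets enters: a core walk short enough to cross two separated membranes costs weight $\ge d_0$ in the time direction, so it can be charged once per membrane. I would first attempt this charging argument membrane by membrane, summing the $|\ell_{||}|/2$ bounds and using that each native error lies in at most two consecutive canonical segments.
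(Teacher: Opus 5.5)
\textbf{The gap: your choice of $\mathsf{w}$ makes (iii) false.} You build $\mathsf{w}$ by following Lemma~\ref{lemma: syndromeless level 0 error induces a level 1 error} literally. Then a lightning core walk contributes the core error on the segment \emph{after} its earliest membrane, since its canonical retraction lives there. Its cross-membrane pieces can close up, through that same membrane, into a lightning cycle on the $Z$ bus, which contributes a second, bus error. No charging argument can repair this, as the following example shows.

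Take odd $d_0\ge 5$ and a core-bus CNOT at step $t$. Let the physical errors be the first $\lceil d_0/2\rceil$ qubits of a core logical $X$ string at step $t$, plus the step-$t$ measurement error of the $Z$ stabilizer where that half-string ends. The core decoder sees a single defect at step $t+1$ and completes the string with the shorter right half at step $t+1$. The bus decoder sees only the pointy end and matches it with the shorter right half of the bus string. The resulting native web has $\lceil d_0/2\rceil+1$ physical errors. Your $\mathsf{w}$ (core error after $t$ plus bus error) has weight $2$, so (iii) would demand $d_0$ physical errors.

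The paper avoids this by picking a different level-1 representative. A core $X$ error before a level-1 CNOT equals the core error after it plus the bus error. Accordingly, the else-if branch of Algorithm~\ref{Alg: subroutine for analyzing a downstream fault web} records only the single core error on the segment immediately \emph{before} the earliest membrane, whenever a bus lightning cycle is fed by a lightning core walk through that walk's earliest membrane.

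\textbf{The missing mechanism for $d_0|\mathsf{w}|/2$.} The same issue governs the bound in general. The paper splits bus errors into \emph{trailing} ones (the bus is not the target of the source walk's earliest membrane) and \emph{leading} ones.

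Your time-direction idea is exactly what pays for trailing ones. Because same-basis gadgets are at least $d_0$ steps apart, each extra membrane costs $d_0$ measurement errors. This gives $|\ell_{i,\perp}|\ge d_0(|\mathsf{g}_i|-1)$ for lightning walks, whose $\mathsf{g}_i$ holds at most one core error, and $d_0|\mathsf{w}_{\mathrm{tb}}|\le|\mu_{\mathrm{m},\perp}|$ for cross-membrane walks.

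Leading bus errors have no time-direction weight behind them. They are paid by the bus lightning walk itself, via $d_0\le|l_b|\le 2|l_b\cap w_\epsilon|+|\xi_{b,||}|$. Summing this over leading bus errors works only because their source sets $\xi_b$ are pairwise disjoint: each core walk has a unique earliest membrane, hence a unique bus for which it can be a leading source.

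Your plan to charge membrane by membrane with the $|\ell_{||}|/2$ bound of Lemma~\ref{lemma: propagation of cross-membrane cycle} would bill the same qubit errors of a walk once per membrane it crosses.

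\textbf{Distinctness of the $e_i$.} This does not follow from distinct segments, because receptive zones of adjacent core segments overlap. It comes instead from drawing the $e_i$ from disjoint native walks and from different error types: qubit errors for core errors, and membrane or native bus errors for bus errors.
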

  
\begin{proof}
We explicitly construct the level-1 error $\mathsf{w}$, and the sets $\mu$ and $\lambda$ following Algorithm~\ref{Alg: subroutine for analyzing a downstream fault web}. (As $w$ is a native downstream web, the register $\mu$ in Algorithm~\ref{Alg: subroutine for analyzing a downstream fault web} is a set of native walks.) By construction, conditions (i) and (ii) in the lemma are satisfied. 
   \begin{algorithm}
        \let\oldnl\nl
        \newcommand{\nonl}{\renewcommand{\nl}{\let\nl\oldnl}}
       \caption{Subroutine for analyzing a downstream web (Part 1)}\label{Alg: subroutine for analyzing a downstream fault web}
       \KwData{\justifying \small A downstream web $w$. Registers $\mathsf{w}$, $\mu$, $\lambda$, $\mu_{\mathrm{m}}$, and $\mu_{\mathrm{l}}$, where $\mathsf{w}$ stores a level-1 error configuration, $\mu$ stores a set of walks, $\lambda$ stores a set of primitive level-0 errors, and $\mu_{\mathrm{m}}$ and $\mu_{\mathrm{l}}$ store sets of closed walks on cores. For each closed walk $\ell_{i}$ in $w|_{\mathsf{C}}$, registers $\rho_i$, $\mathsf{g}_i$, and $\gamma_i$ store a set of walks, a sequence of primitive level-1 errors, and a sequence of primitive level-0 errors, respectively. All these registers are initialized to be empty.}   
       \small \justifying
       Denote the set of all simple walks propagated from closed core walks in $w$ as $w^{\bullet}$. Then, $w^{\bullet}\cup w|_{\mathsf{B}_{\blacktriangledown}}$ is a disjoint union of closed walks $\{l_{b_1},\cdots,l_{b_u}\}$ on buses. Let $\mathsf{w}_{\mathsf{b}}$ be the level-1 error configuration induced by these closed walks.
       
        \For{every level-1 bus error $\mathsf{e}_{b}\in\mathsf{w}_{\mathsf{b}}$}{
        \nonl Let $\sigma$ be the bus acted on by $\mathsf{e}_{b}$. Find a lightning walk $l_b\in\{l_{b_1},\cdots,l_{b_u}\}$ triggering $\mathsf{e}_{b}$.
        
        \nonl
        \If{there is a source $\ell_i$ of a simple walk in $l_b\cap w^{\bullet}$, such that the bus $\sigma$ is not on the target side of the hybrid-unit CNOT gate corresponding to the first CNOT membrane intersected by $\ell_i$} {\nonl add $\mathsf{e}_{b}$ to $\mathsf{w}$. Find a stabilizer measurement error $e_{b}\in[\ell_i]$ on the CNOT membrane, through which $\ell_i$ induces simple walks in $\ell_{i}^{\bullet}\cap l_{b}$. Append $\mathsf{e}_{b}$ to $\mathsf{g}_{i}$ and $e_b$ to $\gamma_i$.} 
        \nonl
        \ElseIf{there is a lightning walk $\ell_i$ as a source of a simple walk in $l_b\cap w^{\bullet}$}{\nonl Let $\mathsf{e}_c$ denote the primitive level-1 core error on the canonical segment immediately before the first CNOT membrane intersected by $\ell_i$. Add $\mathsf{e}_{c}$ to $\mathsf{w}$. Find a qubit error $e_{c}\in[\ell_{i}]$ before the first CNOT membrane intersected by $\ell_i$. Append $\mathsf{e}_{c}$ to $\mathsf{g}_{i}$ and $e_{c}$ to $\gamma_{i}$.}
        \nonl
        \Else{\nonl Pick one source $\ell_i$ of $l_b$. Add $\mathsf{e}_{b}$ to $\mathsf{w}$. Find a stabilizer measurement error $e_{b}\in\ell_i$ on the first CNOT membrane intersected by $\ell$. For every other source $\ell_{j}$ of $l_b$, update $\rho_i$ to $\rho_i\cup\ell_j$. Update $\rho_i$ to $\rho_i\cup(l_b\cap w|_{\mathsf{B}_{\blacktriangledown}})$. Append $\mathsf{e}_{b}$ to $\mathsf{g}_i$ and $e_{b}$ to $\gamma_i$.  
       }}

        For each lightning walk $\ell_i$ in $w$, obtain its canonical projection $g_i$ according to Lemma~\ref{lemma: lightning error decomposition}. Let $\mathsf{e}_{c}$ be the primitive level-1 core error induced by $g_i$. Find a qubit error $e_{c}\in[\ell_{i}]$ before the first CNOT membrane intersected by $\ell_i$. If $\mathsf{g}_i$ contains no level-1 core error, then add $\mathsf{e}_{c}$ to $\mathsf{w}$; append $\mathsf{e}_{c}$ to $\mathsf{g}_{i}$ and $e_{c}$ to $\gamma_{i}$. 
   \end{algorithm}

\addtocounter{algocf}{-1} 
\begin{algorithm}
  \setcounter{AlgoLine}{3}
  \caption{Subroutine for analyzing a downstream web (Part 2)}
\let\oldnl\nl
\newcommand{\nonl}{\renewcommand{\nl}{\let\nl\oldnl}}
  \small\justifying
  Create a buffer register $\mathsf{w}'$ as a copy of $\mathsf{w}$. \For{every closed walk $\ell_i$} {
    \nonl If $\mathsf{g}_{i}$ does not contain any primitive level-1 error in $\mathsf{w}'$, then reset $\mathsf{g}_i$ and $\gamma_{i}$ to empty and continue. 
    Otherwise, update $\mu$ to $\mu\cup \ell_i\cup \rho_i$. If $\ell_i$ is a lightning walk, then update $\mu_{\mathrm{l}}$ to $\mu_{\mathrm{l}}\cup\ell_{i}$. If $\ell_{i}$ is a cross-membrane walk, update $\mu_{\mathrm{m}}$ to $\mu_{\mathrm{m}}\cup\ell_{i}\cup(\rho_{i}\cap w|_{\mathsf{C}})$.
    
    \nonl Construct two empty buffer registers $\mathsf{g}'_{i}$ and $\gamma'_{i}$ corresponding to $\mathsf{g}_{i}$ and $\gamma_{i}$, respectively.

    \nonl\For{every primitive level-1 error $\mathsf{e}$ in $\mathsf{w'}$ that is also in $\mathsf{g}_i$ as the $j$-th member}{
    \nonl Add the $j$-th member of $\gamma_i$ to $\lambda$. Remove $\mathsf{e}$ from $\mathsf{w}'$. Append the $j$-th member of $\mathsf{g}_{i}$ to $\mathsf{g}_{i}'$; append the $j$-th member of $\gamma_{i}$ to $\gamma_{i}'$.
    }

    \nonl Reset $\mathsf{g}_i$ and $\gamma_i$ to $\mathsf{g}'_{i}$ and $\gamma_{i}'$, respectively. Since $\mathsf{g}_{i}$ has the same length as $\gamma_{i}$, we regard every $j$-th member of $\mathsf{g}_i$ as corresponding to the $j$-th member of $\gamma_i$.  
    }
\end{algorithm}

We now bound the weight of physical errors in $\mu$ to prove condition (iii). We first note that every closed walk $\ell_{i}$ (on a core) in $w$ is a simple walk; the corresponding register $\mathsf{g}_i$ is non-empty if and only if $\ell_i$ is added to $\mu$ during the fourth step of Algorithm~\ref{Alg: subroutine for analyzing a downstream fault web}. For each lightning walk $\ell_i$ in $w$ on a core with a non-empty register $\mathsf{g}_{i}$, the number of primitive bus errors in $\mathsf{g}_i$, if non-zero, is strictly smaller than the number of membranes intersected by $\ell_i$. Moreover, $\mathsf{g}_i$ contains at most one primitive core error. Thus, the weight of stabilizer measurement errors in $\ell_i$, $|\ell_{i,\perp}|$, is lower bounded by $d_0(|\mathsf{g}_i|-1)$; and the weight of qubit errors in $\ell_i$, $|\ell_{i,||}|$, is lower bounded by $d_0$. Thus, the weight of $\ell_{i}$, $|\ell_{i}|$, is lower bounded by $d_0|\mathsf{g}_i|$. 
Let $\mathsf{w}_{\mathrm{lc}}$ be the sum of the register $\mathsf{g}_i$ for every lightning core walk $\ell_i$ in $w$. Thus, we can bound the weight of $\mu_{\mathrm{l}}$ (contributed by lightning walks on cores in step 4 of Algorithm~\ref{Alg: subroutine for analyzing a downstream fault web}) as follows:
\begin{equation}\label{eq: bound on lightning walk error}
    d_{0}|\mathsf{w}_{\mathrm{lc}}|\leq |\mu_{\mathrm{l}}|.
\end{equation}

Consider a primitive level-1 error $\mathsf{e}_{b}$ in $\mathsf{w}\backslash\mathsf{w}_{\mathrm{lc}}$. We know that $\mathsf{e}_{b}$ is a bus error contained in the level-1 error register $\mathsf{g}_{j}$ of some cross-membrane walk $\ell_j$ in $w$. We say $\mathsf{e}_{b}$ is a leading bus error if the $Z$ bus acted on by $\mathsf{e}_{b}$ is on the target side of the core-bus CNOT gate corresponding to the first CNOT membrane intersected by $\ell_{j}$. Otherwise, we say $\mathsf{e}_b$ is a trailing bus error. If the level-1 error register $\mathsf{g}_{j}$ of $\ell_j$ contains no leading bus error, then the weight of stabilizer measurement errors in $\ell_j$, $|\ell_{j,\perp}|$, is lower bounded by $d_0|\mathsf{g}_j|$. On the other hand, if $\mathsf{g}_i$ contains a leading bus error $\mathsf{e}_{b}$, then $|\ell_{j,\perp}|$ is lower bounded by $d_0(|\mathsf{g}_{j}|-1)$. According to Algorithm~\ref{Alg: subroutine for analyzing a downstream fault web}, we have picked a lightning walk $l_{b}$ on a $Z$ bus that induces $\mathsf{e}_{b}$. Let $\xi_{b}\subset\{\ell_1,\cdots,\ell_{r}\}$ be the collection of all sources of simple walks in $l_{b}\cap w^{\bullet}$. We say $\xi_b$ is the source set corresponding to $\mathsf{e}_{b}$. Then, Algorithm~\ref{Alg: subroutine for analyzing a downstream fault web} guarantees that $\mu$ contains $\xi_{b}$ and $l_{b}\cap w|_{\mathsf{B}_{\blacktriangledown}}$. According to Lemma~\ref{lemma: propagation of cross-membrane cycle}, the weight of simple walks, $|\ell_{b}\cap w^{\bullet}|$, is upper bounded by $|\xi_{b,||}|$/2. MWPM decoding ensures that $|l_{b}\cap w_{\kappa}|\leq |l_{b}\cap w_{\epsilon}|+|l_{b}\cap w^{\bullet}|$. From these two inequalities, we obtain the following bounds:
\begin{align}\label{eq: single error bound on qubit errors in cm core walks}
    |(l_{b}\cap w|_{\mathsf{B}_{\blacktriangledown}})|\leq& 2|l_b\cap w_{\epsilon}| + |\xi_{b,||}|/2, \nonumber\\
    d_0\leq |l_{b}|\leq& 2|l_b\cap w_{\epsilon}| + |\xi_{b,||}|.
\end{align}
Let $\mathsf{w}_{\mathrm{tb}}$ and $\mathsf{w}_{\mathrm{lb}}$ be the collection of trailing bus errors and the collection of leading bus errors in $\mathsf{w}\backslash\mathsf{w}_{\mathrm{lc}}$, respectively. Then, the total weight of stabilizer measurement errors in $\mu_{\mathrm{m}}$, $|\mu_{\mathrm{m},\perp}|$ has a lower bound: 
\begin{equation}\label{eq: lower bound stab error in cm walks}
    d_{0}|\mathsf{w}_{\mathrm{tb}}|\leq |\mu_{\mathrm{m},\perp}|.
\end{equation}
As the source sets corresponding to primitive errors in $\mathsf{w}_{\mathrm{lb}}$ are mutually disjoint, we can extend Eq.~\ref{eq: single error bound on qubit errors in cm core walks} to obtain the following inequalities.
\begin{align}\label{eq: bound on qubit errors in all cm core walks}
    |\mu\cap w|_{\mathsf{B}_{\blacktriangledown}}| \leq& 2|(\mu\cap w_{\epsilon}|_{\mathsf{B}_{\blacktriangledown}})| + |\mu_{\mathrm{m},||}|/2 ,\nonumber\\
    d_0|\mathsf{w}_{\mathrm{lb}}|\leq& 2|(\mu\cap w_{\epsilon}|_{\mathsf{B}_{\blacktriangledown}})| + |\mu_{\mathrm{m},||}|.
\end{align}
We now bound the weight of physical errors in $\mu$ in terms of the weight of level-1 errors $|\mathsf{w}|$.
\begin{align}
    |\mu\cap w_{\epsilon}|=&|\mu_{\mathrm{l}}\cap w_{\epsilon}|+ |\mu_{\mathrm{m}}\cap w_{\epsilon}| + |(\mu|_{\mathsf{B}_{\blacktriangledown}}\cap w_{\epsilon})|\nonumber\\
    \geq& \frac{1}{2}(|\mu_{\mathrm{l}}|+|\mu_{\mathrm{m},\perp}|+|\mu_{\mathrm{m},||}|+2|(\mu|_{\mathsf{B}_{\blacktriangledown}}\cap w_{\epsilon})|)\nonumber\\
    \geq& \frac{d_0}{2}(|\mathsf{w}_{\mathrm{lc}}|+|\mathsf{w}_{\mathrm{tb}}|+|\mathsf{w}_{\mathrm{lb}}|)=d_{0}|\mathsf{w}|/2,
\end{align}
where the first inequality is implied by MWPM decoding; the second inequality is derived from Eq~\ref{eq: bound on lightning walk error}, Eq.~\ref{eq: lower bound stab error in cm walks}, and Eq.~\ref{eq: bound on qubit errors in all cm core walks}.
Similarly, we can bound $|\mu\cap w_{\epsilon}|$ by $|\mu|$ as follows.
\begin{equation}
    |\mu\cap w_{\epsilon}| \geq \frac{1}{2}(|\mu_{\mathrm{l}}|+|\mu_{\mathrm{m},\perp}|+|\mu_{\mathrm{m},||}|/2+|(\mu|_{\mathsf{B}_{\blacktriangledown}})|\geq \frac{1}{4}|\mu|.
\end{equation}
\end{proof}
Building on Lemma~\ref{lemma: bounds on a downstream web}, we can now bound $f$.
\begin{lemma}
    [Bounds on a residual level-0 error configuration]\label{lemma: bounds on an error config}
    Consider the level-0 residual error configuration $f$ (consisting only of $X$ errors). Denote the collection of physical errors in $f$ as $f_{\epsilon}$ and the collection of inferred errors in $f$ as $f_{\kappa}$. We can explicitly construct a level-1 error $\mathsf{f}=\mathsf{e}_1+\mathsf{e}_2+\cdots+\mathsf{e}_{|\mathsf{f}|}$, where each $\mathsf{e}_{i}$ is a distinct primitive level-1 error, such that $f$ induces the level-1 error $\mathsf{f}$. Moreover, we can find two level-0 error configurations $\lambda_f\subset\mu_f\subset f$ such that (i) $\lambda_f=e_1+\cdots+e_{|\mathsf{f}|}$ with each $e_{i}$ a distinct primitive level-0 error in the receptive zone of $\mathsf{e}_{i}$, (ii) each connected component of $\mu_f$ (on the level-0 adjacency graph) contains at least one element in $\lambda_f$, and (iii) the weight of physical errors in $\mu_f$, $|\mu_f\cap f_{\epsilon}|$, is lower bounded by $\max(|\mu_f|/4,d_0|\mathsf{f}|/2)$.
\end{lemma}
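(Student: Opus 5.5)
We reduce the general residual configuration $f$ to the native case already handled by Lemma~\ref{lemma: bounds on a downstream web}. Recall that $f^{\circ}\cup(f|_{\mathsf{C}}\cup f|_{\mathsf{B}_{\blacktriangledown}})$ is a disjoint union of downstream webs and native closed walks on $Z$ buses. A native closed walk on a $Z$ bus either is benign or is a lightning walk of weight at least $d_0d_1\geq d_0$. The latter directly induces a primitive level-1 bus error, with a qubit error of the walk in its receptive zone, and MWPM guarantees that at least half of its elements are physical. Hence each such walk contributes one level-1 error together with sets $\lambda,\mu$ satisfying (i)--(iii). The remaining task is to handle the downstream webs that are not native, namely those containing propagated simple walks from $f^{\circ}$ whose sources are the cross-membrane cycles $\tilde{\ell}_i$ on $X$ buses (we are on the event that every such cycle has weight $<d_0d_1$ and hence crosses a single membrane).

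The plan is to enlarge each non-native web $w$ into an \emph{upstream-closed} web $\hat w$. We replace every propagated walk in $w|_{\mathsf{C}}$ by its source cycle $\tilde{\ell}_i$ on the $X$ bus; since $[\tilde{\ell}_i]\simeq[\tilde{\ell}_i^{\circ}]$, the error value is unchanged up to equivalence. We then rerun Algorithm~\ref{Alg: subroutine for analyzing a downstream fault web} with one extra case. If a closed core walk $\ell_j$ that triggers a level-1 core error contains propagated walks from $\tilde{\ell}_i$, we charge that level-1 error either to $\ell_j$ itself, using a native qubit error before the membrane, or to a qubit error of $\tilde{\ell}_i$ on the $X$ bus. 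In the second case the second bullet in Definition~\ref{def: receptive zone for X or Z error} places the charged error in the receptive zone, because $\tilde{\ell}_i$ has weight $<d_0d_1$ and therefore lies within the window $(t-d_0d_1,t]$ before the bus-core CNOT at time $t$. The sets $\lambda_f,\mu_f$ are then the unions over all webs and native bus walks of the per-component sets. Conditions (i) and (ii) are inherited componentwise, provided distinct components are charged to distinct primitive level-1 errors. Since the $\mathsf{e}_i$ are distinct primitive errors assigned in disjoint canonical segments, we only need to check that no level-1 error is counted twice across webs. This follows because different webs are unlinked components of $\mathcal{W}_{\mathsf{C}}$.

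For (iii) we repeat the weight counting of Lemma~\ref{lemma: bounds on a downstream web}, inserting one extra inequality for upstream cycles. By Lemma~\ref{lemma: propagation of cross-membrane cycle}, $|\tilde{\ell}_i^{\circ}|\leq|\tilde{\ell}_{i,||}|/2$, and MWPM on $\mathcal{G}_{\mathsf{B}_\lozenge}$ gives $|\tilde{\ell}_i\cap f_\kappa|\leq|\tilde{\ell}_i\cap f_\epsilon|$. Every propagated walk on a core can therefore be paid for by at least as many physical errors upstream. Substituting $|\tilde{\ell}_i^{\circ}|$ by $2|\tilde{\ell}_i\cap f_\epsilon|$ in the inequalities (\ref{eq: bound on lightning walk error})--(\ref{eq: bound on qubit errors in all cm core walks}) should preserve $|\mu\cap f_\epsilon|\geq d_0|\mathsf{w}|/2$ and $|\mu\cap f_\epsilon|\geq|\mu|/4$. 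The constants survive because the same factor-of-two slack used for propagated walks from cores to buses also applies here.

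The main obstacle is the lightning-walk bound $|\ell_i|\geq d_0|\mathsf{g}_i|$ when part of $\ell_i$ is propagated. The distance $d_0$ is then carried partly by propagated walks, which are not elements of $f$ and so cannot sit in $\mu_f$. I would address this by counting each propagated segment against its source cycle, using the factor $1/2$ from Lemma~\ref{lemma: propagation of cross-membrane cycle}. Because each source cycle crosses only one membrane, the propagated walks of one source appear in only one bus-core CNOT layer, so their total is charged at most once. Summing over all components then gives the stated bound $\max(|\mu_f|/4,d_0|\mathsf{f}|/2)$.
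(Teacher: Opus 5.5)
Your overall route matches the paper's Algorithm~\ref{Alg: subroutine for analyzing f}. You run Algorithm~\ref{Alg: subroutine for analyzing a downstream fault web} on every downstream web, native or not, and put the source cycles in $f|_{\mathsf{B}_{\lozenge}}$ into $\mu_f$. You charge a core error whose representative lies in a propagated walk to a qubit error of the source before the bus-core CNOT, using the second bullet of Definition~\ref{def: receptive zone for X or Z error}. You pay for propagated weight via $|\tilde{\ell}_i^{\circ}|\le|\tilde{\ell}_{i,||}|/2$ plus MWPM on $\mathcal{G}_{\mathsf{B}_{\lozenge}}$, which is how the paper gets $|\mu_{\mathrm{l}}|+|\mu_{\mathrm{m}}|\le 2|(\mu_f|_{\mathsf{C}}\cap f_{\epsilon})|+|(\mu_f|_{\mathsf{B}_{\lozenge}})|$.

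The step that fails is the cross-component bookkeeping. It is not true that unlinked webs induce distinct primitive level-1 errors, because linking only records shared closed walks on $Z$ buses. For example, suppose $f=A+B$, where $A$ and $B$ are disjoint logical $X$ strings on the same core in the same canonical segment, each split between physical and inferred errors. Then $A$ and $B$ are isolated vertices of $\mathcal{W}_{\mathsf{C}}$, hence separate webs, and both induce the same $\mathsf{e}_c$. Likewise, two unlinked webs, or a web and a native walk $\nu_i$, can each contain a lightning walk on the same $Z$ bus, which carries only one level-1 $X$ error location. Since $\mathsf{f}$ is a $\mathbb{Z}_2$-sum, $\mathsf{e}_c$ cancels, while your union $\lambda_f$ still holds two representatives for it. This violates (i), and discarding those representatives can then break (ii). The paper handles exactly this with the buffer $\mathsf{f}'$:
\begin{itemize}
\item form $\mathsf{f}$ as the $\mathbb{Z}_2$-sum of all per-component outputs;
\item pass through the components, assign each surviving $\mathsf{e}\in\mathsf{f}'$ to the first component producing it, and delete it from $\mathsf{f}'$;
\item add a component to $\mu_f$ only if it received an assignment, and apply the $d_0$ lower bounds (e.g.\ $|\ell_j|\ge d_0|\mathsf{g}_j|$) with $\mathsf{g}_j$ shrunk to the assigned errors.
\end{itemize}

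A related problem concerns sources. One cycle $\tilde{\ell}_i$ can send propagated walks into core walks of different, unlinked webs. So replacing a propagated walk by its whole source does not preserve the error value of a single web. Also, summing per-web bounds over enlarged webs $\hat{w}$ would count a shared source twice on the right-hand side but only once in $|\mu_f\cap f_{\epsilon}|$. The accounting for (iii) must therefore be global, with each source entering $\mu_f|_{\mathsf{B}_{\lozenge}}$ once. Your last paragraph points this way, and the single-membrane property is indeed what caps the total propagated weight at $|\tilde{\ell}_{i,||}|/2$.

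The representatives charged to one source must also be chosen distinct, as in the paper's requirement $\tilde{e}\notin\lambda_f$. Enough of them exist: each core walk charges at most one core error, and by the minimum-weight step of Algorithm~\ref{Alg: membrane crossing}, $\tilde{\ell}_i$ has at least $|\tilde{\ell}_i^{\circ}|$ qubit errors before its membrane.

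A minor slip: a native lightning walk on a $Z$ bus has weight $\ge d_0$, not $\ge d_0d_1$, because the logical $X$ of a $Z$ bus runs along the short edge. You only use $\ge d_0$, so this is harmless.
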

\begin{proof}
    \begin{algorithm}
        \let\oldnl\nl
        \newcommand{\nonl}{\renewcommand{\nl}{\let\nl\oldnl}}
       \caption{Subroutine for analyzing a residual level-0 error configuration}\label{Alg: subroutine for analyzing f}
       \KwData{\justifying \small A residual level-0 error configuration $f$. Registers $\mathsf{f}$, $\mu_{f}$, $\lambda_{f}$, $\mu_{\mathrm{l}}$, and $\mu_{\mathrm{m}}$, where $\mathsf{f}$ stores a level-1 error configuration, $\mu_f$ stores a level-0 error configuration, $\lambda_{f}$ stores a set of primitive level-0 errors, and $\mu_{\mathrm{l}}$ and $\mu_{\mathrm{m}}$ store sets of walks on cores. All these registers are initialized to be empty.}   
       \small \justifying
       We know that $f^{\circ}\cup f|_{\mathsf{C}}\cup f|_{\mathsf{B}_{\blacktriangledown}}$ is a disjoint union of downstream webs $\{w_1,\cdots,w_{o}\}$ and native walks $\{\nu_1,\cdots,\nu_r\}$ on $Z$ buses. Run Algorithm~\ref{Alg: subroutine for analyzing a downstream fault web} on each downstream web $w_i$; let $\mathsf{w}_i$ be the corresponding register that stores a level-1 error configuration. Add every $\mathsf{w}_{i}$ with $i\in\{1,\cdots,o\}$ to $\mathsf{f}$. Add level-1 bus errors induced by the native walks $\{\nu_{1},\cdots,\nu_{r}\}$ to $\mathsf{f}$. Create a buffer register $\mathsf{f}'$ as a copy of $\mathsf{f}$.

        \For{each native walk $\nu_i$ inducing a primitive level-1 bus error $\mathsf{e}_{b}$} {
        \nonl If $\mathsf{e}_{b}$ is not in $\mathsf{f'}$, then continue. Otherwise, update $\mu_f$ to $\mu_{f}\cup \nu_i$; find a qubit error $e$ in $\nu_{i}$ and add $e$ to $\lambda_{f}$.
        
        \nonl Remove $\mathsf{e}_{b}$ from $\mathsf{f}'$.
        }
       
        \For{each closed walk $\ell_j$ on a core in every downstream fault web $w_{i}$}{
        \nonl According to Algorithm~\ref{Alg: subroutine for analyzing a downstream fault web}, there are three registers $\mathsf{g}_{j}$, $\rho_j$, and $\gamma_{j}$ associated with $\ell_j$. If the register $\mathsf{g}_j$ contains no primitive level-1 error in $\mathsf{f}'$, then continue. 

        \nonl Update $\mu_{f}$ to $\mu_{f}\cup (\ell_j\cap {f}|_{\mathsf{C}})\cup(\rho_{j}\cap({f}|_{\mathsf{C}}\cup{f}|_{\mathsf{B}_{\blacktriangledown}}))$. Let $\tilde{\xi}_j\subset f|_{\mathsf{B}_{\lozenge}}$ be the set of all sources for simple walks in $\ell_{j}\cap {f}^{\circ}$. Update $\mu_f$ to $\mu_f\cup\tilde{\xi_j}$. 
        
        \nonl If $\ell_j$ is a lightning walk, update $\mu_{\mathrm{l}}$ to $\mu_{\mathrm{l}}\cup\ell_j$. If $\ell_j$ is a cross-membrane walk, update $\mu_{\mathrm{m}}$ to $\mu_{m}\cup\ell_j\cup(\rho_j\cap w_i|_{\mathsf{C}})$.

       \nonl \For{every primitive level-1 error $\mathsf{e}$ in $\mathsf{g}_j$ that is also in $\mathsf{f}'$}{
       \nonl Denote its corresponding primitive level-0 error in $\gamma_j$ as $e$. If $\mathsf{e}$ is a bus error, then add $e$ to ${\lambda}_f$. If $\mathsf{e}$ is a core error and $e\in\ell_j\cap{f}|_{\mathsf{C}}$, then add $e$ to ${\lambda_f}$. Otherwise, if $e\notin \ell_j\cap{f}|_{\mathsf{C}}$, there exists a closed walk $\tilde{\ell}\in\tilde{\xi}_j$, such that $\tilde{\ell}$ is the source of a simple walk in $\ell_j$ containing $e$. Suppose $e$ is on the core $\sigma$; let $\Lambda_{t}$ denote the hybrid-unit CNOT gate through which $\tilde{\ell}$ propagates the simple walk containing $e$, where $t$ is the level-0 time step of the gate. In this case, pick a qubit error $\tilde{e}\notin\lambda_{f}$ on $\tilde{\ell}$, such that $\tilde{e}$ is a qubit error that happens before $\Lambda_{t}$. Add $\tilde{e}$ to ${\lambda_f}$.   

        \nonl  Remove $\mathsf{e}$ from $\mathsf{f}'$.
       }
}
   \end{algorithm}

   We construct $\mathsf{f}$, $\mu_{f}$, and $\lambda_{f}$ explicitly by running Algorithm~\ref{Alg: subroutine for analyzing f}. We now bound the weight of physical errors in $f$. Following the definitions introduced in the proof of Lemma~\ref{lemma: bounds on a downstream web}, we let $\mathsf{f}_{\mathrm{lc}}$ denote the collection of level-1 errors in $\mathsf{f}$ contributed by lightning walks in $\mu_{\mathrm{l}}$; let $\mathsf{f}_{\mathrm{lb}}$ and $\mathsf{f}_{\mathrm{tb}}$ denote the collections of leading bus errors and trailing bus errors in $\mathsf{f}$, respectively; let $\mathsf{f}_{\mathrm{nb}}$ denote the collection of level-1 bus errors in $\mathsf{f}$ induced by native walks on $Z$ buses. According to Algorithm~\ref{Alg: subroutine for analyzing f}, $\mu_{\mathrm{l}}$ is a disjoint union of lightning walks on cores; $\mu_{\mathrm{m}}$ is a disjoint union of cross-membrane walks on cores. For errors in $\mu_{f}|_{\mathsf{B}_{\blacktriangledown}}$, we have the following two bounds following the same argument for Eq.~\ref{eq: bound on qubit errors in all cm core walks}. 
   \begin{align}\label{eq: bound f cm and downstream walks}
       |(\mu_{f}|_{\mathsf{B}_{\blacktriangledown}})|\leq& 2|(\mu_{f}|_{\mathsf{B}_{\blacktriangledown}}\cap f_{\epsilon})| + |\mu_{\mathrm{m},||}|/2, \nonumber \\
       d_0(|\mathsf{f}_{\mathrm{lb}}|+|\mathsf{f}_{\mathrm{nb}}|)\leq& 2|(\mu_{f}|_{\mathsf{B}_{\blacktriangledown}}\cap f_{\epsilon})| + |\mu_{\mathrm{m},||}|.
   \end{align}
Following Eq.~\ref{eq: bound on lightning walk error} and Eq.~\ref{eq: lower bound stab error in cm walks}, we have similar bounds on $|\mu_{\mathrm{m},\perp}|$ and $|\mu_{\mathrm{l}}|$:
\begin{align}\label{eq: bound f cm walks with lv1 errors}
    d_0|\mathsf{f}_{\mathrm{tb}}|\leq& |\mu_{\mathrm{m},\perp}| \nonumber\\
    d_0|\mathsf{f}_{\mathsf{lc}}|\leq& |\mu_{\mathrm{l}}|
\end{align}
As we are eventually interested in bounding error weights for $\mu_{f}$, we would like to relate bounds on $\mu_{\mathrm{l}}$ and $\mu_{\mathrm{m}}$ to bounds on $\mu_{f}$. Since the total weight of propagated simple walks on cores from $\mu_{f}$ is upper bounded by $|(\mu_f|_{\mathsf{B}_{\lozenge}})|/2$, we have the following two inequalities:
\begin{align}
    |\mu_{\mathrm{l}}|+|\mu_{\mathrm{m}}|\leq& |(\mu_f|_{\mathsf{C}})| + |(\mu_f|_{\mathsf{B}_{\lozenge}})|/2,\nonumber \\
    |(\mu_{f}|_{\mathsf{C}}\cap f_{\kappa})| \leq& |(\mu_{f}|_{\mathsf{C}}\cap f_{\epsilon})| + |(\mu_f|_{\mathsf{B}_{\lozenge}})|/2,
\end{align}
which can be reorganized into the following bounds:
\begin{align}\label{eq: bound f cm walks}
    |(\mu_{f}|_{\mathsf{C}})|\leq& 2|(\mu_{f}|_{\mathsf{C}}\cap f_{\epsilon})|+|(\mu_{f}|_{\mathsf{B}_{\lozenge}})|/2,\nonumber\\
    |\mu_{\mathrm{l}}|+|\mu_{\mathrm{m}}|\leq& 2|(\mu_{f}|_{\mathsf{C}}\cap f_{\epsilon})|+|(\mu_{f}|_{\mathsf{B}_{\lozenge}})|.
\end{align}
Also, MWPM decoding on $\mathcal{G}_{\mathsf{B}_{\lozenge}}$ implies $|(\mu_{f}|_{\mathsf{B}_{\lozenge}})|\leq 2|(\mu_{f}|_{\mathsf{B}_{\lozenge}}\cap f_{\epsilon})|$. Thus, we can bound the weight of physical errors in $\mu_f$ by the weight of the level-1 error configuration $\mathsf{f}$ as follows.
\begin{align}
    |\mu_f\cap f_{\epsilon}|=&  |(\mu_{f}|_{\mathsf{B}_{\lozenge}}\cap f_{\epsilon})| + |(\mu_{f}|_{\mathsf{C}}\cap f_{\epsilon})| + |(\mu_{f}|_{\mathsf{B}_{\blacktriangledown}}\cap f_{\epsilon})|\nonumber\\
    \geq& |(\mu_{f}|_{\mathsf{B}_{\lozenge}})|/2 + |(\mu_{f}|_{\mathsf{C}}\cap f_{\epsilon})| + |(\mu_{f}|_{\mathsf{B}_{\blacktriangledown}}\cap f_{\epsilon})|\nonumber\\
    \geq& (|\mu_{\mathrm{l}}|+|\mu_{\mathrm{m}}|)/2+ |(\mu_{f}|_{\mathsf{B}_{\blacktriangledown}}\cap f_{\epsilon})|\nonumber\\
    \geq&\frac{d_0}{2}(|\mathsf{f}_{\mathrm{lc}}|+|\mathsf{f}_{\mathrm{lb}}|+|\mathsf{f}_{\mathrm{tb}}|+|\mathsf{f}_{\mathrm{nb}}|) = d_0|\mathsf{f}|/2,
\end{align}
where we use Eq.~\ref{eq: bound f cm walks} for the second inequality, and Eq.~\ref{eq: bound f cm and downstream walks} and Eq.~\ref{eq: bound f cm walks with lv1 errors} for the last inequality. 
Similarly, we now bound $|\mu_{f}\cap f_{\epsilon}|$ in terms of the weight of $\mu_{f}$.
\begin{align}
    |\mu_{f}\cap f_{\epsilon}|\geq& \big(|(\mu_{f}|_{\mathsf{B}_{\lozenge}})|/4 + |(\mu_{f}|_{\mathsf{C}})|/4+|\mu_{\mathrm{m}}|/8 \nonumber\\
    & + |(\mu_{f}|_{\mathsf{B}_{\blacktriangledown}}\cap f_{\epsilon})|\nonumber\big) \\
    \geq& (|(\mu_{f}|_{\mathsf{B}_{\lozenge}})|+|(\mu_{f}|_{\mathsf{C}})|+|(\mu_{f}|_{\mathsf{B}_{\blacktriangledown}})|)/4=|\mu_{f}|/4.
\end{align}
\end{proof}

We previously considered only level-0 $X$ error configurations in this subsection. Now, we allow $f$ to be a general level-0 residual error configuration with $f=\epsilon+\kappa$, where $\epsilon\in\mathbb{Z}_2^{|\mathcal{E}_{\mathrm{dep}}|}$ is the physical error term and $\kappa\in\mathbb{Z}_2^{|\mathcal{E}_{0}|}$ is the inferred error configuration. Let $f_{\epsilon}:=f\cap \epsilon$ and $f_{\kappa}:=f\cap\kappa$, representing physical errors and inferred errors in $f$, respectively. Consider the $X$-basis projection $f_{X}=\epsilon_{X}+\kappa_{X}$. In previous arguments, $f_{X,\epsilon}:=f_{X}\cap\epsilon_{X}$ and $f_{X,\kappa}:=f_{X}\cap\kappa_{X}$ represent physical errors and inferred errors in $f_{X}$, respectively. We define similar notations for $f_Z$. To relate subconfigurations in $f_X\cup f_{Z}$ to subconfigurations in $f$, we construct a section map $\mathbb{S}_{f}$ that maps every subset (subconfiguration) in $f_{X}\cup f_{Z}$ to a subset (subconfiguration) in $f$ with the following four properties.
\begin{enumerate}
    \item For each subconfiguration $g\subset f_{X}\cup f_{Z}$, $\mathbb{S}_{f}(g)=\bigcup_{e\in g} \mathbb{S}_{f}(e)$ with each $e$ a primitive level-0 error in $g$.
    \item For each primitive error $e_{X}\in f_{X}$, $\mathbb{P}_{X}\circ \mathbb{S}_f(e_X)=e_X$. Similarly, for each primitive error $e_{Z}\in f_{Z}$, $\mathbb{P}_{Z}\circ \mathbb{S}_f(e_Z)=e_Z$.
    \item For every $g_X\subset f_{X,\epsilon}$, $\mathbb{S}_{f}(g_X)\subset f_{\epsilon}$. For every $g_X\subset f_{X,\kappa}$, $\mathbb{S}_f(g_X)\subset f_{\kappa}$.
    \item For every $g_{Z}\subset f_{Z,\epsilon}$, $\mathbb{S}_{f}(g_Z)\subset f_{\epsilon}$. For every $g_Z\subset f_{Z,\kappa}$, $\mathbb{S}_f(g_Z)\subset f_{\kappa}$.
\end{enumerate}
By construction, we know that given a subconfiguration $g\subset f_{X}\cup f_{Z}$, $|\mathbb{S}_f(g)\cap \epsilon|\geq \max(|g_X\cap \epsilon_X|,|g_{Z}\cap\epsilon_{Z}|)$. Additionally, if $g$ forms a connected cluster on the level-0 adjacency graph, then $\mathbb{S}_{f}(g)$ also forms a connected cluster. 
Building on Lemma~\ref{lemma: bounds on an error config}, we can bound a general level-0 residual error configuration.   

\begin{lemma}
    [Bounds on a general level-0 residual error configuration]\label{lemma: bounds on a general residual error config}
    Consider the general level-0 residual error configuration $f$. We can explicitly construct a level-1 $X$ error $\mathsf{f_X}=\mathsf{e}_1+\mathsf{e}_2+\cdots+\mathsf{e}_{|\mathsf{f}|}$, where each $\mathsf{e}_{i}$ is a distinct primitive level-1 $X$ error, such that $f_X$ induces the level-1 error $\mathsf{f_X}$. Moreover, we can find two general level-0 error configurations $\lambda_f\subset\mu_f\subset f$ such that (i) $\lambda_f=e_1+\cdots+e_{|\mathsf{f_X}|}$ with each $e_{i}$ a distinct primitive level-0 error whose $X$-basis projection is in the receptive zone of $\mathsf{e}_{i}$, (ii) each connected component of $\mu_f$ (on the level-0 adjacency graph) contains at least one element in $\lambda_f$, and (iii) the weight of physical errors in $\mu_f$, $|\mu_f\cap \epsilon|$, is lower bounded by $\max(|\mu_f|/4,d_0|\mathsf{f_X}|/2)$.
\end{lemma}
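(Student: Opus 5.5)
The plan is to reduce to Lemma~\ref{lemma: bounds on an error config} applied to the $X$-basis projection $f_X=\epsilon_X+\kappa_X$, and then lift the resulting objects back to $f$ with the section map $\mathbb{S}_f$.

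First, I would check that $f_X$ is a legitimate input for Lemma~\ref{lemma: bounds on an error config}. Algorithm~\ref{Alg: level-0 decoding subroutine} outputs $\kappa\in\mathbb{Z}_2^{|\mathcal{E}_{0}|}$, and the decoding graphs separate the $X$ and $Z$ sectors: $X$ detectors only see $Z$ errors and vice versa. Hence $\partial_0\kappa_X$ equals the $Z$-detector part of $\eta$, which is $\partial_0\epsilon_X$. So $f_X$ is a syndromeless configuration consisting only of $X$ errors, and it is exactly the object treated in Lemma~\ref{lemma: bounds on an error config}, with $f_{X,\epsilon}$ as its physical part and $f_{X,\kappa}$ as its inferred part.

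Care is needed with the MWPM optimality inequalities used there (for example $|l_b\cap w_\kappa|\le|l_b\cap w_\epsilon|+\dots$ and $|(\mu_f|_{\mathsf{B}_\lozenge})|\le 2|(\mu_f|_{\mathsf{B}_\lozenge}\cap f_\epsilon)|$). The matching in each basis is run on the graph whose edges are the $X$ (resp.\ $Z$) components. A physical $Y$ error contributes one edge to each sector, so each sector's matching is still optimal against the projected physical errors $\epsilon_X$. I expect this to be the main point that needs care. In particular, the correlated reweighting must not be used here, since the theory uses the plain Algorithm~\ref{Alg: level-0 decoding subroutine}.

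Second, I would apply Lemma~\ref{lemma: bounds on an error config} to $f_X$. This produces the level-1 $X$ error $\mathsf{f_X}=\mathsf{e}_1+\cdots$ induced by $f_X$, together with sets $\lambda'\subset\mu'\subset f_X$. These satisfy: each $e_i'\in\lambda'$ lies in the receptive zone of $\mathsf{e}_i$; every connected component of $\mu'$ contains some $e_i'$; and $|\mu'\cap f_{X,\epsilon}|\ge\max(|\mu'|/4,\,d_0|\mathsf{f_X}|/2)$.

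Third, I would set $\mu_f:=\mathbb{S}_f(\mu')$ and $\lambda_f:=\mathbb{S}_f(\lambda')$, where $\mathbb{S}_f(e_i')=:e_i$. I would then verify the three conditions in turn.

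Condition (i). Property 2 of $\mathbb{S}_f$ gives $\mathbb{P}_X(e_i)=e_i'$, so the $X$-projection of $e_i$ lies in the correct receptive zone. Distinctness needs that $\mathbb{S}_f$ is injective on $f_X$. This holds because each primitive $X$ error of $f_X$ is the projection of a unique element of $f$ after cancellations. If needed, I would choose $\mathbb{S}_f$ to be injective when constructing it, taking preimages in $\epsilon$ or $\kappa$ according to property 3.

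Condition (ii). As noted before the lemma, $\mathbb{S}_f$ maps connected clusters to connected clusters, so each connected component of $\mu_f$ is the image of components of $\mu'$, and each of those contains some $e_i'$. One subtlety: two lifted components might merge. That is harmless, since the merged component still contains an element of $\lambda_f$.

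Condition (iii). From the inequality stated before the lemma, $|\mathbb{S}_f(\mu')\cap\epsilon|\ge|\mu'\cap\epsilon_X|\ge d_0|\mathsf{f_X}|/2$. Next, $|\mu_f|\le|\mu'|$, because $\mathbb{S}_f$ maps primitive errors to single primitive errors. Therefore $|\mu_f\cap\epsilon|\ge|\mu'|/4\ge|\mu_f|/4$. Taking the maximum of the two bounds gives (iii).
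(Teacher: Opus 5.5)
Your proposal is correct and takes the same route as the paper, which applies Lemma~\ref{lemma: bounds on an error config} to $f_X$ and then sets $\lambda_f:=\mathbb{S}_f(\lambda_{f_X})$ and $\mu_f:=\mathbb{S}_f(\mu_{f_X})$. Your extra checks spell out what the paper's two-line proof leaves implicit: that $f_X$ is the residual of the $X$-sector decoding of $\epsilon_X$ alone, that $\mathbb{S}_f$ sends each primitive error to a single primitive error injectively, and that it preserves connectivity and the physical-error count.
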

\begin{proof}
    According to Lemma~\ref{lemma: bounds on an error config}, we construct $\mathsf{f}_{\mathsf{X}}$ from $f_X$ and find $\lambda_{f_X}\subset\mu_{f_X}\subset f_{X}$ satisfying all three conditions in Lemma~\ref{lemma: bounds on an error config}. Let $\lambda_f:=\mathbb{S}_{f}(\lambda_{f_X})$ and $\mu_{f}:=\mathbb{S}_{f}(\mu_{f_X})$. Then, these two general error configurations satisfy all requirements in this lemma.   
\end{proof}

We follow previous fault-tolerance proofs~\cite{panos_threshold,gottesman_fault_tolerant_2014,he_composable_2025} (in particular, the `bad set' formalism in Ref.~\cite{he_composable_2025}) to provide a sufficient condition for upper bounding level-1 error probabilities. 
\begin{lemma}
    [Sufficient condition for upper bounding a level-1 error probability]\label{lemma: sufficient condition for bounding level 1 error}
    Consider a level-1 error $\mathsf{g}$. Suppose there exists a collection of general level-0 error configurations $\mathcal{F}_{\mathsf{g}}$, referred to as the \textit{key collection} of $\mathsf{g}$, that satisfies the following conditions
    \begin{enumerate}
        \item For each general residual level-0 error configuration $f$ (composed of two components, physical errors $f_{\epsilon}$ and decoded errors $f_{\kappa}$, respectively) that induces a level-1 error configuration containing $\mathsf{g}$, there exists $\mu\in\mathcal{F}_{\mathsf{g}}$ with $\mu\subset f$ such that the weight of the intersection of $\mu$ and the physical errors in $f$ is lower bounded by both $d_{0}|\mathsf{g}|/2$ and $|\mu|/C_1$, where $C_1>1$ is a constant independent of $f$. 
        \item The number of error configurations in $\mathcal{F}_{\mathsf{g}}$ with a weight $\mathrm{w}$ is upper bounded by $C_{2}\cdot C_{3}^{\mathrm{w}}$, where both $C_2>0$ and $C_{3}>1$ are constants. Every error configuration in $\mathcal{F}_{\mathsf{g}}$ has a weight at least $d_{0}|\mathsf{g}|/2$. 
    \end{enumerate}
    Moreover, suppose physical level-0 errors are local stochastic, such that the event of physical errors containing a general level-0 error configuration $\epsilon$ may occur with a probability upper bounded by $p^{|\epsilon|}$, with $p\leq 1/(4C_3)^{C_1}$. Then, the level-1 error $\mathsf{g}$ occurs with probability upper bounded by $C_A(p/p_{th})^{d_{0}|\mathsf{g}|/2}$, where $C_{A}:=3C_2C_3/(C_3-1)$ and $p_{th}:=(1/(2C_3)^{C_1})$.
\end{lemma}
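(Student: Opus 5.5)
The plan is a union bound over the key collection, in the spirit of the bad-set counting in Refs.~\cite{gottesman_fault_tolerant_2014,he_composable_2025}. By condition 1, the event ``$\mathsf{g}$ is contained in the induced level-1 error'' is contained in the union over $\mu\in\mathcal{F}_{\mathsf{g}}$ of the events $A_\mu$. Here $A_\mu$ is the event that the residual configuration $f$ contains $\mu$, and that $|\mu\cap f_\epsilon|\geq\max(|\mu|/C_1,d_0|\mathsf{g}|/2)$. For fixed $\mu$ this occurs only if the physical error $\epsilon$ contains some subset $\nu\subset\mu$ with $|\nu|\geq|\mu|/C_1$; this $\nu$ is exactly the set $\mu\cap f_\epsilon$, since $f_\epsilon\subset\epsilon$. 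The inferred part $f_\kappa$ is a deterministic function of the syndrome and carries no independent probability, so it is absorbed by enumerating $\nu$ rather than conditioning on $\kappa$. Hence $\Pr[A_\mu]\leq\sum_{\nu\subset\mu,\,|\nu|\geq|\mu|/C_1}p^{|\nu|}\leq 2^{|\mu|}p^{|\mu|/C_1}$ by local stochasticity.

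Next we sum over weights. Let $\mathrm{w}=|\mu|$. Condition 2 gives at most $C_2C_3^{\mathrm{w}}$ configurations of weight $\mathrm{w}$, and $\mathrm{w}\geq w_0:=d_0|\mathsf{g}|/2$. The total probability is therefore at most
\begin{equation*}
\sum_{\mathrm{w}\geq w_0}C_2C_3^{\mathrm{w}}2^{\mathrm{w}}p^{\mathrm{w}/C_1}=C_2\sum_{\mathrm{w}\geq w_0}\left(2C_3p^{1/C_1}\right)^{\mathrm{w}} .
\end{equation*}
Writing $p^{1/C_1}=(p/p_{th})^{1/C_1}\cdot\frac{1}{2C_3}$ gives $2C_3p^{1/C_1}=(p/p_{th})^{1/C_1}=:q$. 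The hypothesis $p\leq(4C_3)^{-C_1}$ forces $q\leq 1/2<1$, so the geometric tail is bounded by $q^{w_0}/(1-q)\leq 2q^{w_0}$.

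This yields $2C_2(p/p_{th})^{w_0/C_1}$, but the target exponent is $w_0=d_0|\mathsf{g}|/2$, not $w_0/C_1$. That gap is the main obstacle. The fix is to use both lower bounds in condition 1 separately. Split the exponent as $p^{|\nu|}\leq p^{w_0}\cdot p^{\max(0,|\nu|-w_0)}$ and note that $|\nu|\geq\max(w_0,\mathrm{w}/C_1)$.

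Then bound the sum as $p^{w_0}\sum_{\mathrm{w}}C_2(2C_3)^{\mathrm{w}}p^{\max(0,\mathrm{w}/C_1-w_0)}$. The worry is that for $w_0\leq \mathrm{w}\leq C_1w_0$ the factor $(2C_3)^{\mathrm{w}}$ is unmatched by any power of $p$. To control it, I would compare against $p_{th}^{-w_0}$, which has a large base since $p_{th}^{-1}=(2C_3)^{C_1}$. Over that range $(2C_3)^{\mathrm{w}}\leq(2C_3)^{C_1w_0}=p_{th}^{-w_0}$, but the number of terms still grows with $w_0$. Summing the $C_3^{\mathrm{w}}$ counts geometrically, while bounding the $2^{\mathrm{w}}$ subset-choice factor against the slack between $4C_3$ and $2C_3$ supplied by the hypothesis, is what should produce the prefactor $C_A=3C_2C_3/(C_3-1)$, with the $C_3/(C_3-1)$ coming from the geometric sum. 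For $\mathrm{w}> C_1w_0$ the decaying factor $q^{\mathrm{w}-C_1w_0}$ makes the tail a convergent geometric series.

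The delicate bookkeeping is exactly this constant-chasing at the crossover $\mathrm{w}\approx C_1w_0$, and I would expect to adjust how the $2^{\mathrm{w}}$ subset factor is absorbed to land on the stated $C_A$ and $p_{th}$.
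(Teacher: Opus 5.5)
Your final strategy (a union bound over $\mu\in\mathcal{F}_{\mathsf{g}}$, at most $2^{|\mu|}$ choices for the physical subset $\nu=\mu\cap f_\epsilon$, the exponent $\max(w_0,\mathrm{w}/C_1)$ with $w_0=d_0|\mathsf{g}|/2$, and a split of the weight sum at $\mathrm{w}=C_1w_0$) is essentially the paper's proof, so the approach is right.

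The step you leave as ``delicate bookkeeping'' is not actually an obstacle. A geometric sum is controlled by its largest term, so
\[
C_2p^{w_0}\sum_{w_0\le\mathrm{w}\le C_1w_0}(2C_3)^{\mathrm{w}}\le \frac{2C_3}{2C_3-1}\,C_2\bigl((2C_3)^{C_1}p\bigr)^{w_0}=\frac{2C_3}{2C_3-1}\,C_2\,(p/p_{th})^{w_0},
\]
and the growing number of terms costs nothing.

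The hypothesis $p\le(4C_3)^{-C_1}$ is needed only in the tail $\mathrm{w}>C_1w_0$. There it makes the ratio $2C_3p^{1/C_1}\le 1/2$, which bounds the tail by $2C_2(p/p_{th})^{w_0}$. It plays no role in absorbing the $2^{\mathrm{w}}$ subset factor, which is already built into $p_{th}=(2C_3)^{-C_1}$.

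The paper arranges the same computation slightly differently. It first writes $2^{|\mu|}p^{\max(w_0,|\mu|/C_1)}\le(2^{C_1}p)^{\max(w_0,|\mu|/C_1)}$ and then sums $C_3^{\mathrm{w}}$ geometrically over the middle range, which is where its factor $C_3/(C_3-1)$ comes from. Either way the total is at most $C_2(p/p_{th})^{w_0}\bigl(\tfrac{C_3}{C_3-1}+2\bigr)\le C_A(p/p_{th})^{w_0}$.
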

\begin{proof}
Define event $\Upsilon_{\mathsf{g}}$ as the set of all general level-0 physical-error configurations for which the physical-error configuration and the corresponding inferred level-0 error configuration jointly induce a level-1 error configuration containing $\mathsf{g}$. Therefore, the probability that $\mathsf{g}$ occurs at level 1 is equal to the level-0 probability of $\Upsilon_{\mathsf{g}}$. 

From the key collection $\mathcal{F}_{\mathsf{g}}$, we define the set of key errors $\Xi:=\{\xi|\xi\subset \mu\ \text{for some}\  \mu\in\mathcal{F}_{\mathsf{g}}\ \text{with}\ |\xi|\geq d_0|\mathsf{g}|\ \text{and}\ |\xi|\geq |\mu|/C_1\}$. For a general level-0 error configuration $\xi$, define $\hat{\xi}:=\{f|\xi\subset f\}$ as the collection of all general level-0 error configurations containing $\xi$. For any $\epsilon\in\Upsilon_{\mathsf{g}}$, let $\kappa$ be the inferred level-0 error configuration from level-0 decoding with an input level-0 syndrome configuration $\partial_{0}\epsilon$. According to the first condition in the lemma, there exists $\xi\in\Xi$ such that $\xi\subset\epsilon$ (or equivalently, $\epsilon\in\hat{\xi}$). Thus, the event $\Upsilon_{\mathsf{g}}$ is contained in the event $\hat{\Xi}:=\bigcup_{\xi\in\Xi}\hat{\xi}=\bigcup_{\mu\in\mathcal{F}_{\mathsf{g}}}\left(\bigcup_{\xi\in\Xi,\ \xi\subset\mu}\hat{\xi}\right)$. Notice that for every $\mu\in\mathcal{F}_{\mathsf{g}}$, the probability
\begin{align}
    P\left(\bigcup_{\xi\in\Xi,\ \xi\subset\mu}\hat{\xi}\right)\leq& \sum_{\xi\in\Xi,\ \xi\subset\mu}p^{\max(d_0|\mathsf{g}|/2,|\mu|/C_1)} \nonumber\\
    \leq& (2^{C_1}p)^{\max(d_0|\mathsf{g}|/2,|\mu|/C_1)},
\end{align}
where the second inequality is based on the observation that the number of summations is upper bounded by $2^{|\mu|}$. Therefore, we can bound the level-1 error probability $P(\Upsilon_{\mathsf{g}})$ as follows: 
\begin{align}
    P(\Upsilon_{\mathsf{g}})\leq& P(\hat{\Xi})\leq \sum_{\mu\in\mathcal{F}_{\mathsf{g}}} (2^{C_1}p)^{\max(d_0|\mathsf{g}|/2,|\mu|/C_1)}\nonumber\\
    \leq& \sum_{\mathrm{w}=\lceil d_0|\mathsf{g}|/2\rceil}^{\infty} (2^{C_1}p)^{\max(d_0|\mathsf{g}|/2,\mathrm{w}/C_1)}\cdot C_2C_3^{\mathrm{w}}\nonumber\\
    \leq& \frac{3C_2C_3}{C_3-1}\left(\frac{p}{1/(2C_3)^{C_1}}\right)^{d_{0}|\mathsf{g}|/2}.
\end{align}
Here, the third inequality follows from the second condition in the lemma, and the last inequality uses $p\leq1/(4C_3)^{C_1}$. 
\end{proof}
We will need the following lemma from Ref.~\cite{panos_threshold} (widely used in fault-tolerance proofs to count the number of errors) to upper bound the number of elements of a certain weight in a set of error configurations.  
\begin{lemma}
    [Counting lemma~\cite{panos_threshold}]\label{lemma: counting lemma} Consider a graph $G$ in which every vertex has a degree of at most $r$. Given a non-empty set $\lambda$ of vertices in $G$, let $\Sigma_{m}(\lambda)$ be the collection of all sets of vertices satisfying the following conditions. 
    \begin{itemize}
        \item The set contains exactly $m$ vertices. 
        \item The set contains $\lambda$ as a subset. 
        \item Every connected cluster of the set (on $G$) contains at least one element in $\lambda$. 
    \end{itemize}
Then, the number of sets in $\Sigma_{m}(\lambda)$ is upper bounded by $\mathfrak{e}^{|\lambda|-1}(r\mathfrak{e})^{m-|\lambda|}$, where $\mathfrak{e}$ is Euler's number.  
\end{lemma}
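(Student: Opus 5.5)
The counting lemma is the standard lattice-animal bound, and I will prove it by injecting each set $S\in\Sigma_m(\lambda)$ into a combinatorial object that is easy to count. First, let $\lambda=\{v_1,\dots,v_k\}$ with $k=|\lambda|$ and fix this ordering. Every connected cluster of $S$ contains at least one element of $\lambda$. So I assign each vertex of $S$ to the cluster containing it, and I mark each cluster by its smallest-index element of $\lambda$. Then $S$ decomposes as a disjoint union of connected sets $S_1,\dots,S_k$, where $S_i$ is the cluster rooted at $v_i$ if $v_i$ is the marker of its cluster and $S_i=\{v_i\}$ otherwise. A non-marker element of $\lambda$ would be double counted, so I instead take the cluster of marker $v_i$ as $S_i$ and add the other $\lambda$-vertices in it as ``free'' members. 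The cleanest version forgets the partition: I count tuples $(m_1,\dots,m_k)$ of sizes and connected sets $T_i\ni v_i$ with $|T_i|=m_i$, where $\sum_i(m_i-1)\geq m-k$ is replaced by an exact split. Concretely, I build $T_i$ as a spanning-tree-grown cluster of $S$ rooted at $v_i$ that contains exactly the vertices of $S$ assigned to $v_i$. Here every vertex of $S\setminus\lambda$ is assigned to the marker of its cluster, and each non-marker $\lambda$-vertex forms its own singleton. Then $S=\bigcup_i T_i$, the $T_i$ are disjoint, each $T_i$ is connected in $G$ (a non-marker singleton is trivially connected, and a marker's set is its cluster minus other $\lambda$ vertices, which I handle by keeping them inside $T_i$ and letting their own $T_j$ be empty-extension singletons already included; I will simply allow $T_i$ to contain other $\lambda$ vertices and define $m_i=|T_i\setminus(\lambda\setminus\{v_i\})|$). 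Finally $\sum_i (m_i-1)=m-k$.

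Second, I bound the number of connected sets of size $s$ containing a fixed root in a graph of maximum degree $r$ by $(\mathfrak{e}r)^{s-1}$. This is the classical Klarner/tree-counting bound: encode the set by a depth-first traversal of a spanning tree, giving a walk of length $2(s-1)$ with each step chosen among at most $r$ neighbors or a backtrack. Equivalently, I count rooted subtrees of the infinite $r$-ary tree with $s$ nodes, which gives $\frac{1}{(r-1)s+1}\binom{rs}{s}\le (\mathfrak{e}r)^{s-1}$ by $\binom{n}{k}\le (n\mathfrak{e}/k)^k$ together with some care for the leading factor. Since the presence of extra $\lambda$-vertices inside $T_i$ only restricts choices, the bound for fixed sizes is $\prod_i(\mathfrak{e}r)^{m_i-1}=(\mathfrak{e}r)^{m-k}$.

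Third, I sum over compositions: the number of ways to choose nonnegative $(m_i-1)$ summing to $m-k$ is $\binom{m-1}{k-1}$. The target bound is $\mathfrak{e}^{k-1}(r\mathfrak{e})^{m-k}$, so I must absorb $\binom{m-1}{k-1}$ into slack. I would get the slack by sharpening the tree bound to $r^{s-1}\binom{\cdot}{\cdot}$-type estimates, or by counting directly a single spanning forest of $k$ trees rooted at $\lambda$: the number of rooted spanning forests of total size $m$ embedded via $r$-ary choices is $\frac{k}{rm'+k}\binom{rm'+k}{m'}$ with $m'=m-k$ (the forest analogue of the Fuss–Catalan count). Then $\binom{rm'+k}{m'}\le(\mathfrak{e}(rm'+k)/m')^{m'}$. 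With the prefactor this can be arranged to give at most $\mathfrak{e}^{k-1}(r\mathfrak{e})^{m'}$, using $(1+k/(rm'))^{m'}\le \mathfrak{e}^{k/r}\le \mathfrak{e}^{k}$ and the prefactor $k/(rm'+k)\le 1$, absorbing one factor of $\mathfrak{e}$.

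The main obstacle is this last constant bookkeeping: the forest-counting formula and the exponential estimate must be combined to produce exactly $\mathfrak{e}^{|\lambda|-1}$ rather than $\mathfrak{e}^{|\lambda|}$. I would first try the Fuss–Catalan forest formula, and fall back on the looser $\mathfrak{e}^{|\lambda|}$ if needed. The looser constant is harmless for the later use in Lemma~\ref{lemma: sufficient condition for bounding level 1 error}, where only the form $C_2C_3^{\mathrm{w}}$ matters.
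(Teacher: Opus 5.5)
\textbf{There is a genuine gap: neither route you sketch actually delivers the bound} $\mathfrak e^{|\lambda|-1}(r\mathfrak e)^{m-|\lambda|}$. The paper gives no proof of its own here; it imports the lemma from the cited threshold literature. So the benchmark is the standard tree-counting argument.

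\textbf{First route.} Your assertion that extra $\lambda$-vertices inside $T_i$ ``only restrict choices'' is false. A connected $T_i\ni v_i$ with $m_i-1$ vertices outside $\lambda$ can pass through many other $\lambda$-vertices, and the number of such sets is governed by $|T_i|$, not by $m_i$. For example, let $\lambda$ be a path $v_1\cdots v_k$ and $S=\lambda\cup\{u\}$. Then $m_1=2$, but $u$ can be any of roughly $kr$ neighbors of $\lambda$, which exceeds $\mathfrak e r$ once $k>\mathfrak e$. Even granting the product bound, the composition factor $\binom{m-1}{k-1}$ grows like $m^{k-1}$ and cannot be absorbed into $\mathfrak e^{k-1}$. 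This route cannot be repaired by bookkeeping.

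\textbf{Second route.} The skeleton is right. A multi-source BFS from $\lambda$ inside $S$ reaches all of $S$ precisely because every cluster meets $\lambda$. This gives a spanning forest with one tree rooted at each $v_i$. Record, for each non-root vertex, its index in its parent's neighbor list (at most $r$ entries). This injects $\Sigma_m(\lambda)$ into ordered forests of $k$ trees with $m$ nodes in total, where every node, roots included, has $r$ child slots.

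However, your count of these forests is wrong. Their number is $[x^m]T(x)^k$ with $T=x(1+T)^r$, which by Lagrange inversion equals $\frac{k}{m}\binom{rm}{m-k}$. Your $\frac{k}{rm'+k}\binom{rm'+k}{m'}$ corresponds to giving each root a single slot, and it is not even an upper bound: for $k=1$, $m=2$ it gives $1$ instead of $r$.

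Consequently the constant bookkeeping you describe does not hold. With the correct count, the estimate $\binom{n}{j}\le(\mathfrak e n/j)^j$ yields only $\frac{k}{m}\mathfrak e^{k}(r\mathfrak e)^{m-k}$. This is below the target only when $m\ge\mathfrak e k$, though it does give your fallback $\mathfrak e^{|\lambda|}$, since $k/m\le 1$. To reach $\mathfrak e^{k-1}$ you must keep the polynomial prefactor. Set $n:=m-k\ge 2$ and use $\binom{rm}{n}\le (rm)^n/n!$ together with $n!\ge\sqrt{2\pi n}\,(n/\mathfrak e)^n$:
\[
\frac{k}{m}\binom{rm}{n}\le\frac{k}{m\sqrt{2\pi n}}\Bigl(1+\frac{k}{n}\Bigr)^{n}(r\mathfrak e)^{n}=\frac{k\,\mathfrak e}{m\sqrt{2\pi n}}\cdot\frac{(1+k/n)^{n}}{\mathfrak e^{k}}\cdot\mathfrak e^{k-1}(r\mathfrak e)^{n}\le\mathfrak e^{k-1}(r\mathfrak e)^{n}.
\]
The last step uses $(1+k/n)^n\le\mathfrak e^{k}$, $k\le m$, and $\mathfrak e<\sqrt{4\pi}$. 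The remaining cases are checked directly: $n=0$ gives count $1\le\mathfrak e^{k-1}$, and $n=1$ gives count $kr\le\mathfrak e^{k}r$.
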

We note that for our level-0 adjacency graph $\mathcal{A}$, the degree of any vertex is upper bounded by $48$, since each primitive level-0 error triggers up to $4$ level-0 detectors, and each level-0 detector can be triggered by up to $12$ distinct level-0 error locations.  

\begin{theorem}
    [Approximate error reduction theorem]\label{thm: approximate error reduction}
    Consider a level-0 circuit composed of level-0 SE rounds on cores and shuttle buses and X- and $Z$-basis readout gadgets. Every core has a distance of $d_{0}$; every shuttle bus has a width of $d_{0}$ and a length of $d_{0}d_{1}$. Compilation parameter $\alpha_{\mathsf{b}}$ is set as $d_1$. We further assume perfect time boundaries for cores for simplicity. Suppose we are only interested in the results of a subset $\mathcal{M}_1$ of readout gadgets, such that the separation between every two $X$-basis (or $Z$-basis) readout gadgets in $\mathcal{M}_1$ is at least $d_{0}$. We construct level-1 error locations for $\mathcal{M}_1$ according to the procedure in Sec.~\ref{subsec: detectors and decoding}. Let $\tau_{\mathsf{c}}$ denote the maximum length for a canonical segment on cores; let $\tau_{\mathsf{b}}$ denote the maximum lifetime for a shuttle bus. Let $|\mathsf{b}|$ be the number of shuttle buses used in the circuit. Then, under the phenomenological depolarizing noise model with an error rate $p<1/(4r\mathfrak{e})^{4}$ at level 0, except for a rare event with a probability upper bounded by $\mathrm{v}_{\mathsf{b}}(p/\sqrt{p_{th}})^{d_0d_2/2}$, the induced level-1 $X$ (or $Z$) errors are local stochastic with a level-1 error rate upper bounded by $z_{\mathrm{max}}(p/p_{th})^{d_{0}/2}$. Here, $\mathrm{v}_{\mathsf{b}}:=\tau_{\mathsf{b}}d_{0}^2d_{1}|\mathsf{b}|$, $p_{th}:=1/(2r\mathfrak{e})^4$, $z_{\mathrm{max}}:=16\max(\tau_{\mathsf{c}},\tau_{\mathsf{b}}) d_{0}^3d_{1}^2$, and $r=48$, which upper bounds the vertex degrees of the level-0 adjacency graph.  
\end{theorem}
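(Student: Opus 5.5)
We argue for level-1 $X$ errors; the $Z$ case is symmetric under exchanging the roles of $X$ and $Z$ buses. First we isolate the rare event. Let $\mathcal{B}$ be the event that $f|_{\mathsf{B}_{\lozenge}}$ contains a cycle on some shuttle bus of weight at least $d_0d_1$. Off $\mathcal{B}$, every cycle on an $X$ bus is short. Since $\alpha_{\mathsf{b}}=d_1$, consecutive CNOT membranes on a bus, and the transversal boundaries, are at least $d_0d_1$ steps apart. Hence each such cycle is a benign cycle, a benign sprout cycle, or a cross-membrane cycle intersecting exactly one membrane. This is exactly the setting assumed before Lemma~\ref{lemma: bounds on an error config}.

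To bound $P(\mathcal{B})$, we note that a residual cycle of weight at least $d_0d_1$ on $\mathcal{G}_{\mathsf{B}_{\lozenge}}$ is decoded by MWPM, so at least half of it consists of physical errors. Moreover, the cycle must pass through a starting location, of which there are at most $\tau_{\mathsf{b}}\cdot O(d_0^2d_1)$ per bus. We then apply the counting lemma (Lemma~\ref{lemma: counting lemma}) with $|\lambda|=1$. This bounds the number of connected clusters of size $\mathrm{w}$ through a fixed location by $(r\mathfrak{e})^{\mathrm{w}}$, and we additionally pay $2^{\mathrm{w}}$ to choose the physical half. Summing over $\mathrm{w}\ge d_0d_1$ and using $p<1/(4r\mathfrak{e})^4$ gives a geometric series. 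The result is $P(\mathcal{B})\le \mathrm{v}_{\mathsf{b}}\,(p/\sqrt{p_{th}})^{d_0d_1/2}$ after a union bound over the $|\mathsf{b}|$ buses. The exponent $1/4$ in $p_{th}$ is loose enough to absorb these counting factors.

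Off $\mathcal{B}$, we show local stochasticity. Fix a level-1 $X$ error configuration $\mathsf{g}=\mathsf{e}_1+\cdots+\mathsf{e}_k$, and define its key collection $\mathcal{F}_{\mathsf{g}}$ as the set of all $\mu$ constructed as follows. Take $\lambda=\{e_1,\dots,e_k\}$ with each $e_i$ in the receptive zone of $\mathsf{e}_i$ (Definition~\ref{def: receptive zone for X or Z error}). Then take any $\mu\supseteq\lambda$ whose every connected component on $\mathcal{A}$ meets $\lambda$, and whose weight is at least $d_0k/2$.

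We verify the two conditions of Lemma~\ref{lemma: sufficient condition for bounding level 1 error}. Condition 1 follows from Lemma~\ref{lemma: bounds on a general residual error config} with $C_1=4$. Suppose $f$ induces a level-1 configuration $\mathsf{f}_{\mathsf{X}}\supseteq\mathsf{g}$. Lemma~\ref{lemma: bounds on a general residual error config} gives $\lambda_f\subset\mu_f$ for $\mathsf{f}_{\mathsf{X}}$. We restrict to the sub-collection of $\lambda_f$ corresponding to $\mathsf{g}$, and keep the connected components of $\mu_f$ meeting those elements. One must check that this restriction preserves the bound $|\mu\cap\epsilon|\ge\max(|\mu|/4,d_0|\mathsf{g}|/2)$. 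I expect this to be the main obstacle. The bounds in Lemma~\ref{lemma: bounds on an error config} are proven globally, not per component. So I would rerun Algorithm~\ref{Alg: subroutine for analyzing f}, processing the level-1 errors of $\mathsf{g}$ first and stopping once they are all consumed. Each inequality in that proof is additive over the walks added to $\mu$, so the same chain of bounds holds for the truncated $\mu$.

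For condition 2, we count the pairs $(\lambda,\mu)$. Each $e_i$ ranges over its receptive zone, which has at most $O(\max(\tau_{\mathsf{c}},\tau_{\mathsf{b}})d_0^3d_1^2)$ locations. The $d_0d_1$ factor comes from the bus look-back window of length $d_0d_1$ in the definition of the receptive zone. For fixed $\lambda$, the counting lemma bounds the number of $\mu$ of weight $\mathrm{w}$ by $(r\mathfrak{e})^{\mathrm{w}}$. We absorb the receptive-zone choices per level-1 error into the effective rate, which gives the prefactor $z_{\max}^{k}$, and set $C_3=r\mathfrak{e}$. Lemma~\ref{lemma: sufficient condition for bounding level 1 error} then yields $P(\mathsf{g})\le\big(z_{\max}(p/p_{th})^{d_0/2}\big)^{|\mathsf{g}|}$, with the geometric-sum constant folded into the factor $16$ of $z_{\max}$.

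Finally, we treat general depolarizing errors by projecting to $f_X$ through the section map $\mathbb{S}_f$. This step is already packaged in Lemma~\ref{lemma: bounds on a general residual error config}.
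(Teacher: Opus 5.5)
Your proposal is correct and follows the paper's proof step for step. The rare event is the same (a residual $X$-basis cycle of weight at least $d_0d_1$ on an $X$ bus), bounded the same way via the counting lemma and the $2^{\mathrm{w}}p^{\mathrm{w}/2}$ MWPM factor, and the key collection $\mathcal{F}_{\mathsf{g}}$ is the same one fed through Lemma~\ref{lemma: bounds on a general residual error config} into Lemma~\ref{lemma: sufficient condition for bounding level 1 error} with $C_1=4$ and $C_3=r\mathfrak{e}$.

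Two small remarks:
\begin{itemize}
\item The paper asserts condition~1 without discussing the restriction from $\mathsf{f}_{\mathsf{X}}\supseteq\mathsf{g}$ to the components of $\mu_f$ that meet $\mathsf{g}$'s part of $\lambda_f$. Your truncated rerun of Algorithm~\ref{Alg: subroutine for analyzing f} addresses a point the paper leaves implicit.
\item The $3C_3/(C_3-1)$ prefactor is absorbed by the $r^{-|\mathsf{g}|}$ factor from the counting lemma, which gives $C_2=(z_{\max}/r)^{|\mathsf{g}|}$. It cannot come from the $16$ in $z_{\max}$, because that factor is already used up by the receptive-zone count.
\end{itemize}
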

\begin{proof}
Without loss of generality, we focus on level-1 $X$ errors in the proof. First, consider the event $\mathcal{B}_{d_1}$ that a general level-0 residual error configuration induces an $X$-basis cycle on an $X$ bus of weight at least $d_{0}d_{1}$. Let $\mathcal{H}_{\mathrm{w}}$ be the set of all weight-$\mathrm{w}$ general level-0 error configurations whose $X$-basis restriction is a weight-$\mathrm{w}$ cycle on an $X$ bus. Then, according to Lemma~\ref{lemma: counting lemma}, $|\mathcal{H}_{\mathrm{w}}|$ is upper bounded by $(4d_{0}^2d_{1}\cdot\tau_{\mathsf{b}}\cdot|\mathsf{b}|)\cdot (r\mathfrak{e})^{\mathrm{w}-1}$, where the first factor upper bounds the number of level-0 error locations on $X$ buses and the second factor upper bounds the number of weight-$\mathrm{w}$ clusters on the level-0 adjacency graph $\mathcal{A}$ containing a fixed level-0 error location on an $X$ bus. The probability of inducing any given general level-0 error configuration in $\mathcal{H}_{\mathrm{w}}$ is upper bounded by $p^{\mathrm{w}/2}2^{\mathrm{w}}$. Thus, the probability of the event $\mathcal{B}_{d_1}$ is upper bounded as
\begin{align}
  P(\mathcal{B}_{d_1})\leq& \frac{4d_0^2d_1\cdot\tau_{\mathsf{b}}\cdot|\mathsf{b}|}{r\mathfrak{e}} \sum_{\mathrm{w=d_{0}d_{1}}}^{\infty}(2r\mathfrak{e} \sqrt{p})^{\mathrm{w}} \nonumber \\
  =& \frac{4d_0^2d_1\cdot \tau_{\mathsf{b}}\cdot|\mathsf{b}|}{r\mathfrak{e}(1-2r\mathfrak{e}\sqrt{p})}\left((2r\mathfrak{e})^2p\right)^{d_{0}d_{1}/2}\nonumber\\
  \leq& \mathrm{v}_{\mathsf{b}}\left(p/\sqrt{p_{th}}\right)^{d_{0}d_{1}/2}.
\end{align}
We exclude this event in the following. 
  
For a level-1 $X$ error configuration $\mathsf{g}$, define $\mathcal{F}_{\mathsf{g}}$ as a collection of all general level-0 error configurations satisfying the following conditions: 
\begin{itemize}
    \item The general level-0 error configuration has a weight of at least $d_0|\mathsf{g}|$/2.
    \item The general level-0 error configuration contains a weight-$|\mathsf{g}|$ subconfiguration $\lambda$, such that each primitive level-1 error $\mathsf{e}$ in $\mathsf{g}$ corresponds to a distinct primitive level-0 error in $\lambda_X$ in the receptive zone of $\mathsf{e}$.  
    \item Each connected component (on the level-0 adjacency graph) of the general level-0 error configuration contains at least one element in $\lambda$. 
\end{itemize}
The size of the receptive zone (the number of level-0 error locations whose $X$-basis restriction is contained in the zone) of a primitive level-1 core error is upper bounded by $2\tau_{c}\cdot 4d_{0}^2(1+d_{0}d_{1}^2)\leq 16\tau_{c}d_{0}^3 d_{1}^{2}$. The size of the receptive zone of a primitive level-1 bus error is upper bounded by $4\tau_{\mathsf{b}}d_{0}^{2}d_{1}+\tau_{\mathsf{b}}d_{0}^2\leq 8\tau_{\mathsf{b}}d_{0}^2d_{1}$. Then, we can simply upper bound the size of any receptive zone by $z_{\mathrm{max}}$. Denote the number of weight-$\mathrm{w}$ configurations in $\mathcal{F}_{\mathsf{g}}$ as $C_{\mathsf{g}}(\mathrm{w})$. According to Lemma~\ref{lemma: counting lemma}, $C_{\mathsf{g}}(\mathrm{w})\leq (z_{\mathrm{max}}/r)^{|\mathsf{g}|}(r\mathfrak{e})^{\mathrm{w}}$. Combining this bound with Lemma~\ref{lemma: bounds on a general residual error config}, we can see that $\mathcal{F}_{\mathsf{g}}$ is a key collection of $\mathsf{g}$ in Lemma~\ref{lemma: sufficient condition for bounding level 1 error}. Then, the probability of inducing $\mathsf{g}$, $P(\mathsf{g})$, is upper bounded as follows.
\begin{equation}
    P(\mathsf{g})\leq z_{\mathrm{max}}^{|\mathsf{g}|} \cdot \left(p/p_{th}\right)^{d_{0}|\mathsf{g}|/2}
\end{equation}
Thus, the distribution of induced level-1 errors is local stochastic with a level-1 error probability upper bounded by $z_{\mathrm{max}} \left(p/p_{th}\right)^{d_{0}/2}$.

\end{proof}

\section{Logical Pauli measurements}\label{sec: logical pauli measurements}
In this section, we analyze the performance of the LMSs constructed for measuring logical $X$ and $Z$ operators on an HLP. We then describe how to measure general logical Pauli operators with $H$-transformed and $HS$-transformed readout gadgets. We extend our analysis on the induced level-1 error model as well as LMSs to incorporate $H$-transformed gadgets. Finally, we extend hybrid-unit CNOT gates to enable transversal CNOT gates between shuttle buses and external cores. We tentatively propose a hybrid architecture where an HLP or possibly multiple HLPs work closely with external cores to run logical computation. 

\subsection{Performance analysis of LMSs for measuring logical $X$ and $Z$ operators}
We first consider a single LMS consisting of $d_{1}$ $X$-basis (or $Z$-basis) readout gadgets. Let $\mathcal{M}$ be the collection of all logical readout gadgets in the LMS and all readout gadgets for level-1 SE. We construct a level-1 error model for $\mathcal{M}$. We can lower bound the number of primitive level-1 errors required to trigger a logical measurement error in this LMS.   
\begin{lemma}[Lower bound on the level-1 distance for an LMS]\label{lemma: level 1 distance for LMS}
    Given the level-1 error model for $\mathcal{M}$, for a level-1 error configuration $\mathsf{f}$ with $\partial_{1}\mathsf{f}=0$, if $|\mathsf{f}|$ is smaller than $d_1$, then $\mathsf{f}$ does not induce a logical measurement error on the LMS.   
\end{lemma}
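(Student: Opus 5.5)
The plan is to treat the level-1 circuit restricted to $\mathcal{M}$ as a space-time code. Its checks are the level-1 SE detectors and the $d_1-1$ LMS detectors; its boundaries are the perfect initial and final time boundaries. I will show that any syndromeless $\mathsf{f}$ that flips the LMS outcome contains at least $d_1$ primitive level-1 errors. Without loss of generality take a $Z$-basis LMS measuring $\mathsf{P}$, with logical readout gadgets $\mathsf{R}_1,\dots,\mathsf{R}_{d_1}$ in level-1 SE rounds $\mathsf{t},\dots,\mathsf{t}+d_1-1$. Only primitive level-1 $X$ errors on cores and on $Z$ buses matter.

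\textbf{Step 1 (readout flips).} For each $j$, write the flip of the readout value of $\mathsf{R}_j$ as
\[
r_j+\langle \mathsf{E}_j,\mathsf{P}\rangle \pmod 2 ,
\]
where $r_j\in\{0,1\}$ indicates a primitive level-1 $X$ error on the $Z$ bus of $\mathsf{R}_j$. Here $\mathsf{E}_j$ is the accumulated core $X$ error seen by $\mathsf{R}_j$; it is well defined up to the CNOT timing inside $\mathsf{R}_j$, because each canonical core segment lies between consecutive core-bus CNOT gates on that core. Since every LMS detector is zero, all $d_1$ flips are equal, so a logical measurement error means every flip equals $1$.

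\textbf{Step 2 (a clean gadget).} The bus errors $r_j$ are distinct primitive level-1 errors, so $|\mathsf{f}|<d_1$ forces some index $j^*$ with $r_{j^*}=0$, and hence $\langle \mathsf{E}_{j^*},\mathsf{P}\rangle=1$. More precisely, if $k$ gadgets carry bus errors, then at least $d_1-k$ of the indices $j$ have $\mathsf{E}_j$ anticommuting with $\mathsf{P}$. It therefore suffices to show that the core errors together with the SE-gadget bus errors number at least $d_1-k$.

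\textbf{Step 3 (space-time distance).} This is the main step. The core trajectory starts at $\mathsf{E}=0$ at the perfect initial boundary. By the boundary conditions described for the LMS benchmark, the final boundary measures $\mathsf{P}$ perfectly and its outcome is predetermined, so the net core error there must commute with $\mathsf{P}$. In between, $\mathsf{E}$ anticommutes with $\mathsf{P}$ at the clean gadget times. I will show that a core error $\mathsf{E}$ anticommuting with $\mathsf{P}$, seen by SE gadgets in rounds whose detectors all vanish, is costly. Concretely, it must either have weight at least $d_1$ as a logical $\mathsf{X}$-type operator of $\mathcal{C}_1$, or have its nontrivial $\mathcal{C}_1$ syndrome masked by SE-gadget bus errors. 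Masking is expensive in time: a readout error on the SE gadget for stabilizer $\mathsf{S}$ in round $\mathsf{s}$ flips the detectors in rounds $\mathsf{s}$ and $\mathsf{s}+1$. So a syndrome bit can be masked across the $d_1-k$ clean rounds only by one bus error per round, or by changing $\mathsf{E}$ itself.

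I plan to make this precise with a time-projection argument. Take the set $J$ of clean indices, with $|J|\ge d_1-k$. For each round $\mathsf{t}+j-1$ with $j\in J$, charge the primitive errors in that round. Either $\mathsf{E}$ already carries a logical component anticommuting with $\mathsf{P}$ that was created with at least $d_1$ core errors (and we are done), or some syndrome bit of $\mathsf{E}$ is nonzero in that round and must be cancelled by an error local to that round. In the second case each clean round contributes at least one error, giving $|\mathsf{f}|\ge k+(d_1-k)=d_1$.

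\textbf{Main obstacle.} I expect the difficulty to be the ordering subtlety inside a level-1 SE round. The core error seen by $\mathsf{R}_j$ can differ from the one seen by a given SE gadget in the same round, because core segments are cut at every CNOT time. I would handle this first by exploiting the $\alpha_{\mathsf{c}}d_0$ separation and the canonical segment structure: assign each primitive core error to a unique round, then redo the charging with $\mathsf{E}$ evaluated at SE gadget times, absorbing the discrepancy as an additional primitive error that lives in that same round.
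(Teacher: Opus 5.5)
Your argument is correct in outline but takes a different route from the paper.

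\emph{The paper's route.} It splits $\mathsf{f}$ into the LMS-bus part $\mathsf{f}_{\mathsf{R}}$ and the SE part $\mathsf{f}_{\mathsf{S}}$, then cuts $\mathsf{f}_{\mathsf{S}}$ into clusters that are syndromeless on the SE detectors. For each cluster it back-propagates the core errors through the core-bus CNOTs to the start of the cluster's time window. Three things happen:
\begin{enumerate}
\item the core part becomes a level-1 stabilizer, since the cluster is too light to be a nontrivial logical;
\item the SE-bus flips cancel;
\item what remains are LMS readout flips, at most one per round spanned, whose number is bounded by the cluster weight through a count of time-like errors.
\end{enumerate}
Hence $\mathsf{f}$ is equivalent to pure LMS readout errors of weight below $d_1$, which cannot flip all $d_1$ readouts.

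\emph{Your route.} You charge directly, round by round. Each of the $d_1-k$ gadgets without its own readout error forces, inside its own SE round, either a fresh error local to that round or an accumulated core error that is a nontrivial $\mathsf{X}$-logical. This gives $|\mathsf{f}|\geq k+(d_1-k)$.

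\emph{Comparison.} Your version is more elementary. It needs no cluster decomposition, no back-propagation, and no certification that the net core error is a stabilizer; it uses only the $\mathsf{X}$-distance of $\mathcal{C}_1$, and it makes the role of the $d_1$ repetitions transparent. The paper's version gives more structure: an explicit equivalent configuration consisting only of readout errors.

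\emph{Closing Step 3.} State explicitly the fact your argument rests on: with the perfect initial boundary and all SE detectors silent, every SE readout is net unflipped. The timing issue then disappears as follows.
\begin{itemize}
\item Attach each primitive core error to the core-bus CNOT that ends its canonical segment. Errors in a core's final segment affect no readout and can be ignored.
\item Charge to round $\mathsf{t}+j-1$ any core error attached to a CNOT in that round, or any SE-bus error in that round.
\item If no such error exists, the core error is constant over the round. It then anticommutes with $\mathsf{P}$ and commutes with every measured $Z$-type stabilizer, so it is a nontrivial logical of weight at least $d_1$.
\end{itemize}

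Two smaller points. The $\alpha_{\mathsf{c}}d_0$ separation plays no role in this level-1 statement. Your appeal to the benchmark's final boundary measuring $\mathsf{P}$ is not available in the lemma's generality, and your argument does not need it.
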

\begin{proof}
    Define an adjacency graph $\mathcal{A}_{1}$, with a vertex set of all primitive level-1 errors, such that two vertices are connected by an edge if and only if the corresponding level-1 errors have overlapping level-1 syndromes. Without loss of generality, we assume $\mathsf{f}$ is a connected cluster on $\mathcal{A}_1$. Let $\mathsf{D}_{\mathsf{S}}$ be the set of all level-1 detectors generated by level-1 SE. Let $\overline{\partial}_{1\mathsf{S}}$ be the projection of the level-1 syndrome map onto $\mathsf{D}_{\mathsf{S}}$. Define another adjacency graph $\mathcal{A}_{\mathsf{S}}$, with a vertex set of all primitive level-1 errors on cores and buses for level-1 SE, such that two vertices $\mathsf{e}_{1}$ and $\mathsf{e}_{2}$ (corresponding to two primitive level-1 errors) are connected by an edge if and only if $\overline{\partial}_{1\mathsf{S}}\mathsf{e}_1\cap\overline{\partial}_{1\mathsf{S}}\mathsf{e}_{2}\neq \emptyset$. The level-1 error configuration $\mathsf{f}$ is composed of two components: a configuration $\mathsf{f}_{\mathsf{S}}$ of level-1 errors on cores and buses for level-1 SE and a configuration $\mathsf{f}_{\mathsf{R}}$ of level-1 errors on buses in the LMS. We can see that $\overline{\partial}_{1\mathsf{S}}\mathsf{f}_{\mathsf{S}}=\overline{\partial}_{1\mathsf{S}}\mathsf{f}=0$. Thus, we can decompose $\mathsf{f}_{\mathsf{S}}$ into a series of level-1 error configurations $\mathsf{f}_{\mathsf{S},1},\cdots,\mathsf{f}_{\mathsf{S},j}$, such that each $\mathsf{f}_{\mathsf{S},i}$ ($i\in\{1,\cdots,j\}$) is a connected cluster on $\mathcal{A}_{\mathsf{S}}$ with $\overline{\partial}_{1\mathsf{S}}f_{\mathsf{S},i}=0$. 

    For each $\mathsf{f}_{\mathsf{S},i}$ ($i\in\{1,\cdots,j\}$), denote the smallest and the largest level-1 time coordinates of detectors triggered by primitive errors in $\mathsf{f}_{\mathsf{S},i}$ as $\mathsf{t}_{i,0}$ and $\mathsf{t}_{i,1}$, respectively. We know that for every integer $\mathsf{t}$ with $\mathsf{t}_{i,0}\leq \mathsf{t}\leq \mathsf{t}_{i,1}-1$, there is at least one primitive error in $\mathsf{f}_{\mathsf{S},i}$ such that the level-1 detectors triggered by the error in $\mathsf{D}_{\mathsf{S}}$ cover both time coordinates $\mathsf{t}$ and $\mathsf{t}+1$. Thus, the weight of primitive errors (in $\mathsf{f}_{\mathsf{S},i}$) whose syndromes have two different time coordinates is lower bounded by $\mathsf{t}_{i,1}-\mathsf{t}_{i,0}$. We then back propagate all primitive level-1 core errors in $\mathsf{f}_{\mathsf{S},i}$ back to the beginning of the level-1 SE round $\mathsf{t}_{i,0}$ and denote the resulting level-1 error configuration as $\mathsf{h}_{i}$ (equivalent to $\mathsf{f}_{\mathsf{S},i}$). We can see that all core errors in $\mathsf{h}_{i}$ (at the beginning of the level-1 SE round $\mathsf{t}_{i,0}$) form a level-1 stabilizer of the level-1 code. Moreover, $\mathsf{h}_{i}$ should contain no bus errors for level-1 SE but may contain bus errors for logical readout gadgets in level-1 SE rounds from $\mathsf{t}_{i,0}-1$ to $\mathsf{t}_{i,1}$ (at most $\mathsf{t}_{i,1}-\mathsf{t}_{i,0}+2$ errors). Let $\mathsf{h}_i|_{\mathsf{C}}$ and $\mathsf{h}_{i}|_{\mathsf{B}}$ denote the subconfigurations of core errors and bus errors in $\mathsf{h}_{i}$, respectively. If $\mathsf{h}_{i}|_{\mathsf{B}}$ contains a bus error for the logical readout gadget at level-1 SE round $\mathsf{t}_{i,1}$ (or $\mathsf{t}_{i,0}-1$), then $\mathsf{f}_{\mathsf{S},i}$ contains at least one core error only triggering detectors in $\mathsf{D}_{\mathsf{S}}$ with a time coordinate $\mathsf{t}_{i,1}$ (or $\mathsf{t}_{i,0}$). Thus, we can see that $|\mathsf{h}_{i}|_{\mathsf{B}}|\leq |\mathsf{f}_{\mathsf{S},i}|$. Then, the level-1 error configuration $\mathsf{h}_{\mathsf{R}}:=\mathsf{f}_{\mathsf{R}}+\sum_{i=1}^{j}\mathsf{h}_{i}|_{\mathsf{B}}$, as an equivalent configuration to $\mathsf{f}$, is composed of only bus errors for readout gadgets and has a weight upper bounded by $|\mathsf{f}|<d_1$. Thus, the level-1 error configuration $\mathsf{f}$ cannot induce a logical measurement error.  
\end{proof}

Now, we show that the probability of any LMS being faulty remains exponentially suppressed even when LMSs are densely packed. This theoretical guarantee enables parallel logical $X$ or $Z$ measurements. We first introduce definitions and notations to describe the level-1 structure of an HLP. 
Let $n_1$ and $s_{1}$ be the number of level-1 data qubits and the number of level-1 stabilizers measured in each level-1 SE round. Let $\mathcal{M}_{\mathsf{SE}}$ denote the set of readout gadgets for level-1 SE. Define the degree of a level-1 data qubit as the number of level-1 stabilizers supported on this qubit that are measured in a level-1 SE round. Let $\delta_1$ be the maximum degree of a level-1 data qubit in our HLP; let $\varpi_1$ be the maximum weight of measured level-1 stabilizers.

\begin{theorem}
    [Fault probability for LMSs]\label{thm: fault probability for LMSs}
    Consider an HLP run for $\mathsf{T}$ level-1 SE rounds with a collection $\mathcal{R}$ of $X$-basis and $Z$-basis LMSs. We further assume perfect time boundaries for the HLP for simplicity. Let $\varpi_{L}$ be the maximum weight of measured logical Pauli operators in $\mathcal{R}$. Compilation parameters $\alpha_{\mathsf{b}}$ and $\alpha_{\mathsf{c}}$ are set as $d_1$ and $1$, respectively. We require that every $X$-basis (or $Z$-basis) logical readout gadget in LMS is separated from other logical readout gadgets in the same LMS, as well as $X$-basis (or $Z$-basis) readout gadgets for level-1 SE, by at least $d_{0}$ level-0 time steps. We follow the notation in Theorem~\ref{thm: approximate error reduction}. Then, under the phenomenological depolarizing noise model with an error rate $p<1/(4r\mathfrak{e})^{4}$ at level 0 and with a core distance $d_{0}\geq d_{th}$, the probability of any logical measurement error in $\mathcal{R}$ is upper bounded by $C_{\mathcal{R}}(p/p_{th})^{d_{0}d_{1}/4}+\mathrm{v}_{\mathsf{b}}(p/\sqrt{p_{th}})^{d_{0}d_1/2}$. Here, $d_{th}=2\log\left((4r_1\mathfrak{e})^2z_{\mathrm{max}}\right)/\log(p_{th}/p)$, $C_{\mathcal{R}}=|\mathcal{R}|N_{X}(2r_1\mathfrak{e}\sqrt{z_{\mathrm{max}}})^{d_1}$, $N_X=n_1(\delta_1+1)\mathsf{T}+s_1 \mathsf{T}+d_1$, and $r_1=12\delta_1^{2}\max(\varpi_1,\varpi_{L})$.   
\end{theorem}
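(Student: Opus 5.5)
The argument has two parts: a reduction to a level-1 local stochastic model, and then a counting argument at level 1. Fix one LMS $\mathsf{L}\in\mathcal{R}$ and let $\mathcal{M}_{\mathsf{L}}$ be the gadgets of $\mathsf{L}$ together with $\mathcal{M}_{\mathsf{SE}}$. By hypothesis, every two $X$-basis (or $Z$-basis) gadgets in $\mathcal{M}_{\mathsf{L}}$ are separated by at least $d_0$ time steps, and $\alpha_{\mathsf{b}}=d_1$. Hence Theorem~\ref{thm: approximate error reduction} applies with $\mathcal{M}_1=\mathcal{M}_{\mathsf{L}}$.

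Gadgets from other LMSs may be packed densely. They only enter as extra hybrid-unit CNOT gates, i.e.\ extra membranes on cores and buses that are not in $\mathcal{M}_{\mathsf{L}}$. Errors on those buses are modded out by $\mathcal{M}$-equivalence, so the theorem still covers this situation. One point needs checking: the bad event $\mathcal{B}_{d_1}$ (a weight-$\ge d_0d_1$ cycle on some $X$ bus) is defined globally over all $|\mathsf{b}|$ buses. It is therefore excluded once for all LMSs simultaneously, which produces the additive term $\mathrm{v}_{\mathsf{b}}(p/\sqrt{p_{th}})^{d_0d_1/2}$ exactly once.

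Outside $\mathcal{B}_{d_1}$, the induced level-1 $X$ (resp.\ $Z$) errors for $\mathcal{M}_{\mathsf{L}}$ are local stochastic with rate $p_1:=z_{\max}(p/p_{th})^{d_0/2}$. Level-1 decoding of $\mathsf{L}$ is MLE on the hypergraph for $\mathcal{M}_{\mathsf{L}}$ (Algorithm~\ref{Alg: full decoding procedure}). A logical measurement error on $\mathsf{L}$ requires that the true level-1 error $\mathsf{f}$ plus the correction $\mathsf{k}$ form a syndrome-free configuration $\mathsf{f}+\mathsf{k}$ that flips the LMS outcome. By Lemma~\ref{lemma: level 1 distance for LMS}, such a configuration has weight at least $d_1$. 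Since MLE (minimum weight under a uniform bound) gives $|\mathsf{k}|\le|\mathsf{f}|$, some connected cluster $\mathsf{c}$ of $\mathsf{f}+\mathsf{k}$ on the level-1 adjacency graph $\mathcal{A}_1$ has weight $m\ge d_1$ and satisfies $|\mathsf{c}\cap\mathsf{f}|\ge m/2$. This is the standard argument in the style of Lemma~\ref{lemma: sufficient condition for bounding level 1 error}, now carried out at level 1.

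Next I count the clusters. The vertex degree of $\mathcal{A}_1$ is at most $r_1=12\delta_1^2\max(\varpi_1,\varpi_L)$. A core error touches at most $\sim\delta_1$ detectors per round, and each detector involves at most about $\max(\varpi_1,\varpi_L)$ gadgets' worth of errors, so this bound holds up to the constant. The number of possible anchor locations for a cluster is at most the number of primitive level-1 $X$ error locations, which is at most $N_X=n_1(\delta_1+1)\mathsf{T}+s_1\mathsf{T}+d_1$: each core contributes at most $\delta_1+1$ locations per round, plus one per SE bus and one per LMS bus. Lemma~\ref{lemma: counting lemma} then bounds the number of weight-$m$ clusters through a fixed anchor by $(r_1\mathfrak{e})^{m}$. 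There are at most $2^m$ choices of subset $\mathsf{c}\cap\mathsf{f}$, and each has probability at most $p_1^{m/2}$. This gives
\begin{equation}
P(\mathsf{L}\ \text{fails})\le N_X\sum_{m\ge d_1}\big(2r_1\mathfrak{e}\sqrt{p_1}\big)^m .
\end{equation}

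The condition $d_0\ge d_{th}$ is precisely what makes $4r_1\mathfrak{e}\sqrt{p_1}\le 1/(4r_1\mathfrak{e})\cdot$const$\le 1/2$, so the geometric series is bounded by about twice its first term. Writing $\sqrt{p_1}=\sqrt{z_{\max}}(p/p_{th})^{d_0/4}$ turns the first term into $N_X(2r_1\mathfrak{e}\sqrt{z_{\max}})^{d_1}(p/p_{th})^{d_0d_1/4}$. A union bound over the $|\mathcal{R}|$ LMSs gives $C_{\mathcal{R}}$, and the spare factor of 2 is absorbed by the slack in $d_{th}$.

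The main obstacle is the reduction step for densely packed LMSs. I have to verify that the extra membranes from other LMSs' gadgets do not break Theorem~\ref{thm: approximate error reduction} or enlarge the receptive zones beyond $z_{\max}$. Errors on a core can propagate through a foreign LMS's bus and then back into another core; this is the "upstream" direction through $X$ buses. To handle it, I would argue that foreign $Z$ buses lie outside $\mathcal{M}_{\mathsf{L}}$, so errors propagated onto them are discarded. Foreign $X$ buses are treated like any $X$ bus: they are already covered by the $f|_{\mathsf{B}_\lozenge}$ analysis, whose only requirement is the $\alpha_{\mathsf{b}}=d_1$ spacing on each individual bus and not any spacing between gadgets.
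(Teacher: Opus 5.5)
Your proposal follows the paper's proof essentially step for step: you fix one LMS, apply Theorem~\ref{thm: approximate error reduction} to its gadgets together with $\mathcal{M}_{\mathsf{SE}}$ (excluding $\mathcal{B}_{d_1}$ once over all buses), use Lemma~\ref{lemma: level 1 distance for LMS} for the weight-$d_1$ floor, count clusters with $N_X$ anchors and degree $r_1$ at probability cost $2^{\mathrm{w}}p_1^{\mathrm{w}/2}$, and take a union bound over $\mathcal{R}$.

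Two small corrections. First, $d_0\ge d_{th}$ gives exactly $2r_1\mathfrak{e}\sqrt{p_1}\le 1/2$ and no further slack, so the factor $1/(1-2r_1\mathfrak{e}\sqrt{p_1})\le 2$ is not absorbed by $d_{th}$. It is absorbed by the prefactor $1/(r_1\mathfrak{e})$ from the sharper count $(r_1\mathfrak{e})^{\mathrm{w}-1}$ in Lemma~\ref{lemma: counting lemma}, which is what the paper uses. Second, your ``main obstacle'' is already handled by Theorem~\ref{thm: approximate error reduction}, whose separation hypothesis constrains only gadgets in $\mathcal{M}_1$. Moreover, errors never travel core$\to$bus$\to$core, because propagation runs only in the direction $\mathsf{B}_{\lozenge}\to\mathsf{C}\to\mathsf{B}_{\blacklozenge}$.
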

\begin{proof}
    For an LMS $\mathcal{M}_{\mathsf{P}}$ in $\mathcal{R}$ for measuring a logical $X$ operator $\mathsf{P}$, we construct a level-1 error model for readout gadgets in $\mathcal{M}_{\mathsf{P}}\cup\mathcal{M}_{\mathsf{SE}}$. We then construct a level-1 adjacency graph with a vertex set of all primitive level-1 errors, such that two vertices are connected by an edge if and only if the corresponding level-1 errors have overlapping level-1 syndromes. Every primitive level-1 error on a core triggers at most $\delta_{1}+1$ level-1 detectors; every primitive level-1 error on a bus triggers at most two level-1 detectors. Every level-1 detector can be triggered by up to $2(\delta_1+1)\max(\varpi_{1},\varpi_{L})+2$ primitive level-1 errors. Thus, the degree of every vertex in the level-1 adjacency graph is upper bounded by $r_1=12\delta_{1}^{2}\max(\varpi_1,\varpi_{L})$. 

According to Theorem~\ref{thm: approximate error reduction}, except for a rare error event $\mathcal{B}_{d_1}$ whose probability is upper bounded by $\mathrm{v}_{\mathsf{b}}(p/\sqrt{p_{th}})^{d_{0}d_{1}/2}$, the induced level-1 $X$ errors are local stochastic, and each such error has a level-1 error probability upper bounded by $z_{\mathrm{max}}(p/p_{th})^{d_{0}/2}$. Here, $\mathrm{v}_{\mathsf{b}}=\tau_{b}d_{0}^2d_{1}|\mathsf{b}|$ with $|\mathsf{b}|=s_1\mathsf{T}+d_{1}|\mathcal{R}|$, the number of shuttle buses used in the HLP. Let $\mathfrak{E}$ be the collection of all residual level-1 $X$ error configurations that (i) correspond to a connected cluster on the level-1 adjacency graph and (ii) trigger a logical measurement error for $\mathcal{M}_{\mathsf{P}}$. According to Lemma~\ref{lemma: level 1 distance for LMS}, each element in $\mathfrak{E}$ should have weight at least $d_{1}$. Note that the number of weight-$\mathrm{w}$ clusters on the adjacency graph containing a fixed vertex is upper bounded by $\left(r_1\mathfrak{e}\right)^{\mathrm{w}-1}$; the number of level-1 $X$ error locations is upper bounded by $N_X=n_1(\delta_1+1)\mathsf{T}+s_1 \mathsf{T}+d_1$. Thus, the number of weight-$\mathrm{w}$ elements in $\mathfrak{E}$ is upper bounded by $N_X (r_1\mathfrak{e})^{\mathrm{w}-1}$. The probability of inducing a weight-$\mathrm{w}$ element in $\mathfrak{E}$ is upper bounded by $2^{\mathrm{w}}p_1^{\mathrm{w}/2}$. Thus, excluding the rare event $\mathcal{B}_{d_1}$, the probability of a logical measurement error for $\mathcal{M}_{\mathsf{P}}$, $P_{E}(\mathcal{M}_{\mathsf{P}})$, satisfies
\begin{align}
    P_{E}(\mathcal{M}_{\mathsf{P}})\leq& N_X\sum_{\mathrm{w}\geq d_{1}}(r_1\mathfrak{e})^{\mathrm{w}-1}p_{1}^{\mathrm{w}/2}2^{\mathrm{w}} \nonumber\\
    \leq& \frac{N_X\left((2r_{1}\mathfrak{e})^{2}p_1\right)^{d_{1}/2}}{r_1\mathfrak{e}(1-2r_{1}\mathfrak{e}\sqrt{p_1})}\nonumber\\
    \leq&  N_X\left((2r_1\mathfrak{e})^2p_1\right)^{d_1/2},
\end{align}
where we used $d\geq d_{th}$ to guarantee $p_{1}/(2r_1\mathfrak{e})^2\leq1/4$. Since the occurrence of a logical measurement error on $\mathcal{M}_{\mathsf{P}}$ corresponds to an event (a set) of level-0 physical error configurations, we then use the union bound to upper bound the probability of logical measurement errors in $\mathcal{R}$, $P_E(\mathcal{R})$, as follows.
\begin{align}
    P_{E}(\mathcal{R})\leq& |\mathcal{R}|N_X\left((2r_1\mathfrak{e})^2 p_1\right)^{d_1/2}+\mathrm{v}_{\mathsf{b}}(p/\sqrt{p_{th}})^{d_{0}d_{1}/2}\nonumber\\
    \leq& C_{\mathcal{R}} (p/p_{th})^{d_{0}d_{1}/4} + \mathrm{v}_{\mathsf{b}}(p/\sqrt{p_{th}})^{d_{0}d_{1}/2}.
\end{align}
\end{proof}
We note that for a low enough error rate $p$, Theorem~\ref{thm: fault probability for LMSs} upper bounds the probability of logical measurement errors by $\sim p^{d_{0}d_{1}/4}$, while for an HLP achieving full distance, the scaling of logical measurement errors should ideally be $\sim p^{d_0d_1/2}$. We attribute the discrepancy here to our proof technique, which separately analyzes level-0 and level-1 errors. More specifically, by analyzing level-0 errors, Theorem~\ref{thm: approximate error reduction} proves that the effective level-1 error probability $p_1$ scales as $\sim p^{d_{0}/2}$, indicating that full level-0 distance is achieved. Then, with this level-1 error rate, the probability of logical measurement errors at best scales as $\sim p_1^{d_{1}/2}\sim p^{d_{0}d_{1}/4}$ even when full level-1 distance is achieved. In contrast, our decoding procedure (Algorithm~\ref{Alg: full decoding procedure}) extracts soft information from level-0 decoding and applies this information to level-1 decoding, thereby effectively coupling the level-0 and level-1 decoding procedures. As a result, our circuit-level simulations in Fig.~\ref{fig: hlp performance}(\textbf{b}) and Fig.~\ref{fig: logical pauli measurement}(\textbf{c}) indicate HLPs therein are operating close to their full distance.
 
\subsection{Extension module I: measuring general level-1 logical Pauli operators}\label{subsec: extension model measure general logical operators}
Consider a logical operator $\mathsf{P}_{xz}:=i^{x\cdot z}\mathsf{X}_{1}^{x_1}\mathsf{Z}_1^{z_1}\otimes \cdots\otimes \mathsf{X}_{n_1}^{x_{n_1}}\mathsf{Z}_{n_1}^{z_{n_1}}$, where $x=(x_1,\cdots,x_{n_1})$ and $z=(z_{1},\cdots,z_{n_1})$ are vectors in $\mathbb{Z}_2^{n_1}$. When $x\cdot z$ is even, we construct the $H$-transformed readout gadget which proceeds as follows (see also Fig.~\ref{fig: logical pauli measurement}(\textbf{d})):
\begin{enumerate}
    \item Initialize the level-1 ancilla in the $Z$ basis. 
    \item Perform a sequence of CNOT gates between level-1 data qubits (controls) and the level-1 ancilla qubit (target), such that for every level-1 data qubit $i$ with $z_i=1$, exactly one CNOT gate is applied between that qubit and the ancilla. Every level-1 layer has at most $d_1$ control qubits. 
    \item   Perform a level-1 $H$ gate on the ancilla.
    \item Perform a sequence of CNOT gates between the level-1 ancilla (control) and level-1 data qubits (targets), such that for every level-1 data qubit $i$ with $x_i=1$, exactly one CNOT gate is applied between that qubit and the ancilla. Every level-1 layer has at most $d_1$ target qubits. 
    \item Measure the level-1 ancilla in the $X$ basis. 
\end{enumerate}
\begin{figure}
    \centering
    \includegraphics[width=0.8\columnwidth]{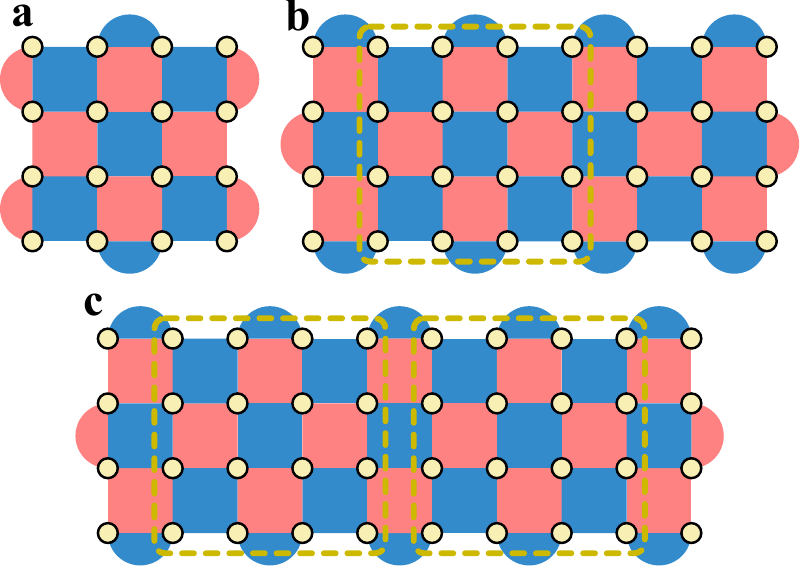}
    \caption{Bus-core CNOT gates during an $H$-transformed readout gadget. (\textbf{a}) Distance-4 rotated surface code. (\textbf{b}) $H$ bus after the transversal $H$ gate. Only one region (yellow dashed box) on the bus remains available for a bus-core CNOT gate. (\textbf{c}) A slightly elongated $H$ bus. Two regions are now available, allowing bus-core CNOT gates with up to two cores.}
    \label{fig: h bus corebus cnot gate}
\end{figure}
We now describe the level-0 implementation of an $H$-transformed readout gadget. The level-1 ancilla of the gadget is implemented by a shuttle bus, referred to as an $H$ bus. In the first step, the $H$ bus is transversally initialized in the $Z$ basis as a $Z$ bus. Then, the sequence of level-1 CNOT gates between level-1 data qubits (controls) and the level-1 ancilla (target) is implemented by core-bus CNOT gates between the corresponding cores and the $H$ bus. In the third step, we apply a layer of transversal $H$ gates on all data qubits of the $H$ bus, thereby implementing a logical $H$ gate on the level-1 ancilla and transforming the bus into an $X$ bus. We group this layer of transversal $H$ gates with the following level-0 SE round as a level-0 time step. Similarly, level-1 CNOT gates in the fourth step are implemented by bus-core CNOT gates between the $H$ bus (control) and cores (targets). Finally, the $H$ bus is transversally measured in the $X$ basis. Similar to $X$-basis and $Z$-basis readout gadgets, we insert padding level-0 SE rounds on the $H$ bus, so that (i) every two core-bus (or bus-core) CNOT gates in this gadget are separated by at least $\alpha_{\mathsf{b}}$ level-0 SE rounds and (ii) transversal initialization and measurement are separated from the following and the preceding hybrid-unit CNOT gates, respectively, by at least $\alpha_{\mathsf{b}}$ level-0 SE rounds. In addition, we require the layer of transversal $H$ gates to be separated from any core-bus or bus-core CNOT gate by at least $\alpha_{H}$ level-0 SE rounds. Here, $\alpha_{H}$ is a new compilation parameter. For an even distance $d_{0}$, since the core has asymmetrical $X$ and $Z$ boundaries (Fig.~\ref{fig: cores and buses}(\textbf{d})), the $H$ bus after the transversal $H$ gate is mismatched with cores along $X$ boundaries (Fig.~\ref{fig: h bus corebus cnot gate}(\textbf{a}\textendash{}\textbf{b})). Therefore, now the $H$ bus only supports bus-core CNOT gates with up to $d_{1}-1$ cores at a time. This minor issue can be fixed by using an $H$ bus of length $d_{0}d_{1}+2$ (Fig.~\ref{fig: h bus corebus cnot gate}). For simplicity, we will not make such a modification to the bus length in what follows.  

\begin{figure}
    \centering
    \includegraphics[width=\columnwidth]{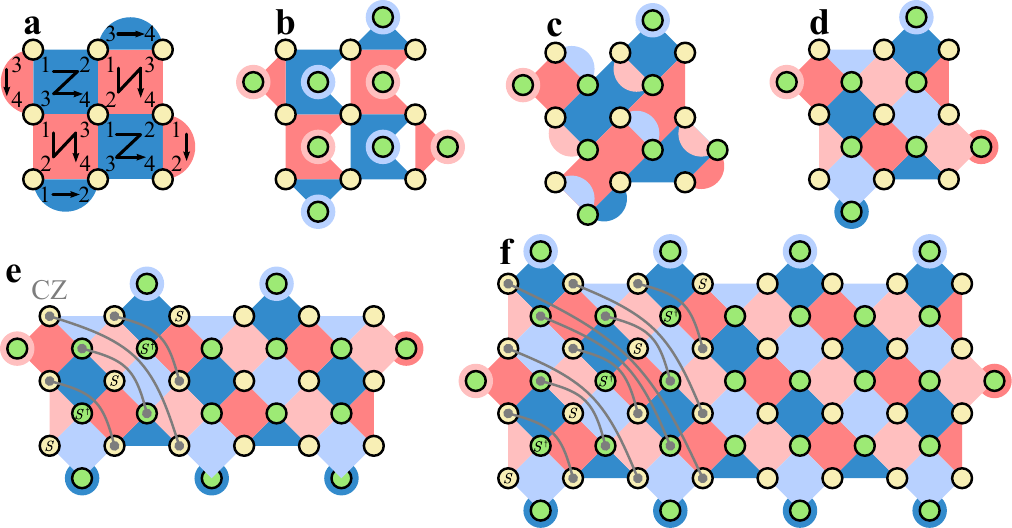}
    \caption{Mid-cycle transversal $S$ gate. (\textbf{a}) CNOT gate schedule for a level-0 SE circuit on the rotated surface code. The same schedule is applied to $H$ buses (after the transversal $H$ gate). (\textbf{b}\textendash{}\textbf{d}) Mid-cycle states after ancilla reset, the first layer of CNOT gates, and the second layer of CNOT gates, respectively, during a level-0 SE round (adapted from Refs.~\cite{mcewen_relaxing_2023,chen_transversal_2026}). The last state is also called a half-cycle state. (\textbf{e}) Fold-transversal $S$ gate applied on the half-cycle state of an $H$ bus with $d_{0}=3$ and $d_{1}=2$. (\textbf{f}) Fold-transversal $S$ gate for an $H$ bus with $d_{0}=4$ and $d_{1}=2$. Both $H$ buses in (\textbf{e}\textendash{}\textbf{f}) have long logical $X$ operators. Each mid-cycle fold-transversal gate in (\textbf{e}\textendash{}\textbf{f}) is embedded in a single level-0 SE round and implements a logical $S$ gate on the $H$ bus.}
    \label{fig: mid cycle transversal S gate}
\end{figure}

If $x\cdot z$ is odd, there are two ways to measure $\mathsf{P}_{xz}$. First, we can use a logical $Y$ state as a catalyst to convert the logical measurement task back to the case above, where $x\cdot z$ is even~\cite{chamberland_universal_2022}. Suppose we reserve a single logical qubit in an HLP to host a logical $Y$ state and we would like to perform the logical Pauli measurement $\mathsf{P}_{xz}$ supported on other logical qubits. Then, we can instead measure $\mathsf{P}_{xz}\cdot \overline{\mathsf{Y}}$ using the $H$-transformed logical readout gadget; here, $\overline{\mathsf{Y}}$ is the logical $Y$ operator corresponding to the logical $Y$ state. In this way, we effectively measure $\mathsf{P}_{xz}$ while preserving the logical $Y$ state, which can be reused to assist future logical Pauli measurements. Alternatively, we can construct another readout gadget, called an $HS$-transformed readout gadget, to directly measure $\mathsf{P}_{xz}$ without using ancillary logical states. The $HS$-transformed readout gadget works as follows.
\begin{itemize}
    \item[1\textendash{}4.] Same as steps 1\textendash{}4 for the $H$-transformed readout gadget. 
    \item[5.] Perform a level-1 $S$ gate on the ancilla.
    \item[6.] Measure the ancilla in $X$ basis. 
\end{itemize}
The level-0 implementation of an $HS$-transformed readout gadget mostly follows from that of an $H$-transformed readout gadget. The only new step for the former is to perform a logical $S$ gate on the $H$ bus. Following the approach in Ref.~\cite{chen_transversal_2026}, we can perform a mid-cycle fold-transversal $S$ gate by leveraging the dynamics of the level-0 SE circuit on the bus~\cite{mcewen_relaxing_2023,chen_transversal_2026} (see Fig.~\ref{fig: mid cycle transversal S gate}). We require the mid-cycle transversal $S$ gate to be separated from both the preceding bus-core CNOT gate and the final transversal $X$-basis measurement by at least $\alpha_{\mathsf{b}}$ level-0 SE rounds. 

We note that the level-0 implementation of both $H$-transformed and $HS$-transformed readout gadgets requires numerical benchmarking to optimize the number of padding level-0 SE rounds required between two adjacent logical operations on an $H$ bus. We leave this to future work. In the remainder of this subsection, we theoretically analyze circuits with $H$-transformed readout gadgets (in addition to $X$-basis and $Z$-basis readout gadgets) under the phenomenological depolarizing noise model. We will extend Theorem~\ref{thm: approximate error reduction} to prove that level-1 $X$ (or $Z$) errors are almost uncorrelated even in the presence of $H$-transformed gadgets. Based on this result, we then formulate and analyze an LMS based on $H$-transformed gadgets to reliably measure a logical operator $\mathsf{P}_{xz}$ with $x\cdot z=0$. (The case $x\cdot z=1$ can also be handled by such an LMS, given a logical $Y$ state.) 

Consider a level-0 circuit consisting of $X$-basis, $Z$-basis, and $H$-transformed readout gadgets, together with level-0 SE rounds on all working units. As described in Sec.~\ref{subsec: detectors and decoding}, we partition level-0 detectors on cores and shuttle buses in $X$-basis and $Z$-basis readout gadgets into $\mathcal{D}_{\mathsf{B}_{\lozenge}}$, $\mathcal{D}_{\mathsf{C}}$, and $\mathcal{D}_{\mathsf{B}_{\blacklozenge}}$ and primitive level-0 $X$ and $Z$ errors on these working units into $\mathcal{E}_{\mathsf{B}_{\lozenge}}$, $\mathcal{E}_{\mathsf{C}}$, and $\mathcal{E}_{\mathsf{B}_{\blacklozenge}}$. Since an $H$ bus behaves similarly to an $X$ or $Z$ bus, we likewise partition the level-0 detectors on $H$ buses into $\mathcal{D}_{\mathsf{B}_{\lozenge}}$ and $\mathcal{D}_{\mathsf{B}_{\blacklozenge}}$, and the primitive level-0 $X$ and $Z$ errors on these buses into $\mathcal{E}_{\mathsf{B}_{\lozenge}}$ and $\mathcal{E}_{\mathsf{B}_{\blacklozenge}}$. 

For an $H$ bus $\sigma$ in an $H$-transformed gadget, let $t_{H}$ be the level-0 time step containing the transversal $H$ gate on $\sigma$. Construction of level-0 detectors on $\sigma$ before (or after) time step $t_{H}$ is the same as that on a $Z$ bus (or an $X$ bus). Every level-0 detector generated at time step $t_{H}$ consists of the measurement result on a $Z$ or $X$ stabilizer in time step $t_{H}-1$ and the measurement result on the corresponding $X$ or $Z$ stabilizer in time step $t_{H}$. We refer to a level-0 detector on the bus as upstream (or downstream) if it is supported on $X$ (or $Z$) stabilizer measurements before time step $t_H$ and $Z$ (or $X$) stabilizer measurements at or after time step $t_{H}$. We refer to primitive qubit $Z$ (or $X$) errors at or before time step $t_{H}$, $X$ (or $Z$) stabilizer measurement errors before time step $t_{H}$, qubit $X$ (or $Z$) errors after time step $t_{H}$, and $Z$ (or $X$) stabilizer measurement errors at or after time step $t_{H}$ as upstream (or downstream) errors. Then, we add all upstream and downstream level-0 detectors on $H$ buses to $\mathcal{D}_{\mathsf{B}_{\lozenge}}$ and $\mathcal{D}_{\mathsf{B}_{\blacklozenge}}$, respectively; we add all upstream and downstream level-0 errors on these buses to $\mathcal{E}_{\mathsf{B}_{\lozenge}}$ and $\mathcal{E}_{\mathsf{B}_{\blacklozenge}}$, respectively. We can then construct all three level-0 decoding graphs and perform level-0 decoding according to Sec.~\ref{subsec: detectors and decoding}. Define the upstream (or downstream) decoding subgraph on an $H$ bus $\sigma$ as the subgraph of $\mathcal{G}_{\mathsf{B}_{\lozenge}}$ (or $\mathcal{G}_{\mathsf{B}_{\blacklozenge}}$) induced by the set of all upstream (or downstream) level-0 detectors on $\sigma$. 

We define the $Z$-basis (or $X$-basis) separation of an $H$-transformed readout gadget from another readout gadget as the minimum separation between a core-bus (or bus-core) CNOT gate in the $H$-transformed gadget and a core-bus (or bus-core) CNOT gate in the other. Suppose we are interested in a subset $\mathcal{M}$ of readout gadgets, such that (i) $X$-basis (or $Z$-basis) readout gadgets in $\mathcal{M}$ are separated by at least $d_{0}$ level-0 time steps, and (ii) the $X$-basis and $Z$-basis separations of each $H$-transformed readout gadget in $\mathcal{M}$ from any other readout gadget in $\mathcal{M}$ are at least $d_{0}$ level-0 time steps. Following notations in Sec.~\ref{sec: approximate level 1 error reduction}, we let $\mathcal{E}_{\mathsf{B}_{\blacktriangledown}}$ be the subcollection of all level-0 errors in $\mathcal{E}_{\mathsf{B}_{\blacklozenge}}$ on buses in $\mathcal{M}$. 

Consider a general level-0 residual error configuration $f$. We know that $f|_{\mathsf{B}_{\lozenge}}:=f_X|_{\mathsf{B}_{\lozenge}}\cup f_{Z}|_{\mathsf{B}_{\lozenge}}$ is a collection of $X$-basis cycles on $X$ buses, $Z$-basis cycles on $Z$ buses, and cycles on the upstream decoding subgraphs on $H$ buses. We restrict all cycles in $f|_{\mathsf{B}_{\lozenge}}$ to have a weight smaller than $d_{0}d_{1}$ as we did in Sec.~\ref{subsec: error reduction theorem}. Then, every cross-membrane cycle in $f|_{\mathsf{B}_{\lozenge}}$ on an $H$ bus is either an $X$-basis or $Z$-basis cycle. Let $f^{\circ}$ denote the collection of propagated simple walks (on cores) from cycles in $f|_{\mathsf{B}_{\lozenge}}$; let $f_X^{\circ}$ (or $f_{Z}^{\circ}$) denote the collection of propagated simple walks (on cores) from $X$-basis (or $Z$-basis) cycles in $f|_{\mathsf{B}_{\lozenge}}$. Then, $f^{\circ}=f_{X}^{\circ}\cup f_{Z}^{\circ}$. Following Sec.~\ref{subsec: error reduction theorem}, $f_{X}^{\circ}\cup f_{X}|_{\mathsf{C}}$ is a disjoint union of $X$-basis closed walks on cores, such that each closed walk consists of propagated simple walks in $f_{X}^{\circ}$ and length-1 walks (primitive level-0 errors) in $f_{X}|_{\mathsf{C}}$. Let $f_{X}^{\bullet}$ denote the propagated simple walks on buses in $\mathcal{M}$ from these closed walks; we similarly define $f_{Z}^{\bullet}$ as the propagated simple walks on buses in $\mathcal{M}$ from $f_{Z}^{\circ}\cup f_{Z}|_{\mathsf{C}}$.  

Restricted to $X$ (or $Z$) buses, $f_Z^{\bullet}\cup f_{Z}|_{\mathsf{B}_{\blacktriangledown}}$ (or $f_{X}^{\bullet}\cup f_{X}|_{\mathsf{B}_{\blacktriangledown}}$) is a disjoint union of $Z$-basis (or $X$-basis) closed walks, potentially triggering readout errors on $X$-basis (or $Z$-basis) readout gadgets. However, restricted to the downstream decoding subgraph of an $H$ bus, a closed walk may contain both $X$-basis walks in $f_{X}^{\bullet}\cup f_{X}|_{\mathsf{B}_{\blacktriangledown}}$  and $Z$-basis walks in $f_{Z}^{\bullet}\cup f_{Z}|_{\mathsf{B}_{\blacktriangledown}}$. In other words, restricted to $H$ buses, $f_{X}^{\bullet}\cup f_{X}|_{\mathsf{B}_{\blacktriangledown}}\cup f_{Z}^{\bullet}\cup f_{Z}|_{\mathsf{B}_{\blacktriangledown}}$ is a disjoint union of closed walks on downstream decoding subgraphs. We will choose the compilation parameter $\alpha_{H}$ sufficiently large that closed walks containing walks from both $f_{Z}^{\bullet}$ and $f_X|_{\mathsf{B}_{\blacktriangledown}}$, or from both $f_X^{\bullet}$ and $f_{Z}|_{\mathsf{B}_{\blacktriangledown}}$, are strongly suppressed.  

We assign two level-1 error locations during the lifetime of an $H$ bus: a level-1 $X$ error location (before the transversal $H$ gate) and a level-1 $Z$ error location (after the transversal $H$ gate). Both error locations trigger the same readout error on the corresponding $H$-transformed readout gadget. Similar to Def.~\ref{def: receptive zone for X or Z error}, we define the receptive zones for both error locations.
\begin{definition}[Receptive zones for level-1 error locations on an $H$ bus]
    Consider an $H$ bus $\sigma$ assigned with two primitive level-1 errors $\mathsf{e}_{\mathsf{X}}$ and $\mathsf{e}_{\mathsf{Z}}$ corresponding to the $X$ error location and the $Z$ error location, respectively. The receptive zone for $\mathsf{e}_{\mathsf{X}}$ is the collection of all level-0 $X$ error locations satisfying at least one of the following conditions. 
\begin{itemize}
        \item The error location corresponds to a downstream error on $\sigma$. 
        \item  The error location is on a CNOT membrane of a core induced by a core-bus CNOT gate $\Lambda_{t}$, which acts on $\sigma$ at level-0 time step $t$.  
\end{itemize}
The receptive zone for $\mathsf{e}_{\mathsf{Z}}$ is the collection of all level-0 $Z$ error locations satisfying at least one of the following conditions.
\begin{itemize}
    \item The error location corresponds to a downstream error on $\sigma$.
    \item The error location is on a CNOT membrane of a core induced by a bus-core CNOT gate $\Lambda_t$, which acts on $\sigma$ at level-0 time step $t$. 
\end{itemize}
\end{definition}

We generalize the definition of downstream webs (Def.~\ref{def: downstream fault web}). Let $\{l_{1},\cdots,l_u\}$ be the collection of disjoint closed walks such that their union is $f_{X}^{\bullet}\cup f_{X}|_{\mathsf{B}_{\blacktriangledown}}\cup f_{Z}^{\bullet}\cup f_{Z}|_{\mathsf{B}_{\blacktriangledown}}$. Consider two $X$-basis closed walks $\ell_{1}$ and $\ell_{2}$ as subsets of $f_X^{\circ}\cup f_{X}|_{\mathsf{C}}$ on cores. Let $\ell_{1}^{\bullet}$ and $\ell_{2}^{\bullet}$ be the collections of propagated simple walks from $\ell_{1}$ and $\ell_{2}$, respectively. We say  $\ell_1$ and $\ell_2$ are linked if there is a propagated simple walk in $\ell_1^{\bullet}$ and another propagated simple walk in $\ell_{2}^{\bullet}$ that belong to the same closed walk in $\{l_1,\cdots,l_u\}$. Consider a set $w$ consisting of closed walks on cores and native walks in $f_{X}|_{\mathsf{B}_{\blacktriangledown}}\cup f_{Z}|_{\mathsf{B}_{\blacktriangledown}}$, such that (i) $w|_{\mathsf{C}}\subset f_X^{\circ}\cup f_{X}|_{\mathsf{C}}$, (ii) $w|_{\mathsf{B}_{\blacktriangledown}}\subset f_X|_{\mathsf{B}_{\blacktriangledown}}\cup f_Z|_{\mathsf{B}_{\blacktriangledown}}$ and contains no native closed walk, and (iii) $\partial_{\blacktriangledown}w=0$. We say $w$ is a generalized downstream fault web if $w|_{\mathsf{C}}$ is a linked cluster of closed walks on cores.   

We now extend Lemma~\ref{lemma: bounds on a general residual error config} to bound $f$ under certain additional restrictions. We follow the notation in Sec.~\ref{subsec: error reduction theorem}, where the section map $\mathbb{S}_{f}$ is defined, and the subscript $\epsilon$ for a general level-0 error configuration denotes its physical-error component. 
\begin{lemma}
    [Extension of Lemma~\ref{lemma: bounds on a general residual error config}]\label{lemma: extension of error bounding lemma}
    Consider the general level-0 residual error configuration $f$. We require that, for any closed walk $l_{h}$ ($h\in\{1,\cdots,u\}$) on a downstream decoding subgraph of an $H$ bus, if $l_h\cap f_{X}|_{\mathsf{B}_{\blacktriangledown}}\neq \emptyset$, then (i) $l_{h}\cap(f_{Z}^{\bullet})=\emptyset$, and (ii) $\mathbb{S}_{f}(l_{h}\cap f_{Z}|_{\mathsf{B}_{\blacktriangledown}})\cap \mathbb{S}_{f}(\tilde{\ell})=\emptyset$ for every cross-membrane cycle $\tilde{\ell}$ in $f_X|_{\mathsf{B}_{\lozenge}}$. We can explicitly construct a level-1 $X$ error $\mathsf{f_X}=\mathsf{e}_1+\mathsf{e}_2+\cdots+\mathsf{e}_{|\mathsf{f}|}$, where each $\mathsf{e}_{i}$ is a distinct primitive level-1 $X$ error, such that $f$ induces a level-1 error configuration that contains $\mathsf{f_X}$ as a subconfiguration. Moreover, we can find two general level-0 error configurations $\lambda_f\subset\mu_f\subset f$ such that (i) $\lambda_f=e_1+\cdots+e_{|\mathsf{f_X}|}$ with each $e_{i}$ a distinct primitive level-0 error whose $X$-basis projection is in the receptive zone of $\mathsf{e}_{i}$, (ii) each connected component of $\mu_f$ (on the level-0 adjacency graph) contains at least one element in $\lambda_f$, and (iii) the weight of physical errors in $\mu_f$, $|\mu_f\cap \epsilon|$, is lower bounded by $\max(|\mu_f|/4,d_0|\mathsf{f_X}|/2)$.
\end{lemma}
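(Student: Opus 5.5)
The plan is to reduce to the $X$-only analysis of Lemma~\ref{lemma: bounds on an error config} and transfer the result to $f$ through the section map $\mathbb{S}_f$, as in Lemma~\ref{lemma: bounds on a general residual error config}. The new ingredient is the mixed closed walks on downstream decoding subgraphs of $H$ buses. Such walks may contain both $X$-type pieces (from $f_X^{\bullet}\cup f_X|_{\mathsf{B}_{\blacktriangledown}}$) and $Z$-type pieces (from $f_Z^{\bullet}\cup f_Z|_{\mathsf{B}_{\blacktriangledown}}$). We call an $H$-bus closed walk $l_h$ \emph{$X$-relevant} if it contains an $X$-type piece, since only these can contribute to the $X$ location $\mathsf{e}_{\mathsf{X}}$ or be charged to $X$-basis sources. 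The two hypotheses handle the two ways a $Z$-type piece can contaminate an $X$-relevant walk.
\begin{itemize}
\item Hypothesis (i) says that if $l_h$ meets $f_X|_{\mathsf{B}_{\blacktriangledown}}$, then it contains no propagated $Z$ walk from cores. The $Z$-type part of $l_h$ is then purely native.
\item Hypothesis (ii) says that these native $Z$ errors, lifted by $\mathbb{S}_f$, are disjoint from the lifts of the $X$-basis cross-membrane source cycles on buses. Counting them as physical weight in $\mu_f$ therefore never double-counts an element already charged to an upstream source.
\end{itemize}

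First, we construct the level-1 $X$ error $\mathsf{f_X}$. We run Algorithm~\ref{Alg: subroutine for analyzing f} on the generalized downstream webs built from $X$-basis core walks, with $H$ buses playing the role of $Z$ buses for the core-bus CNOT gates before $t_H$. A lightning closed walk $l_h$ on an $H$ bus that contains $X$-type pieces is assigned to $\mathsf{e}_{\mathsf{X}}$. A walk with only $Z$-type pieces induces at most $\mathsf{e}_{\mathsf{Z}}$, which we omit; this is why the statement only asks that the induced level-1 configuration \emph{contain} $\mathsf{f_X}$. We then rerun the bookkeeping of Lemma~\ref{lemma: bounds on a downstream web}.
\begin{itemize}
\item Leading and trailing bus errors, lightning core errors, and native bus errors are defined exactly as before.
\item For each selected $l_h$, we put into $\mu$ its $X$-type native part and its sources. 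We also put in its $Z$-type native part, lifted through $\mathbb{S}_f$.
\item The representative $e_i\in\lambda_f$ is chosen as an $X$-type element whenever one exists. Membership in the receptive zone of $\mathsf{e}_{\mathsf{X}}$ then follows from the new definition: a downstream error on $\sigma$, or an error on a core CNOT membrane induced by a core-bus gate acting on $\sigma$.
\item If $l_h\cap f_X|_{\mathsf{B}_{\blacktriangledown}}=\emptyset$, the $X$-relevance must come from $f_X^{\bullet}$. In that case we choose $e_i$ on the source membrane, as in the leading/trailing case.
\end{itemize}
Connectivity on the adjacency graph (condition (ii)) holds because each added piece is adjacent, through shared syndromes on $l_h$ or through the membrane, to its chosen representative.

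The weight bounds come next. The inequalities in Eqs.~\ref{eq: bound f cm and downstream walks}--\ref{eq: bound f cm walks} use only two facts: MWPM optimality on each of $\mathcal{G}_{\mathsf{B}_\lozenge}$, $\mathcal{G}_{\mathsf{C}}$, $\mathcal{G}_{\mathsf{B}_\blacklozenge}$, and the factor-$1/2$ propagation bound of Lemma~\ref{lemma: propagation of cross-membrane cycle}. Both still hold here.
\begin{itemize}
\item On a downstream subgraph we have $|l_h\cap f_\kappa|\le |l_h\cap f_\epsilon|+|l_h\cap f^{\bullet}|$, where the physical part now includes native $Z$-type errors.
\item By hypothesis (i), $l_h\cap f^\bullet=l_h\cap f_X^\bullet$, so the propagated weight is bounded by $|\mu_{\mathrm{m},||}|/2$ with $\mu_{\mathrm{m}}$ built from $X$-basis core walks only.
\item Since $d_0\le|l_h|$ for a lightning walk, Eq.~\ref{eq: bound f cm and downstream walks} carries over verbatim with $\mu_f|_{\mathsf{B}_{\blacktriangledown}}$ replaced by its lifted version.
\end{itemize}
When we sum the contributions to $|\mu_f\cap\epsilon|$, we use $|\mathbb{S}_f(g)\cap\epsilon|\ge\max(|g_X\cap\epsilon_X|,|g_Z\cap\epsilon_Z|)$ on each disjoint block. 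Hypothesis (ii) is exactly what makes the lifted $Z$-part of $l_h$ and the lifted $X$-source cycles disjoint blocks, so their physical weights add. This gives both $d_0|\mathsf{f_X}|/2$ and $|\mu_f|/4$, as in the earlier proof.

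The main obstacle is this disjointness and double-counting step. A single depolarizing $Y$ error on an $H$ bus could be counted once as an upstream $X$ source and once as a downstream $Z$ native error, which would halve the bound. I would handle it by tracking $\mu_f$ directly as a subset of $f$, not of $f_X\cup f_Z$, and invoking hypothesis (ii) at precisely this summation. A secondary subtlety is the time step $t_H$ itself, where upstream and downstream detectors meet. I would check that the upstream/downstream partition of errors in the definition places every error at $t_H$ in exactly one subgraph, so that no cycle straddles both graphs.
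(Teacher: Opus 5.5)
Your proposal is correct and follows essentially the same route as the paper. You assign $\mathsf{e}_{\mathsf{X}}$ to $H$-bus walks meeting $f_X^{\bullet}\cup f_X|_{\mathsf{B}_{\blacktriangledown}}$ and use hypothesis (i) so that generalized downstream webs absorb only native $Z$-type pieces; you then rerun the $X$-only bookkeeping of Lemma~\ref{lemma: bounds on an error config} to get $\mu\subset f_X\cup f_Z$ with the weight bound counted over $f_{X,\epsilon}\cup f_{Z,\epsilon}$, and use hypothesis (ii) to make $\mathbb{S}_f(\mu\cap f_{X,\epsilon})$ and $\mathbb{S}_f(\mu\cap f_{Z,\epsilon})$ disjoint (exactly your $Y$-error double-counting concern) before lifting via $\mathbb{S}_f$. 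One point you and the paper both leave implicit is why (i) and (ii) also cover walks that are $X$-relevant only through $f_X^{\bullet}$: any walk joining an $f_X^{\bullet}$ piece (at a core-bus CNOT step before $t_H$) to a $Z$-type piece (at or after $t_H$) must traverse $Z$-stabilizer measurement errors before $t_H$, which lie in $f_X|_{\mathsf{B}_{\blacktriangledown}}$, so the hypotheses do apply.
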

\begin{proof}
   For a lightning walk $l_{h}$ in $\{l_{1},\cdots,l_{u}\}$ on an $H$ bus $\sigma$, we let $l_{h}$ induce the primitive level-1 $X$ error on $\sigma$ if $l_{h}\cap (f_{X}^{\bullet}\cup f_{X}|_{\mathsf{B}_{\blacktriangledown}})\neq \emptyset$; otherwise, we let $l_{h}$ induce the primitive level-1 $Z$ error on $\sigma$. The first requirement on $f$ guarantees that we can find a set $\{w_1,\cdots,w_o\}$ of generalized downstream webs and a set of native closed walks $\{l_{b_1},\cdots,l_{b_v}\}\subset \{l_1,\cdots,l_u\}$, such that (i) these webs and closed walks are all mutually disjoint, (ii) their union contains $f_X^{\circ}\cup f_{X}|_{\mathsf{C}}\cup f_X|_{\mathsf{B}_{\blacktriangledown}}$ and is contained in $f_{X}^{\circ}\cup f_{X}|_{\mathsf{C}}\cup f_X|_{\mathsf{B}_{\blacktriangledown}}\cup f_{Z}|_{\mathsf{B}_{\blacktriangledown}}$, and (iii) each $l_{b_i}$ ($i\in\{1,\cdots,v\}$) intersects with $f_{X}^{\bullet}\cup f_{X}|_{\mathsf{B}_{\blacktriangledown}}$. Let $g=\left(\bigcup_{i=1}^{o}w_i\right)\cup\left(\bigcup_{i=1}^{v}l_{b_i}\right)$. Following Lemma~\ref{lemma: bounds on an error config}, we can construct a level-1 $X$ error configuration $\mathsf{f}_{\mathsf{X}}:=\mathsf{e}_{1}+\cdots+\mathsf{e}_{|\mathsf{f}_{\mathsf{X}}|}$, with each $\mathsf{e}_{i}$ a distinct primitive level-1 $X$ error, such that $g$ induces $\mathsf{f}_{\mathsf{X}}$. Furthermore, we can find two level-0 error configurations $\lambda_{X}$ and $\mu$ with $\lambda_{X}\subset \mu\subset (f_X\cup f_{Z})$, such that (i) $\lambda_{X}=e_{X,1}+\cdots+e_{X,|\mathsf{f}_{\mathrm{X}}|}$ with each $e_{X,i}$ a distinct level-0 $X$ error in the receptive zone of $\mathsf{e}_{i}$, (ii) each connected component of $\mu$ on the level-0 adjacency graph contains at least one element in $\lambda_{X}$, (iii) $\mu|_{\mathsf{B}_{\lozenge}}\subset f_{X}|_{\mathsf{B}_\lozenge}$, $\mu|_{\mathsf{C}}\subset f_X|_{\mathsf{C}}$, and $\mu|_{\mathsf{B}_{\blacktriangledown}}\subset g\cap (f_X|_{\mathsf{B}_{\blacktriangledown}}\cup f_Z|_{\mathsf{B}_{\blacktriangledown}})$, and (iv) $|\mu\cap (f_{X,\epsilon}\cup f_{Z,\epsilon})|\geq \max(|\mu|/4,d_{0}|\mathsf{f}_{\mathsf{X}}|/2)$. The second requirement on $f$ guarantees that $\mathbb{S}_f(\mu\cap f_{X,\epsilon})\cap \mathbb{S}_f(\mu\cap f_{Z,\epsilon})=\emptyset$, indicating that $|\mathbb{S}_f(\mu)\cap f_{\epsilon}|\geq \max(|\mathbb{S}_f(\mu)|/4,d_{0}|\mathsf{f}_{\mathsf{X}}|/2)$. We then prove the lemma by setting $\lambda_{f}=\mathbb{S}_{f}(\lambda_{X})$ and $\mu_{f}=\mathbb{S}_f(\mu)$.
\end{proof}

We now state and provide the following theorem, which extends Theorem~\ref{thm: approximate error reduction} to incorporate $H$-transformed gadgets. 
\begin{corollary}
    [Extension of Theorem~\ref{thm: approximate error reduction}]\label{corollary: extension of error reduction thm}
    Consider a level-0 circuit composed of level-0 SE rounds on cores and shuttle buses and X-basis, $Z$-basis, and $H$-transformed readout gadgets.  Suppose we are only interested in the results of a subset $\mathcal{M}_1$ of readout gadgets, such that (i) the separation between every two $X$-basis (or $Z$-basis) readout gadgets in $\mathcal{M}_1$ is at least $d_{0}$; (ii) the $X$-basis and $Z$-basis separations between an $H$-transformed gadget and other gadgets are at least $d_{0}$. The compilation parameters $\alpha_{\mathsf{c}}$, $\alpha_{\mathsf{b}}$, and $\alpha_{H}$ are set as $1$, $d_{1}$, and $2d_{1}$, respectively. We follow the notation in Theorem~\ref{thm: approximate error reduction}. Then, under the phenomenological depolarizing noise model with a noise rate $p<1/(4r\mathfrak{e})^{10}$, except for a rare event with a probability upper bounded by $2\mathsf{v}_{\mathsf{b}}(p/p_{th}')^{d_{0}d_{1}/2}$, the induced level-1 $X$ (or $Z$) errors are local stochastic, each with a level-1 error rate upper bounded by $z_{\mathrm{max}}(p/p_{th})^{d_{0}/2}$. Here, $p_{th}'=1/(2r\mathfrak{e})^{10}$.
\end{corollary}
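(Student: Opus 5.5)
The plan is to follow the proof of Theorem~\ref{thm: approximate error reduction} almost verbatim, with Lemma~\ref{lemma: extension of error bounding lemma} replacing Lemma~\ref{lemma: bounds on a general residual error config}. The only new work is to show that the extra hypotheses of Lemma~\ref{lemma: extension of error bounding lemma} fail only on a rare event of probability at most $\mathrm{v}_{\mathsf{b}}(p/p_{th}')^{d_0d_1/2}$. This failure event, together with the event $\mathcal{B}_{d_1}$ already excluded in Theorem~\ref{thm: approximate error reduction}, accounts for the factor $2\mathrm{v}_{\mathsf{b}}$. Since $p_{th}'\le\sqrt{p_{th}}$, the bound on $\mathcal{B}_{d_1}$ from Theorem~\ref{thm: approximate error reduction} can be absorbed into $\mathrm{v}_{\mathsf{b}}(p/p_{th}')^{d_0d_1/2}$. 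That probability is computed exactly as before: cycles on the upstream decoding subgraphs of $H$ buses are counted with Lemma~\ref{lemma: counting lemma}.

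I will work in two stages. First, I exclude $\mathcal{B}_{d_1}$, the event that some upstream cycle in $f|_{\mathsf{B}_{\lozenge}}$ (on $X$, $Z$, or $H$ buses) has weight at least $d_0d_1$. The counting argument is unchanged, since upstream subgraphs on $H$ buses have the same size and degree bounds. Second, I define $\mathcal{B}_H$, the event that some closed walk $l_h$ on a downstream subgraph of an $H$ bus violates requirement (i) or (ii) of Lemma~\ref{lemma: extension of error bounding lemma}. The key geometric input is $\alpha_H=2d_1$. Core-bus CNOT gates on an $H$ bus occur before $t_H$, bus-core gates after, and each is separated from $t_H$ by at least $2d_1d_0$ level-0 steps. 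Propagated walks in $f_Z^{\bullet}$ land on the bus at bus-core membranes, after $t_H+2d_0d_1$. Native $X$-type downstream errors in $f_X|_{\mathsf{B}_{\blacktriangledown}}$ receive propagated $X$ walks from core-bus membranes before $t_H-2d_0d_1$. Hence any closed walk mixing the two, as in (i), must traverse at least $2d_0d_1$ time-like steps on the bus.

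Requirement (ii) concerns overlap between the $Z$ components of native downstream errors and the $Y$-partners of an $X$-basis upstream cross-membrane cycle. Overlaps near $t_H$ are handled the same way. A cross-membrane upstream cycle has weight below $d_0d_1$ and touches a core-bus membrane, so it cannot reach within $d_0d_1$ steps of $t_H$. Any $l_h$ containing native $X$ errors and also reaching such a cycle's support therefore again spans a long time-like stretch.

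For the counting, let $\mathcal{H}'_{\mathrm{w}}$ be the weight-$\mathrm{w}$ connected clusters of residual errors on a bus realizing such a long walk, with $\mathrm{w}\ge d_0d_1$. Under MWPM at most half of them are inferred errors, as in Theorem~\ref{thm: approximate error reduction}. The propagated simple walks contributing to $l_h$ come from upstream or core walks of weight at most twice their own weight, by Lemma~\ref{lemma: propagation of cross-membrane cycle}. After charging these, I expect to need a cluster of at least roughly $\mathrm{w}/5$ physical errors in a configuration of total size at most $\mathrm{w}$. This factor-of-five loss is where the exponent $10$ in $p_{th}'=1/(2r\mathfrak{e})^{10}$ and the condition $p<1/(4r\mathfrak{e})^{10}$ come from. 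The resulting sum $\sum_{\mathrm{w}\ge 2d_0d_1}(\ldots)$ gives the bound $\mathrm{v}_{\mathsf{b}}(p/p_{th}')^{d_0d_1/2}$.

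Outside $\mathcal{B}_{d_1}\cup\mathcal{B}_H$, I will build the key collection $\mathcal{F}_{\mathsf{g}}$ as in Theorem~\ref{thm: approximate error reduction}, using the new receptive zones on $H$ buses. Their sizes are still bounded by $8\tau_{\mathsf{b}}d_0^2d_1\le z_{\max}$. Lemma~\ref{lemma: extension of error bounding lemma} then supplies condition 1 of Lemma~\ref{lemma: sufficient condition for bounding level 1 error}, Lemma~\ref{lemma: counting lemma} supplies condition 2, and together they give local stochasticity with rate $z_{\max}(p/p_{th})^{d_0/2}$.

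The main obstacle is the charging argument for $\mathcal{B}_H$. A long downstream walk may consist largely of propagated rather than native errors, so its length must be traced back through cores to physical errors on upstream subgraphs without overcounting. I would first attempt this by reusing the inequalities of Lemma~\ref{lemma: bounds on an error config}, with $\mathsf{f}$ replaced by the single event ``walk of temporal extent at least $2d_0d_1$''.
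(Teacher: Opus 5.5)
Your overall structure matches the paper's proof:
\begin{itemize}
\item exclude long upstream cycles, absorbing that bound via $p_{th}'\le\sqrt{p_{th}}$;
\item define a second bad event on which the hypotheses of Lemma~\ref{lemma: extension of error bounding lemma} fail, and show it forces a connected cluster with many physical errors;
\item union-bound to get $2\mathrm{v}_{\mathsf{b}}$, then rerun Theorem~\ref{thm: approximate error reduction}.
\end{itemize}
Your treatment of requirement (i) also matches. Native $X$ downstream errors sit at or before $t_H$, while $f_Z^{\bullet}$ lands at bus-core membranes at or after $t_H+2d_0d_1$, so $|l_{h,\perp}|\ge 2d_0d_1$.

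Requirement (ii), however, falls short as you argue it. You only use that the offending upstream cycle has weight below $d_0d_1$ and so stays more than $d_0d_1$ after $t_H$. (That cycle lives in the $X$-bus phase and crosses a \emph{bus-core} membrane, not a core-bus one.) This makes $l_h$ span only about $d_0d_1$ time steps. A lightning $l_h$ need not return, so you cannot double this, and you lose a factor of $2$ in the exponent. The paper's fix is to put the overlapping upstream cycles $\tilde\ell'$ (those acting on the support of $l_h\cap f_Z|_{\mathsf{B}_{\blacktriangledown}}$) into the cluster and count their time-like edges too. Then $l_h$ bridges from $t_H$ to the overlap point, and $\tilde\ell'$ from there to its membrane at time at least $t_H+2d_0d_1$. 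This gives $|l_{h,\perp}|+|\tilde\ell'_{\perp}|\ge 2d_0d_1$.

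The exponent bookkeeping also needs a second, separate inequality that your sketch does not supply. ``$\mathrm{w}/5$ physical errors in a cluster of size $\mathrm{w}\ge d_0d_1$'' by itself only yields an exponent of order $d_0d_1/5$. The paper builds an explicit cluster $\mu$ from four pieces: the native part of $l_h$, the native parts of the core walks $\ell$ sourcing walks in $l_h$, their upstream sources $\tilde\ell$, and $\tilde\ell'$. It then applies MWPM together with Lemma~\ref{lemma: propagation of cross-membrane cycle} in the correct direction. Propagated weight is at most half the source's qubit weight, or at most the full weight for lightning core sources; it is not that sources have ``weight at most twice''. This yields two inequalities:
\begin{itemize}
\item $|\mu|\le 5|\mu\cap(f_{X,\epsilon}\cup f_{Z,\epsilon})|$;
\item $|l_h\cap(f_X|_{\mathsf{B}_{\blacktriangledown}}\cup f_Z|_{\mathsf{B}_{\blacktriangledown}})|+|\tilde\ell'|\le 2|\mu\cap(f_{X,\epsilon}\cup f_{Z,\epsilon})|$.
\end{itemize}
The second, combined with the $2d_0d_1$ time-like bound, gives at least $d_0d_1$ physical $X$/$Z$ errors. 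Passing to general errors through $\mathbb{S}_f$ halves both bounds, since a $Y$ error counts once, giving $\max(|\mu_f|/10,\,d_0d_1/2)$. The direct bound $d_0d_1/2$ is what produces the exponent in Lemma~\ref{lemma: sufficient condition for bounding level 1 error}. The constant $C_1=10=5\cdot 2$, not the factor $5$ alone, is what fixes $p_{th}'=1/(2r\mathfrak{e})^{10}$ and the condition $p<1/(4r\mathfrak{e})^{10}$.
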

\begin{proof}
    The proof idea is to exclude enough rare events so that the remaining physical error configurations induce general residual error configurations that would always meet the requirements in Lemma~\ref{lemma: extension of error bounding lemma}. In this way, the proof of Theorem~\ref{thm: approximate error reduction} works. 
    
    Similar to the proof of Theorem~\ref{thm: approximate error reduction}, we first exclude the event $\mathcal{B}$ that a general level-0 residual error configuration induces an $X$-basis cycle on an $X$ bus, a $Z$-basis cycle on a $Z$ bus, or an upstream cycle on an $H$ bus, of weight at least $d_{0}d_{1}$. According to Theorem~\ref{thm: approximate error reduction}, $P(\mathcal{B})\leq \mathrm{v}_{\mathsf{b}}(p/\sqrt{p_{th}})^{d_{0}d_1/2}$.

    Consider a general level-0 residual error configuration $f$ corresponding to a connected component on the level-0 adjacency graph. Following the notations in Lemma~\ref{lemma: extension of error bounding lemma}, we now analyze when $f$ would violate the two requirements therein. Consider a closed walk $l_h\in\{l_1,\cdots,l_{u}\}$ on an $H$ bus $\sigma$ with $l_h\cap f_{X}|_{\mathsf{B}_{\blacktriangledown}}\neq \emptyset$. Among the (mutually disjoint) closed walks on cores formed by $f^{\circ}\cup f_{X}|_{\mathsf{C}}\cup f_{Z}|_{\mathsf{C}}$, let $\{\ell_{1},\cdots,\ell_{a}\}$ be the subcollection of all sources of simple walks in $l_{h}\cap (f_{X}^{\bullet}\cup f_{Z}^{\bullet})$. Among the cross-membrane cycles in $f_X|_{\mathsf{B}_{\lozenge}}\cup f_Z|_{\mathsf{B}_{\lozenge}}$, let $\{\tilde{\ell}_1,\cdots,\tilde{\ell}_{r}\}$ be the subcollection of all sources of simple walks in $\bigcup_{i=1}^{a}(\ell_i\cap f^{\circ})$, and let $\{\tilde{\ell}_{r+1},\cdots,\tilde{\ell}_{r+s}\}$ be the subcollection of all cycles that act on qubits in the support of $l_h\cap f_{Z}|_{\mathsf{B}_{\blacktriangledown}}$. Define $\ell=\bigcup_{i=1}^{a} \ell_{i}$, $\tilde{\ell}=\bigcup_{i=1}^{r}\tilde{\ell}_i$, and $\tilde{\ell}'=\bigcup_{i=1}^{s}\tilde{\ell}_{r+i}$. Define
\begin{align}\label{eq: downward pyramid}
    \mu=&\big[\left(l_h\cap(f_{X}|_{\mathsf{B}_{\blacktriangledown}}\cup f_{Z}|_{\mathsf{B}_{\blacktriangledown}})\right) \nonumber \\
    &\cup \left(\ell\cap \left(f_X|_{\mathsf{C}}\cup f_Z|_{\mathsf{C}}\right)\right)\cup \tilde{\ell}\cup \tilde{\ell}'\big].
\end{align}
By construction, $\mu\subset f_{X}\cup f_{Z}$ is a connected cluster on the level-0 decoding graph. 

By MWPM decoding, we have the following bounds on $\tilde{\ell}$ and $\tilde{\ell}'$.
\begin{align}\label{eq: cloud cycles of the downward pyramid}
    |\tilde{\ell}|\leq& 2|\tilde{\ell}\cap (f_{X,\epsilon}|_{\mathsf{B}_{\lozenge}})| \nonumber\\
    |\tilde{\ell}'|\leq& 2|\tilde{\ell}'\cap (f_{X,\epsilon}|_{\mathsf{B}_{\lozenge}})|
\end{align}
As the total weight of propagated simple walks from $\tilde{\ell}$ is upper bounded by $|\tilde{\ell}|/2$ (Lemma~\ref{lemma: propagation of cross-membrane cycle}), we see that
\begin{align}\label{eq: core cycle in the downward pyramid}
    |\ell\cap (f_X|_{\mathsf{C}}\cup f_{Z}|_{\mathsf{C}})|\leq& 2|\ell\cap(f_{X,\epsilon}|_{\mathsf{C}}\cup f_{Z,\epsilon}|_{\mathsf{C}})| + |\tilde{\ell}_{||}|/2 \nonumber\\
    |\ell|\leq& 2|\ell\cap(f_{X,\epsilon}|_{\mathsf{C}}\cup f_{Z,\epsilon}|_{\mathsf{C}})| + |\tilde{\ell}_{||}|
\end{align}
As $\ell$ may contain both lightning and cross-membrane walks, the total weight of propagated simple walks from $\ell$ is upper bounded by $|\ell|$. Then, we have
\begin{align}\label{eq: root of downward pyramid}
    |l_h\cap (f_{X}|_{\mathsf{B}_{\blacktriangledown}}\cup f_{Z}|_{\mathsf{B}_{\blacktriangledown}})|\leq& 2|l_h\cap (f_{X,\epsilon}|_{\mathsf{B}_{\blacktriangledown}}\cup f_{Z,\epsilon}|_{\mathsf{B}_{\blacktriangledown}})| + |\ell| 
\end{align}
Combining Eqs.~\ref{eq: cloud cycles of the downward pyramid},~\ref{eq: core cycle in the downward pyramid}, and~\ref{eq: root of downward pyramid}, we have
\begin{align}\label{eq: bounds on downward pyramid}
    |\mu|/5\leq& |\mu\cap (f_{X,\epsilon}\cup f_{Z,\epsilon})|\nonumber\\
    |l_h\cap (f_X|_{\mathsf{B}_{\blacktriangledown}}\cup f_Z|_{\mathsf{B}_{\blacktriangledown}})|+|\tilde{\ell}'|\leq& 2|\mu\cap (f_{X,\epsilon}\cup f_{Z,\epsilon})|
\end{align}
If $l_h$ does not satisfy the first requirement in Lemma~\ref{lemma: extension of error bounding lemma}, then $l_h\cap f_{Z}^{\bullet}\neq \emptyset$, which implies
\begin{equation}\label{eq: H bus cycle intersect w z membrane}
    2d_{0}d_{1}\leq|l_{h,\perp}|\leq |l_{h}\cap (f_X|_{\mathsf{B}_{\blacktriangledown}}\cup f_Z|_{\mathsf{B}_{\blacktriangledown}})|.
\end{equation}
If $l_h$ does not satisfy the second requirement in Lemma~\ref{lemma: extension of error bounding lemma}, then
\begin{equation}\label{eq: H bus cycle down up mix}
    2d_{0}d_{1}\leq |l_{h,\perp}| + |\tilde{\ell}_{\perp}'|\leq |l_h\cap (f_X|_{\mathsf{B}_{\blacktriangledown}}\cup f_Z|_{\mathsf{B}_{\blacktriangledown}})|+|\tilde{\ell}'|.
\end{equation}
From Eq.~\ref{eq: bounds on downward pyramid}, we see that both Eq.~\ref{eq: H bus cycle intersect w z membrane} and Eq.~\ref{eq: H bus cycle down up mix} imply
\begin{equation}\label{eq: physical errors in a downward pyramid}
    d_{0}d_{1}\leq |\mu \cap (f_{X,\epsilon}\cup f_{Z,\epsilon})|.
\end{equation}
Let $\mu_{f}=\mathbb{S}_{f}(\mu)$. Then, the weight of physical errors in $\mu_f$ is lower bounded by $|\mu\cap (f_{X,\epsilon}\cup f_{Z,\epsilon})|/2$. From Eq.~\ref{eq: bounds on downward pyramid} and Eq.~\ref{eq: physical errors in a downward pyramid}, we see that when $f$ violates either requirement in Lemma~\ref{lemma: extension of error bounding lemma}, we can find a subconfiguration $\mu_{f}\subset f$, such that (i) $\mu_f$ is a connected cluster on the level-0 decoding graph, (ii) $\mu_{f}$ contains at least one downstream level-0 error on an $H$ bus, and (iii) the weight of physical errors in $\mu_f$ is lower bounded by $\max(|\mu_{f}|/10,d_{0}d_{1}/2)$. 

Consider the event $\mathcal{B}'$ that the induced general level-0 residual error configuration violates at least one requirement in Lemma~\ref{lemma: extension of error bounding lemma}. Then, following the proof of Lemma~\ref{lemma: sufficient condition for bounding level 1 error}, we have
\begin{align}
    P(\mathcal{B}')\leq \mathrm{v}_{\mathsf{b}}\left(p/p_{th}'\right)^{d_{0}d_{1}/2} 
\end{align}
We note that the right-hand side of the above inequality also upper bounds $P(\mathcal{B})$. After discarding both $\mathcal{B}$ and $\mathcal{B}'$ (with a joint probability upper bounded by $2\mathrm{v}_{\mathsf{b}}\left(p/p_{th}'\right)^{d_{0}d_1/2}$), all induced general level-0 residual error configurations satisfy Lemma~\ref{lemma: extension of error bounding lemma}. Therefore, the proof of Theorem~\ref{thm: approximate error reduction} also holds and provides the same upper bound for the level-1 error rate.    
\end{proof}
We note that we have not optimized for a larger threshold value in the above corollary. We only aimed to prove that for a small enough error rate $p$, except for a rare event with probability $\sim p^{-d_{0}d_1/2}$, the induced level-1 $X$ (or $Z$) errors are effectively uncorrelated with a level-1 error rate $\sim p^{-d_{0}/2}$. This result then aligns with Theorem~\ref{thm: approximate error reduction}. 

We now construct an LMS, referred to as an $H$-LMS, from $H$-transformed logical readout gadgets to reliably measure a logical operator $\mathsf{P}_{xz}$ with $x\cdot z=0$. We note that the level-1 decoding process for a single $H$-LMS simultaneously involves level-1 $X$ and $Z$ errors. However, Corollary~\ref{corollary: extension of error reduction thm} only proves that the level-1 $X$ (or $Z$) errors are almost uncorrelated but does not rule out correlation between level-1 $X$ and $Z$ errors. In practice, it is empirically observed that logical $X$ and $Z$ errors on an idling RSC are almost uncorrelated~\cite{gidney_yoked_2025}. Thus, we may expect that we can effectively regard level-1 $X$ and $Z$ errors in an HLP as uncorrelated. Under this assumption, the $H$-LMS only requires $d_{1}$ logical readout gadgets to achieve full level-1 distance. We note that future numerical benchmarking is required to substantiate this. 

To guarantee the worst-case performance, we show in the following that an $H$-LMS, now consisting of $4d_{1}$ logical readout gadgets across $4d_1$ consecutive level-1 SE rounds, can reach similar logical performance even without the assumption above. We will use a modified level-1 decoding procedure (Algorithm~\ref{Alg: alternative level 1 decoding}). Based on this decoding procedure, we analyze the performance of this LMS without the above assumption. 
\begin{algorithm}
    \let\oldnl\nl
    \newcommand{\nonl}{\renewcommand{\nl}{\let\nl\oldnl}}
    \caption{Alternative level-1 decoding procedure for an $H$-LMS}\label{Alg: alternative level 1 decoding}
    \KwData{\justifying\small A level-1 detector error model configured for readout gadgets in an $H$-LMS and level-1 SE.}
    \KwIn{\justifying \small A level-1 syndrome configuration $\eta_{1}\in\mathbb{Z}_2^{|\mathsf{D}_{\mathsf{S}}\cup \mathsf{D}_{\mathsf{H}}|}$. Here, $\mathsf{D}_{\mathsf{S}}$ denotes the collection of level-1 detectors induced by level-1 SE; $\mathsf{D}_{\mathsf{H}}$ denotes the collection of level-1 detectors induced by the $H$-LMS.} 
    \KwOut{\justifying\small Inferred level-1 error configuration.}  
    \justifying\small 
    Let $\eta_{1,\mathsf{X}}$ and $\eta_{1,\mathsf{Z}}$ be the subconfigurations of $X$ syndromes and $Z$ syndromes in $\eta_{1}|_{\mathsf{D}_{\mathsf{S}}}$, respectively. Let $\mathsf{E}_{\mathsf{S,X}}$ and $\mathsf{E}_{\mathsf{S,Z}}$ be the collections of primitive level-1 $X$ and $Z$ errors, respectively, on cores and shuttle buses for level-1 SE.
    
   Restricted to syndromes in $\mathsf{D}_{\mathsf{S}}$, perform most-likely-error (MLE)~\cite{gottesman_fault_tolerant_2014,cain_correlated_2024,zhou_low_overhead_2025} decoding on $\mathsf{E}_{\mathsf{S,X}}$ with syndrome $\eta_{1,\mathsf{X}}$, yielding $\mathsf{g}_{\mathsf{X}}$. Similarly, perform MLE decoding on $\mathsf{E}_{\mathsf{S},\mathsf{Z}}$ with syndrome $\eta_{1,\mathsf{Z}}$, yielding $\mathsf{g}_{\mathsf{Z}}$. 

    Perform MWPM decoding on level-1 errors on $H$-buses with the level-1 syndrome configuration $\eta_{1}+\overline{\partial}_{1}\mathsf{g}_{\mathsf{X}}+\overline{\partial}_{1}\mathsf{g}_{\mathsf{Z}}$. Let $\mathsf{h}$ be the decoded result. (This is equivalent to decoding a repetition code.)

    \Return{$\mathsf{g}_{\mathsf{X}}+\mathsf{g}_{\mathsf{Z}}+\mathsf{h}$.}
\end{algorithm}
\begin{corollary}
    [Fault probability for an $H$-LMS]\label{corollary: fault probability for h lms}
    Consider an HLP run for $\mathsf{T}$ level-1 SE rounds with a collection of $X$-basis and $Z$-basis LMSs and $H$-LMSs. Every $H$-LMS uses $4d_1$ $H$-transformed logical readout gadgets. We assume perfect time boundaries for the HLP. We focus on a fixed $H$-LMS executed on the HLP. We set compilation parameters $\alpha_{\mathsf{c}}=1$, $\alpha_{\mathsf{b}}=d_{1}$, and $\alpha_{H}=2d_1$. We require that every $H$-transformed logical readout gadget is separated from all other logical gadgets in the same LMS, as well as from readout gadgets for level-1 SE, by at least $d_{0}$ level-0 time steps in the $X$-basis and $Z$-basis separations. We follow the notation in Theorem~\ref{thm: fault probability for LMSs}. Then, under the phenomenological depolarizing noise model with an error rate $p<1/(4r\mathfrak{e})^{10}$ at level 0 and with $d_{0}\geq d_{H}$, the probability of a logical measurement error for the $H$-LMS is upper bounded by $C_{H}(p/p_{th})^{d_{0}d_{1}/4}+2\mathrm{v}_{\mathsf{b}}(p/p_{th}')^{d_{0}d_{1}/2}$. Here, $d_H=2\log\left((4r_H\mathfrak{e})^4z_{\mathrm{max}}\right)/\log(p_{th}/p)$, $C_{H}=N_H\left((2r_H\mathfrak{e})^2\sqrt{z_{\mathrm{max}}}\right)^{d_1}$, $N_H=(n(\delta_1+2)+s_1)\mathsf{T}+8d_1$, $r_{H}=16\delta_1^2\max(\varpi_1,\varpi_L)$, and $p_{th}'=1/(2r\mathfrak{e})^{10}$.   
\end{corollary}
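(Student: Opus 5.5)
The plan mirrors the proof of Theorem~\ref{thm: fault probability for LMSs}, with Corollary~\ref{corollary: extension of error reduction thm} as the level-0 input and Algorithm~\ref{Alg: alternative level 1 decoding} in place of joint MLE decoding. First discard the rare events $\mathcal{B}\cup\mathcal{B}'$ of Corollary~\ref{corollary: extension of error reduction thm}; these are responsible for the additive term $2\mathrm{v}_{\mathsf{b}}(p/p_{th}')^{d_0d_1/2}$. On the complement, the induced level-1 $X$ errors are local stochastic with rate $p_1\le z_{\mathrm{max}}(p/p_{th})^{d_0/2}$, and so are the level-1 $Z$ errors. The joint $X$--$Z$ distribution is not controlled. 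The key device is therefore to never need a joint bound: every failure event will be written as a union of events, each involving \emph{only} $X$ errors or \emph{only} $Z$ errors, each of weight at least roughly $d_1$. Each such event is then bounded by local stochasticity within one Pauli sector, and the results are combined by a union bound.

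Next, analyze Algorithm~\ref{Alg: alternative level 1 decoding}. In step 2, the $X$ and $Z$ SE decoders act on separate sectors, so the residuals $\mathsf{f}_{\mathsf{X}}+\mathsf{g}_{\mathsf{X}}$ and $\mathsf{f}_{\mathsf{Z}}+\mathsf{g}_{\mathsf{Z}}$ have zero SE syndrome. Following the proof of Lemma~\ref{lemma: level 1 distance for LMS}, decompose each residual into connected clusters $\mathsf{f}_{\mathsf{S},i}$ on $\mathcal{A}_{\mathsf{S}}$. Back-propagating a cluster to round $\mathsf{t}_{i,0}$ turns its core part into a level-1 stabilizer, plus at most $|\mathsf{f}_{\mathsf{S},i}|$ effective flips of $H$-gadget readout values. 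These flips are confined to rounds $\mathsf{t}_{i,0}-1,\dots,\mathsf{t}_{i,1}$. A cluster whose core part is a nontrivial logical operator anticommuting with $\mathsf{P}_{xz}$ necessarily has weight at least $d_1$.

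Step 3 then decodes a repetition code of length $4d_1$ on the $H$-gadget readouts. This code sees the native $H$-bus errors together with these effective flips. A logical measurement error requires the repetition decoder to fail, meaning at least about $2d_1$ of the $4d_1$ readouts are effectively flipped, or a logical-weight SE residual cluster. Each flip is caused either by a primitive $H$-bus error or by an SE-residual cluster (in one sector) spanning the corresponding round. Each $H$-bus flip arises from exactly one of two error types: an $X$ error on the first half of the bus or a $Z$ error on the second half.

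Among roughly $2d_1$ flips, pigeonhole then yields at least about $d_1$ flips attributable to a single sector. Since MLE decoding at most doubles physical weight, that sector contains a connected cluster of actual level-1 errors of weight at least about $d_1/2\cdot 2=d_1$, counting both physical and inferred errors in the usual MWPM-style way. This is exactly why $4d_1$ gadgets are used instead of $d_1$: the repetition-code threshold halving and the sector split each cost a factor of $2$.

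The counting then proceeds as in Theorem~\ref{thm: fault probability for LMSs}. Single-sector connected clusters of weight $\mathrm{w}\ge d_1$ containing a fixed location number at most $N_H(r_H\mathfrak{e})^{\mathrm{w}-1}$. Here $N_H$ counts level-1 $X$ or $Z$ locations, including the $8d_1$ locations on $H$ buses, and $r_H$ bounds the adjacency degree, which is enlarged because each $H$-bus location touches the repetition detectors. Each cluster has probability at most $2^{\mathrm{w}}p_1^{\mathrm{w}/2}$. With $d_0\ge d_H$, the geometric sum gives $N_H((2r_H\mathfrak{e})^2p_1)^{d_1/2}\le C_H(p/p_{th})^{d_0d_1/4}$.

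The main obstacle is the pigeonhole step: showing rigorously that a repetition-decoder failure forces a weight-$\gtrsim d_1$ connected cluster inside \emph{one} Pauli sector. This requires tracking how a single SE residual cluster can flip several consecutive readouts, which is handled by bounding the flips by the cluster weight as in Lemma~\ref{lemma: level 1 distance for LMS}. It also requires ensuring that the chosen subcluster remains connected on the single-sector adjacency graph. I would first try to assign each flipped readout to a sector-specific witness cluster and merge witnesses of the same sector along consecutive rounds.
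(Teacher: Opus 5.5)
\paragraph{Overall assessment.} Your skeleton matches the paper's proof. You discard $\mathcal{B}\cup\mathcal{B}'$ from Corollary~\ref{corollary: extension of error reduction thm}. You split on whether some SE-residual cluster with no $\mathsf{D}_{\mathsf{S}}$ syndrome has weight at least $d_1$. Otherwise you use Lemma~\ref{lemma: level 1 distance for LMS} to turn $\mathsf{f}_{\mathsf{X}},\mathsf{f}_{\mathsf{Z}}$ into effective $H$-bus flips $\mathsf{h}_{\mathsf{X}},\mathsf{h}_{\mathsf{Z}}$. You then combine $4d_1\le|\mathsf{h}|+|\mathsf{h}_{\mathsf{X}}|+|\mathsf{h}_{\mathsf{Z}}|$ with the MWPM and MLE factor-of-two bounds.

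\paragraph{The gap.} The gap is exactly the step you flag, and your proposed repair does not close it. After you pigeonhole the roughly $2d_1$ flips by sector, the witnesses of one sector need not form a connected set on a single-sector adjacency graph. Flips from the two sectors can interleave along the $4d_1$ readouts. For example, take $X$-type $H$-bus errors on readouts $1,3,5,\dots$ and $Z$-type ones on readouts $2,4,6,\dots$. An $H$-bus error on readout $j$ only triggers the LMS detectors $(j-1,j)$ and $(j,j+1)$, so the $X$-sector witnesses are pairwise non-adjacent. There is then nothing to ``merge along consecutive rounds.'' Your count of single-sector connected clusters by $N_H(r_H\mathfrak{e})^{\mathrm{w}-1}$ therefore does not cover the failure event.

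\paragraph{How the paper avoids it.} The paper never asks for single-sector connectivity. It keeps $\mathsf{f}=\mathsf{f}_{\mathsf{X}}+\mathsf{f}_{\mathsf{Z}}+\mathsf{h}$ as one connected cluster on the full level-1 adjacency graph. That graph contains both $X$- and $Z$-type locations, joined through the $H$-LMS detectors, which is why $r_H$ and $N_H$ count both types. From this it derives $|\mathsf{f}_\epsilon|\ge\max(d_1,|\mathsf{f}|/4)$, and only then splits the \emph{induced errors}, not the flips, by Pauli type. The failure event is placed inside the set $\mathfrak{E}$ of mixed connected clusters of weight $\mathrm{w}$ containing at least $\max(d_1/2,\mathrm{w}/4)$ induced errors of a single type. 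Its probability is bounded by summing over subsets, $2^{\mathrm{w}}p_1^{\max(d_1/2,\mathrm{w}/4)}$ per cluster. This uses only single-sector local stochasticity, because only the single-type subset is required to occur. Since this fraction is well below $1/2$ of the cluster weight, the geometric sum yields $((2r_H\mathfrak{e})^4p_1)^{d_1/2}$ rather than your $((2r_H\mathfrak{e})^2p_1)^{d_1/2}$. That is the origin of the fourth power in $d_H$ and the form of $C_H$. Replacing your single-sector connected clusters with these mixed clusters repairs your argument.
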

\begin{proof}
   Throughout this proof, `$H$ buses' refers to $H$-buses in the $H$-LMS, and `logical measurement error' refers to the logical measurement error for the $H$-LMS. Subscripts `$\epsilon$' and `$\kappa$' for a level-1 error configuration denote the induced-error component and the inferred-error component, respectively. We construct the level-1 error model and level-1 adjacency graph for $H$ buses and buses for level-1 SE. Let $\mathsf{D}_{\mathsf{S}}$ be the collection of level-1 detectors generated by level-1 SE. Similar to the proof of Theorem~\ref{thm: fault probability for LMSs}, each level-1 $X$ or $Z$ error triggers up to $2\delta_{1}$ level-1 detectors. On the other hand, each level-1 detector can be triggered by up to $2(\delta_1+1)\max(\varpi_1,\varpi_L)+4\leq 8\delta_{1}\max(\varpi_1,\varpi_L)$ primitive level-1 errors. Therefore, the vertex degree of each vertex in the level-1 adjacency graph is upper bounded by $r_H=16\delta_1^2\max(\varpi_1,\varpi_L)$. Moreover, the number of level-1 errors is upper bounded by $N_{H}=(n_1(\delta_1+2) +s_1)\mathsf{T}+8d_1$. 
   
   Consider a level-1 residual error configuration $\mathsf{f}:=\mathsf{f}_{\mathsf{X}}+\mathsf{f}_{\mathsf{Z}}+\mathsf{h}$ obtained from Algorithm~\ref{Alg: alternative level 1 decoding}. Here, $\mathsf{f}_{\mathsf{X}}$ and $\mathsf{f}_{\mathsf{Z}}$ consist of level-1 $X$ and $Z$ errors, respectively, on cores and shuttle buses for level-1 SE; $\mathsf{h}$ consists of level-1 errors on $H$ buses. Without loss of generality, we assume $\mathsf{f}$ is a connected cluster on the level-1 adjacency graph. Suppose $\mathsf{f}$ results in a logical measurement error. If $\mathsf{f}_{\mathsf{X}}$ (or $\mathsf{f}_{\mathsf{Z}}$) contains a weight-$\mathrm{w}$ connected cluster with no syndrome in $\mathsf{D}_{\mathsf{S}}$ and $\mathrm{w}\geq d_{1}$, then this cluster contains at least $\max(d_{1}/2,\mathrm{w}/2)$ induced level-1 $X$ (or $Z$) errors. On the other hand, we now consider the opposite case that all weight-$\mathrm{w}$ clusters in $\mathsf{f}_{\mathsf{X}}$ or $\mathsf{f}_{\mathsf{Z}}$ with no syndrome in $\mathsf{D}_{\mathsf{S}}$ must satisfy $\mathrm{w}<d_1$. According to Lemma~\ref{lemma: level 1 distance for LMS}, $\mathsf{f}_{\mathsf{X}}$ (or $\mathsf{f}_{\mathsf{Z}}$) is equivalent to a level-1 error configuration $\mathsf{h}_{\mathsf{X}}$ (or $\mathsf{h}_{\mathsf{Z}}$) consisting only of level-1 $X$ (or $Z$) errors on $H$ buses with $|\mathsf{h}_{\mathsf{X}}|\leq |\mathsf{f}_{\mathsf{X}}|$ (or $|\mathsf{h}_{\mathsf{Z}}|\leq|\mathsf{f}_{\mathsf{Z}}|$). According to the MWPM decoding at the third step of Algorithm~\ref{Alg: alternative level 1 decoding}, we obtain
    \begin{equation}
        |\mathsf{h}_{\kappa}|\leq |\mathsf{h}_{\epsilon}| + |\mathsf{h}_{\mathsf{X}}|  +|\mathsf{h}_{\mathsf{Z}}|
    \end{equation}
    Moreover, as $\mathsf{f}$ induces a logical measurement error, we have 
    \begin{equation}
        4d_1\leq |\mathsf{h}|+|\mathsf{h}_{\mathsf{X}}| + |\mathsf{h}_{\mathsf{Z}}|
    \end{equation}
    Note that $|\mathsf{h}_{\mathsf{X}}|\leq |\mathsf{f}_{\mathsf{X}}|\leq 2|\mathsf{f}_{\mathsf{X},\epsilon}|$ and similarly, $|\mathsf{h}_{\mathsf{Z}}|\leq 2|\mathsf{f}_{\mathsf{Z},\epsilon}|$. We can then bound the total weight of level-1 physical errors.
    \begin{align}
        d_1\leq& |\mathsf{h}_{\epsilon}|+|\mathsf{f}_{\mathsf{X},\epsilon}| + |\mathsf{f}_{\mathsf{Z},\epsilon}|=|\mathsf{f}_{\epsilon}|\nonumber\\
        |\mathsf{f}|=&|\mathsf{h}|+|\mathsf{f}_{\mathsf{X}}|+|\mathsf{f}_{\mathsf{Z}}|\leq 4|\mathsf{f}_{\epsilon}|
    \end{align}

    Let $\mathfrak{E}$ be a set of all level-1 residual error configurations that (i) correspond to a connected cluster on the level-1 adjacency graph and (ii) contain at least $\max(d_{1}/2,\mathrm{w}/4)$ level-1 $X$ induced errors or at least $\max(d_{1}/2,\mathrm{w}/4)$ level-1 $Z$ induced errors with $\mathrm{w}$ the weight of this cluster. Then, according to Lemma~\ref{lemma: counting lemma}, the number of weight-$\mathrm{w}$ elements in $\mathfrak{E}$ is upper bounded by $N_{H}(r_H\mathfrak{e})^{\mathrm{w}-1}$. According to our analysis above, the event of a logical measurement error is contained in the event of inducing a residual configuration in $\mathfrak{E}$. We now discard the rare event in Corollary~\ref{corollary: extension of error reduction thm}, so that the induced level-1 $X$ (or $Z$) errors are uncorrelated with a level-1 error rate upper bounded by $z_{\mathrm{max}}(p/p_{th})^{d_{0}/2}$.  By Lemma~\ref{lemma: sufficient condition for bounding level 1 error}, we upper bound the probability of inducing an element in $\mathfrak{E}$ as follows.
    \begin{align}
        P(\mathfrak{E})\leq N_H((2r_H\mathfrak{e})^4p_1)^{d_1/2},
    \end{align}
    where we used $d\geq d_{H}$ to guarantee $p_{1}\leq 1/(4r_H\mathfrak{e})^4$. 
    Therefore, accounting for the discarded event, the probability for a logical measurement error is upper bounded by $C_{H}(p/p_{th})^{d_{0}d_{1}/4} + 2\mathrm{v}_{\mathsf{b}}(p/p_{th}')^{d_{0}d_{1}/2}$. 
\end{proof}

From the above corollary that upper bounds the probability of logical measurement error for a single $H$-LMS, we can follow Theorem~\ref{thm: fault probability for LMSs} to bound the probability of logical measurement errors for multiple LMSs (possibly including $X$-basis LMSs, $Z$-basis LMSs, and $H$-LMSs) on an HLP. We omit the exact error bound computation for brevity.  

\subsection{Extension module II: interfacing with external cores }
As shown in Fig.~\ref{fig: logical pauli measurement}(\textbf{f}), we can perform a transversal CNOT gate with an external core (a distance-$d_{ext}$ RSC with $d_{0}\leq d_{ext}\leq d_{0}d_{1}$) as the control and a $Z$ bus as the target. We call such a gate an extended core-bus CNOT gate. Similarly, we can perform an extended bus-core CNOT gate between an $X$ bus and an external core. Equipped with these extended hybrid-unit CNOT gates, the previously developed readout gadgets can jointly measure logical qubits on an HLP and on external cores. Note that we only allow external cores to directly interact with shuttle buses but not internal cores, therefore errors on external cores do not propagate to internal cores. Consider a simple setting where external cores are only allowed to interact with an HLP via readout gadgets but not with other external cores. Then, level-0 circuits on external cores have the same structure as those on internal cores. In this way, we can apply the arguments in Sec.~\ref{sec: approximate level 1 error reduction} and Sec.~\ref{subsec: extension model measure general logical operators} to bound level-1 error rates and the probability of logical measurement errors for LMSs in the presence of external cores. By analogy with Theorem~\ref{thm: approximate error reduction} and Corollary~\ref{corollary: extension of error reduction thm}, we expect that (i) level-1 $X$ (or $Z$) errors are still local stochastic, (ii) the level-1 error rate on internal cores and shuttle buses scales as $\sim p^{d_{0}/2}$, and (iii) the level-1 error rate on external cores scales as $\sim p^{d_{ext}/2}$. Similarly, by analogy with Theorem~\ref{thm: fault probability for LMSs} and Corollary~\ref{corollary: fault probability for h lms}, we expect the probability of logical measurement errors to be upper bounded by both $\sim p^{d_{ext}/2}$ and $\sim p^{d_{0}d_{1}/4}$. More generally, consider a hybrid architecture consisting of HLPs and external cores. Each HLP can perform its own logical Clifford operations with previously introduced LMSs and simultaneously interface with a subcollection of external cores. External cores that are currently not interfacing with any HLP can prepare special logical states (such as magic states) and participate in transversal logical Clifford gates. In this way, HLPs and external cores can collaborate to execute logical quantum computation (Fig.~\ref{fig: hybrid arch scheduling}). We leave a careful theoretical and numerical analysis of this hybrid architecture for future work.  

\begin{figure}
    \centering
    \includegraphics[width=\columnwidth]{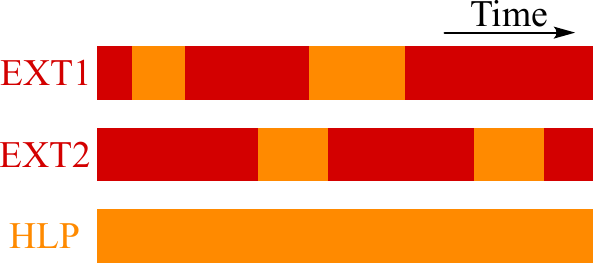}
    \caption{Scheduling of a hybrid architecture consisting of hierarchical logical processors (HLPs) and external cores. One HLP and two external cores are shown here. Red blocks denote transversal logical operations and state preparations on external cores only. Orange blocks for an external core indicate interfacing the core with an HLP via joint logical Pauli measurements.}
    \label{fig: hybrid arch scheduling}
\end{figure}

\section{Numerical Simulation}\label{sec: details on numerical simulation}
In this section, we present concrete HLP designs and provide details of numerical simulations and performance extrapolation. 

\subsection{Details on HLP constructions and circuit-level simulation}
\begin{figure}
    \centering
    \includegraphics[width=0.8\columnwidth]{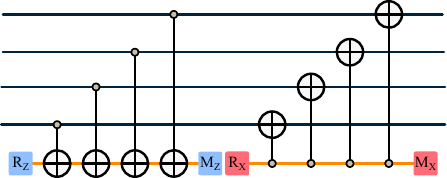}
    \caption{Level-1 SE circuit for the $[[4,2,2]]$ Iceberg code. The level-1 $Z$ and $X$ stabilizers are extracted sequentially with one shuttle bus at a time.}
    \label{fig: iceberg se}
\end{figure}

As described in the main text, we can either use one shuttle bus at a time (Fig.~\ref{fig: iceberg se}) or two shuttle buses concurrently to implement the level-1 SE for an HLP based on the Iceberg code. 
For a $[[n^2,n-4n+2,4]]$ Square Berg code ($n$ is divisible by 4)~\cite{gidney_yoked_2025}, we measure $4n$ stabilizers in each level-1 SE round. As shown in Fig.~\ref{fig: hlp performance}(\textbf{a}), among these stabilizers, there is one $X$ stabilizer and one $Z$ stabilizer associated with each row and each column. To perform a round of level-1 SE on an HLP based on a Square Berg code, we can use $n$ shuttle buses at a time to measure the $Z$ stabilizers of all columns, followed by $X$ stabilizers of all columns, $Z$ stabilizers of all rows, and finally $X$ stabilizers of all rows. We use this level-1 SE design for the two HLPs based on the $[[64,34,4]]$ and $[[256,194,4]]$ Square Berg codes in Fig.~\ref{fig: hlp performance}. 

\begin{figure}
    \centering
    \includegraphics[width=\columnwidth]{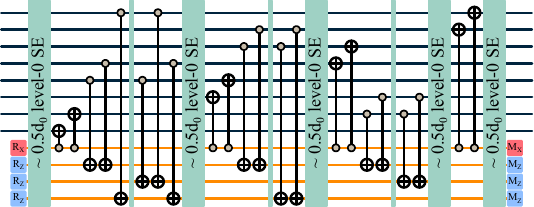}
    \caption{Scheduling of three $Z$-basis logical readout gadgets inside a level-1 SE round corresponding to the numerical results in Fig.~\ref{fig: logical pauli measurement}(\textbf{c}).  Each thin teal strip represents a single synchronized level-0 SE round on all working units.}
    \label{fig: ppm scheduling}
\end{figure}

For simulating logical memories with perfect time boundaries, we follow the approach in Ref.~\cite{gidney_yoked_2025} to simultaneously detect logical $X$ and $Z$ errors. More specifically, for every logical qubit, we associate it with a perfect physical qubit, referred to as the register qubit. For each logical qubit $i$, let $\overline{\mathsf{X}}_{i}$ and $X_i$ denote the logical $X$ operator and the Pauli $X$ operator on the associated register qubit, respectively. Similarly, let $\overline{\mathsf{Z}}_{i}$ and $Z_i$ be the logical $Z$ operator and the Pauli $Z$ operator on the register qubit, respectively. To perfectly initialize the logical memory, we first measure all (level-0 and level-1) stabilizers perfectly, then we measure $\overline{\mathsf{X}}_{i}\otimes X_{i}$ and $\overline{\mathsf{Z}}_{i}\otimes Z_{i}$ perfectly for every logical qubit $i$, thereby effectively creating a Bell pair between every logical qubit and its register qubit. After initialization, register qubits are left idling without noise, while qubits and gates in logical memory are subject to circuit-level noise. Finally, we repeat the perfect measurement procedure used in initialization. In this way, both logical $X$ and $Z$ errors can be detected simultaneously. If the logical qubit $i$ is initialized or measured by measuring both $\overline{\mathsf{X}}_i\otimes X_i$ and $\overline{\mathsf{Z}}_i\otimes Z_i$, then we say this logical qubit is initialized or measured in the Bell basis. On the other hand, if a logical qubit $i$ is initialized or measured by measuring $\overline{\mathsf{X}}_i$ (or $\overline{\mathsf{Z}}_i$), then we say this logical qubit is initialized or measured in the $X$ (or $Z$) basis. 

In Fig.~\ref{fig: logical pauli measurement}(\textbf{c}), we present circuit-level simulations for benchmarking logical Pauli $Z$ measurements (implemented with LMSs) on an HLP based on the $[[8,6,2]]$ Iceberg code with compilation parameters $\alpha_{\mathsf{c}}=1$ and $\alpha_{\mathsf{b}}=0.5$. We simulated three circuits, all having perfect time boundaries and running the HLP for five level-1 SE rounds, where the level-1 $X$ and $Z$ stabilizers are measured sequentially, as in Fig.~\ref{fig: iceberg se}. The first is a memory baseline circuit with six logical qubits initialized and measured in the Bell basis. Here, each logical qubit $i\in\{1,\cdots 6\}$ is assigned a logical $X$ operator $\overline{\mathsf{X}}_i=\mathsf{X}_{1}\otimes \mathsf{X}_{i+1}$ and a logical $Z$ operator $\overline{\mathsf{Z}}_i=\mathsf{Z}_{i+1}\otimes \mathsf{Z}_{8}$. The second and the third circuits contain one and three LMSs, respectively. Since the LMS in the second circuit exactly corresponds to one LMS in the third circuit, we elaborate on the latter in the following. As described in the main text, the three LMSs measure three logical $Z$ operators: $\overline{\mathsf{Z}}_1\otimes\cdots \otimes\overline{\mathsf{Z}}_6$, $\overline{\mathsf{Z}}_{1}\otimes \overline{\mathsf{Z}}_2 \otimes \overline{\mathsf{Z}}_3$, and $\overline{\mathsf{Z}}_4\otimes \overline{\mathsf{Z}}_5\otimes\overline{\mathsf{Z}}_6$, respectively. We define a new set of logical qubits with the following logical $X$ and $Z$ operators.
\begin{equation}
    \begin{array}{ll}
        \overline{\mathsf{Z}}'_1:=\overline{\mathsf{Z}}_1\otimes\cdots\otimes \overline{\mathsf{Z}}_{6}, & \overline{\mathsf{X}}_1':=\overline{\mathsf{X}}_{6}\\
        \overline{\mathsf{Z}}'_2:=\overline{\mathsf{Z}}_1\otimes\overline{\mathsf{Z}}_2\otimes \overline{\mathsf{Z}}_{3}, & \overline{\mathsf{X}}_2':=\overline{\mathsf{X}}_1\otimes\overline{\mathsf{X}}_{6}\\
        \overline{\mathsf{Z}}'_3:=\overline{\mathsf{Z}}_2, & \overline{\mathsf{X}}_3':=\overline{\mathsf{X}}_1\otimes\overline{\mathsf{X}}_{2}\\
        \overline{\mathsf{Z}}'_4:=\overline{\mathsf{Z}}_3, & \overline{\mathsf{X}}_4':=\overline{\mathsf{X}}_1\otimes\overline{\mathsf{X}}_{3}\\
         \overline{\mathsf{Z}}'_5:=\overline{\mathsf{Z}}_4, & \overline{\mathsf{X}}_5':=\overline{\mathsf{X}}_4\otimes\overline{\mathsf{X}}_{6}\\
         \overline{\mathsf{Z}}'_6:=\overline{\mathsf{Z}}_5, & \overline{\mathsf{X}}_6':=\overline{\mathsf{X}}_5\otimes\overline{\mathsf{X}}_{6}
    \end{array}
\end{equation}
We initialize and finally measure the first two logical qubits in the $Z$ basis and the remaining logical qubits in the Bell basis, such that (i) the LMS outcomes are predetermined by initialization and (ii) errors on logical degrees of freedom not measured by the LMSs can be detected. The two logical readout gadgets in each LMS are embedded in the third and fourth level-1 SE rounds, respectively. More specifically, each logical readout gadget is contained within the lifetime of an $X$ bus for measuring the level-1 $X$ stabilizer. The scheduling of three logical readout gadgets corresponding to the three LMSs inside a single level-1 SE round is shown in Fig.~\ref{fig: ppm scheduling}. We note one anomaly in the scheduling: a core-bus CNOT gate coupling the $Z$ bus on the bottom of Fig.~\ref{fig: ppm scheduling} to two cores is now split into two adjacent level-0 time steps to avoid multiple core-bus CNOT gates operating on the same core at the same time. This scheduling choice does not comply with the restrictions imposed in Sec.~\ref{sec: hlp basic modules}. However, these restrictions are primarily introduced to simplify the analysis and presentation rather than to represent hard architectural constraints; practical implementations generally provide additional scheduling degrees of freedom beyond those assumed in our analysis. The simulation results presented in Fig.~\ref{fig: logical pauli measurement}(\textbf{c}) further validate this point.

As mentioned in Sec.~\ref{subsec: detectors and decoding}, we use correlated level-0 decoding in our circuit-level simulations. We extract soft outputs at the end of this procedure. We note that correlated level-0 decoding reduces to correlated matching~\cite{fowler_optimal_2013} for an idling core.  

\subsection{Soft-output Simulation}\label{subsec: behaviors of soft outputs}
In this subsection, we describe a high-speed heuristic sampling method, soft-output simulation, that directly simulates a level-1 circuit by sampling level-1 errors and then performing level-1 decoding. This method is essentially the gap simulation method developed in Ref.~\cite{gidney_yoked_2025} with gap values replaced by soft outputs~\cite{meister_efficient_2024}. The level-1 circuit error model we use is more refined than the one in Ref.~\cite{gidney_yoked_2025}. 
\begin{algorithm}
    \let\oldnl\nl
    \newcommand{\nonl}{\renewcommand{\nl}{\let\nl\oldnl}}
    \caption{Soft-output simulation}\label{Alg: soft output simulation}
    \KwIn{\justifying \small A level-1 detector error model. Every level-1 error is associated with a canonical segment and is tagged with the following information: length (in terms of level-0 time steps), Pauli basis, and working-unit type of the associated segment. } 
    \KwData{\justifying\small A probability distribution for the soft output for every canonical segment in the level-1 detector error model; such a probability distribution depends only on the tagged information of a canonical segment. A level-1 error rate ansatz $\mathrm{LER}:\mathbb{R}_{\geq0}\to\mathbb{R}_{\geq0}$ that maps the soft output on a canonical segment to an expected level-1 error rate.}
    \KwOut{\justifying\small A sampled level-1 error configuration and a decoded level-1 error configuration.}  
    \justifying\small Initialize two empty registers $\mathsf{f}$ and $\mathsf{g}$, each of which stores a level-1 error configuration. Initialize a level-1 decoding hypergraph $\mathsf{G}$ with the vertex set as the collection of all level-1 detectors and the hyperedge set as the collection of all level-1 primitive errors in the level-1 detector error model, such that the endpoints of each hyperedge $\mathsf{e}$ are exactly the level-1 syndrome of $\mathsf{e}$. 

    \For{each primitive level-1 error $\mathsf{e}$ in the error model}{
        \nonl Sample a soft output $\phi$ according to the soft-output distribution provided for the canonical segment associated with $\mathsf{e}$.
        
        \nonl Assign the weight of the hyperedge in $\mathsf{G}$ corresponding to $\mathsf{e}$ to $\phi$.

        \nonl Sample the occurrence of $\mathsf{e}$ based on the error probability $\mathrm{LER}(\phi)$. If the sampled result is positive, add $\mathsf{e}$ to $\mathsf{f}$.
    }

    Perform most-likely-error (MLE)~\cite{gottesman_fault_tolerant_2014,cain_correlated_2024,zhou_low_overhead_2025} decoding on $\mathsf{G}$ with the level-1 syndrome configuration $\overline{\partial}_1\mathsf{f}$. The decoding result is added to $\mathsf{g}$. \Return{$\mathsf{f}$ and $\mathsf{g}$.}
\end{algorithm}

The soft-output simulation procedure is described in Algorithm~\ref{Alg: soft output simulation}, which requires two key ingredients. First, we need to efficiently construct the soft-output distribution for canonical segments with various lengths, Pauli basis, and working-unit types under different values of $d_{0}$. Secondly, we need to construct the function $\mathrm{LER}$ that infers the level-1 error probability for a canonical segment based on its soft output. We elaborate on these ingredients in the following. 
\begin{figure}
    \centering
    \includegraphics[width=\columnwidth]{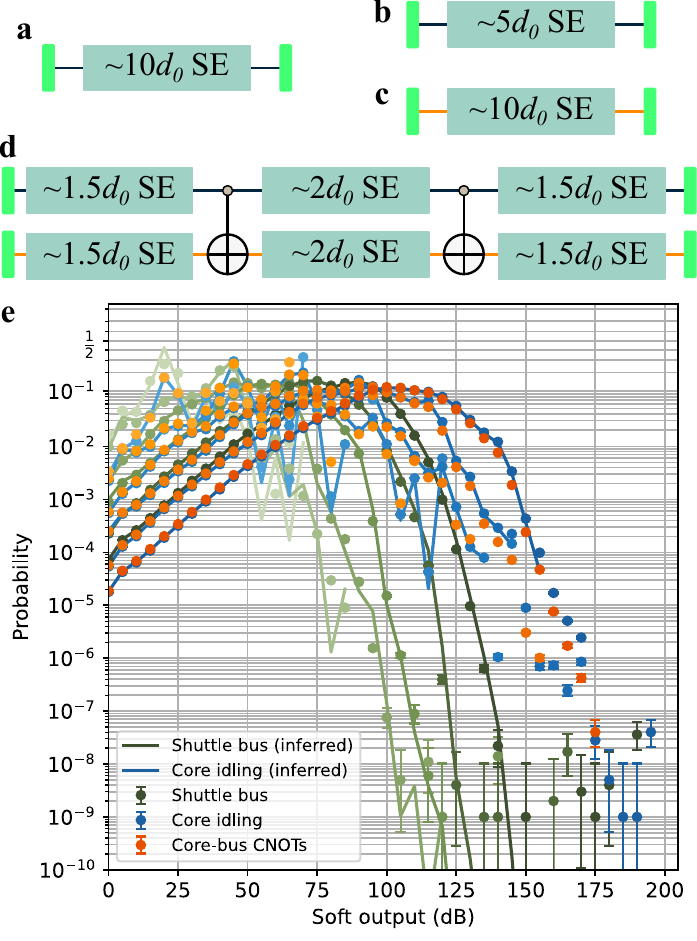}
    \caption{Soft-output reference and its validation. (\textbf{a}\textendash{}\textbf{d}) Circuits used to generate the reference soft-output distribution and to validate the soft-output inference method. Each circuit has perfect time boundaries. For each simulation shot of the circuits in (\textbf{a}), (\textbf{b}), and (\textbf{d}), we extract a pair of soft outputs, one from the $X$-basis segment and one from the $Z$-basis segment, with both segments spanning all level-0 SE rounds on the core. For the circuit in (\textbf{c}), we extract the soft output on the $X$-basis segment on the $X$ bus. (\textbf{e}) $X$-basis soft outputs for segments in (\textbf{b}\textendash{}\textbf{d}). Results are shown for distance $d_{0}$ ranging from $3$ to $8$, with darker shades corresponding to larger $d_{0}$. Soft outputs sampled from circuit-level simulations are shown as dots; inferred soft outputs are shown as curves.}
    \label{fig: so distribution testing}
\end{figure}

\textit{Soft-output distribution}\textemdash{}We first construct a reference soft-output distribution on core segments with a fixed length of $10d_{0}$ for each core size $d_{0}\in\{3,4,\cdots,12\}$. To achieve this, we perform circuit-level simulation of an idling circuit on a single core with $10d_{0}$ level-0 SE rounds (Fig.~\ref{fig: so distribution testing}(\textbf{a})). For each simulation shot, we extract a pair of soft outputs~\cite{meister_efficient_2024}, one on the $X$-basis and one on the $Z$-basis segment, where both segments span all level-0 SE rounds. From these reference distributions, we can heuristically infer soft-output distributions for canonical segments on cores and shuttle buses~\cite{gidney_yoked_2025}. Denote the soft-output distribution on an $X$-basis (or $Z$-basis) core segment with a length of $t$ level-0 time steps and a core size $d_{0}$ as $P_{X,t,d_0}$ (or $P_{Z,t,d_{0}}$). Define the corresponding complementary cumulative distribution function (CCDF) $\overline{F}_{A,t,d_{0}}$ ($A\in\{X,Z\}$) by $\overline{F}_{A,t,d_0}(\phi):=\sum_{\phi'\geq\phi}P_{A,t,d_{0}}(\phi')$. Following Ref.~\cite{gidney_yoked_2025}, from our reference CCDF $\overline{F}_{A,10d_{0},d_{0}}$ (obtained from the reference distribution above), we heuristically infer another CCDF $\overline{F}_{A,t',d_{0}}$ corresponding to a segment length $t'$ as 
\begin{equation}\label{eq: CCDF inference}
    \overline{F}_{A,t',d_{0}}(\phi):=\left(\overline{F}_{A,10d_{0},d_{0}}(\phi)\right)^{\frac{t'}{10d_{0}}}. 
\end{equation}
As a CCDF uniquely determines a distribution, we can obtain from Eq.~\ref{eq: CCDF inference} the soft-output distribution $P_{A,t',d_{0}}$. Similarly, consider an $X$ (or $Z$) bus with a width of $d_{0}$ and a length of $d_{0}d_{1}$. Denote the soft-output distribution on an $X$-basis (or $Z$-basis) segment with a length of $t$ as $P_{A,t,d_{0},d_{1}}$ with $A=X$ (or $A=Z$); denote the corresponding CCDF as $\overline{F}_{A,t,d_{0},d_{1}}$, which is inferred as  
\begin{equation}
    \overline{F}_{A,t,d_{0},d_{1}}(\phi) := \left(\overline{F}_{A,10d_{0},d_{0}}(\phi)\right)^{\frac{td_1}{10d_{0}}}.
\end{equation}
We test the validity of this heuristic inference method by comparing the inferred soft-output distributions with the numerically sampled ones for the circuits in Fig.~\ref{fig: so distribution testing}(\textbf{b}\textendash{}\textbf{d}) and observe close agreement between the two (Fig.~\ref{fig: so distribution testing}(\textbf{e})). 

\begin{figure}
    \centering
    \includegraphics[width=\columnwidth]{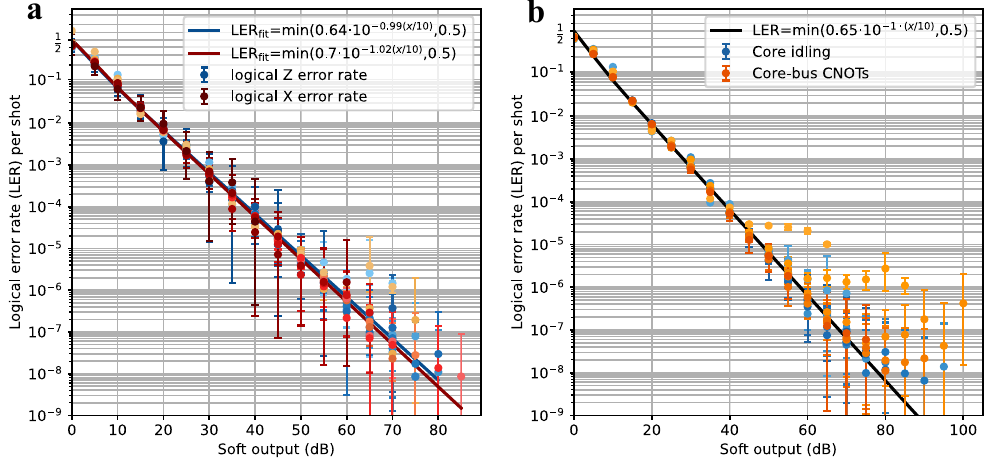}
    \caption{Fitting the level-1 error probability ansatz. (\textbf{a}) $X$-basis (or $Z$-basis) soft outputs and their corresponding logical $Z$ (or $X$) error rates for the circuit in Fig.~\ref{fig: so distribution testing}(\textbf{a}) are used to fit the level-1 error probability ansatz in Eq.~\ref{eq: ler ansatz for soft outputs}. Results are shown for $d_{0}$ ranging from $3$ to $12$, with darker shades corresponding to larger $d_{0}$. (\textbf{b}) $X$-basis soft outputs and their corresponding logical $Z$ error rates on a core for circuits in Fig.~\ref{fig: so distribution testing}(\textbf{b}) and Fig.~\ref{fig: so distribution testing}(\textbf{d}) are shown as blue and orange dots, respectively. The ansatz function $\mathrm{LER}_{a,b}$ with $a=0.65$ and $b=1$ is also shown for comparison. Here, results are shown for $d_{0}$ ranging from $3$ to $8$.}
    \label{fig: ler ansatz from soft output}
\end{figure}
\begin{figure}
    \centering
    \includegraphics[width=\columnwidth]{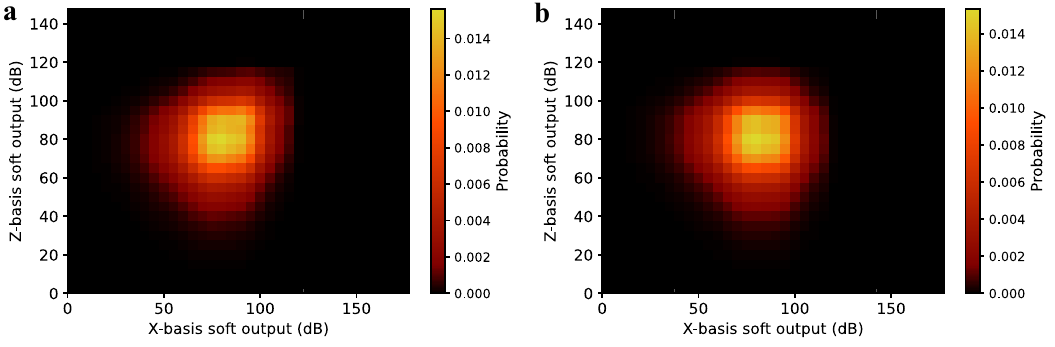}
    \caption{Correlation between $X$-basis and $Z$-basis soft outputs for the idling core in Fig.~\ref{fig: so distribution testing}(\textbf{a}) with distance $d_{0}=7$. (\textbf{a}) Joint probability distribution of the $X$-basis and $Z$-basis soft outputs. (\textbf{b}) Product of the marginal probability distributions for $X$-basis and $Z$-basis soft outputs. }
    \label{fig: xz correlation}
\end{figure}
\begin{figure}
    \centering
    \includegraphics[width=\columnwidth]{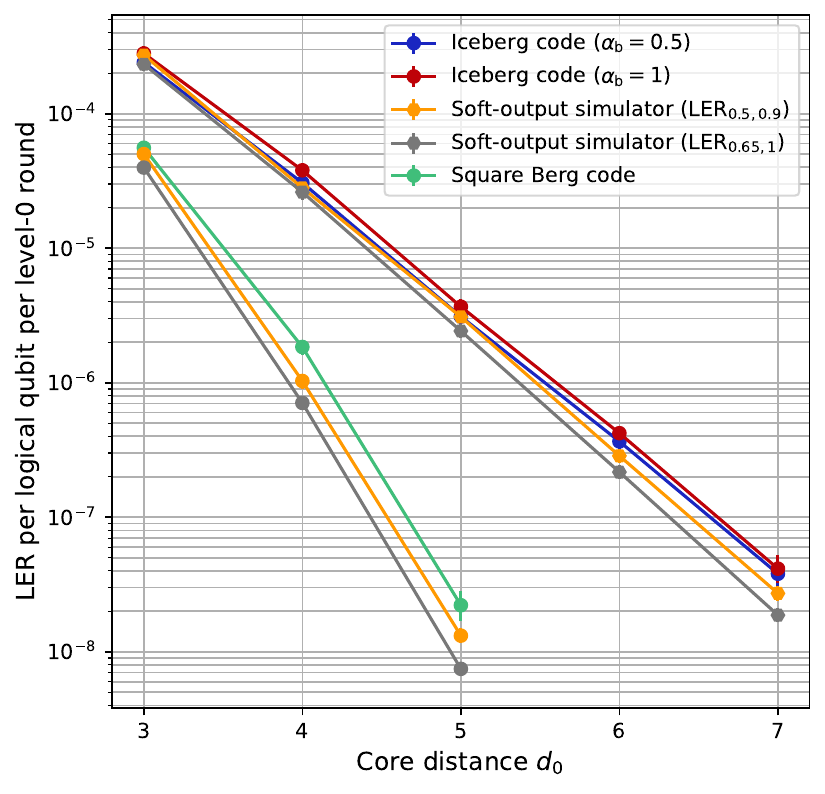}
    \caption{Calibration of soft-output simulation. See also Fig.~\ref{fig: hlp performance}(\textbf{b}) for an illustration of a subset of data points here. By default, compilation parameters $\alpha_{\mathsf{c}}$ and $\alpha_{\mathsf{b}}$ are both set to be $1$. Additionally, we simulate the HLP based on the $[[4,2,2]]$ Iceberg code at the circuit level with $\alpha_{\mathsf{b}}=0.5$.}
    \label{fig: so calibration}
\end{figure}

\textit{Level-1 error rate estimation from soft outputs}\textemdash{}We use all sampling results for the core-idling circuit in Fig.~\ref{fig: so distribution testing}(\textbf{a}) with $d_{0}\in\{3,\cdots,12\}$ to fit the following parametrized ansatz for estimating logical $X$ (or $Z$) error rates from $Z$-basis (or $X$-basis) soft outputs:
\begin{equation}\label{eq: ler ansatz for soft outputs}
    \mathrm{LER}_{a,b}(\phi):=\min(a\cdot 10^{-b\cdot\phi/10},0.5),
\end{equation}
where $a$ and $b$ are non-negative fitting parameters. We find that $a=0.65$ and $b=1$ provides a close fit for both logical $X$ and $Z$ error rates (Fig.~\ref{fig: ler ansatz from soft output}(\textbf{a})). We compare this fitted ansatz with data pairs consisting of $X$-basis soft outputs and logical $Z$ error rates on a core for circuits in Fig.~\ref{fig: so distribution testing}
(\textbf{b}) and Fig.~\ref{fig: so distribution testing}(\textbf{d}). The fitted ansatz matches the data points well for the idling circuit in Fig.~\ref{fig: so distribution testing}(\textbf{b}), while data points for the circuit in Fig.~\ref{fig: so distribution testing}(\textbf{d}) containing core-bus CNOT gates exhibit some deviation at large soft output values for small $d_{0}$\textemdash{}more specifically, logical $Z$ error rates plateau at large soft output values. We attribute the deviation, at least in part, to logical $Z$ errors on the $Z$ bus between the two core-bus CNOT gates.

\begin{figure*}
    \centering
    \includegraphics[width=\linewidth]{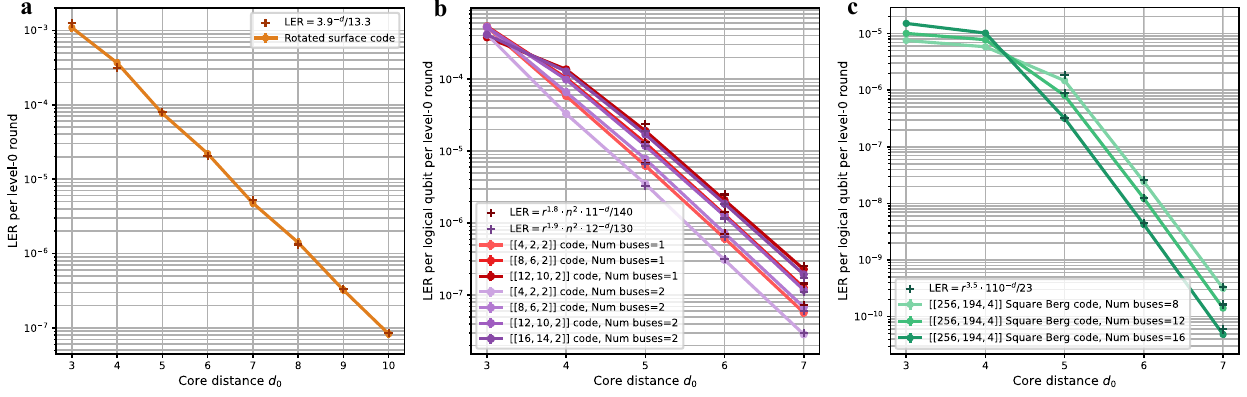}
    \caption{Extrapolation of logical error rates per level-0 SE round for the rotated surface code (RSC) and hierarchical logical processors (HLPs). (\textbf{a}) Logical error rates of an idling RSC with distances ranging from $3$ to $10$. Data points are obtained from circuit-level simulations with a correlated-matching decoder~\cite{fowler_optimal_2013}. (\textbf{b}) Logical error rates of HLPs based on Iceberg codes with one or two shuttle buses operating concurrently. (\textbf{c}) Logical error rates of HLPs based on the $[[256,194,4]]$ Square Berg code with different limits on the number of concurrently operating shuttle buses during level-1 SE. Data points in (\textbf{b}\textendash{}\textbf{c}) are obtained from soft-output simulations. To account for the gap between the soft-output and the circuit-level results observed in Fig.~\ref{fig: so calibration}, we multiply the soft-output logical error rates by a factor of two when estimating the logical error rates of HLPs. For the HLPs, logical error rates are fitted per level-1 SE round and converted to logical error rates per logical qubit per level-0 SE round for plotting.}
    \label{fig: hlp ler extrapolation}
\end{figure*}

Soft outputs collected for the idling core (Fig.~\ref{fig: so distribution testing}(\textbf{a})) also allow a fine-grained study of correlations between logical $X$ and $Z$ errors on the core. From these data, we obtain, for each core distance $d_{0}\in\{3,\cdots,12\}$, both the joint distribution of $X$-basis and $Z$-basis soft outputs and the product of their marginal distributions. We observe close agreement between the joint distribution and the product distribution for all $d_{0}\in\{4,\cdots,12\}$, indicating that the $X$-basis and $Z$-basis soft outputs are approximately independent. See Fig.~\ref{fig: xz correlation}(\textbf{a}\textendash{}\textbf{b}) for an illustration of the $d_{0}=7$ case. Since the $X$-basis and $Z$-basis soft outputs accurately predict probabilities for logical $Z$ and $X$ errors, respectively, this independence suggests that the logical error probability in one basis is insensitive to the component of physical-error configuration in the other basis, at the resolution captured by soft outputs. More specifically, we group physical-error configurations according to the $Z$-basis soft outputs determined by the $X$-basis components of those configurations. Across these groups, the logical $Z$ error probability remains approximately unchanged. Conversely, we group physical-error configurations according to the $X$-basis soft outputs determined by the $Z$-basis components of those configurations. Across these groups, the logical $X$ error probability remains approximately unchanged. Ref.~\cite{gidney_yoked_2025} observed that the overall logical $X$ and $Z$ errors are approximately uncorrelated on an idling core. The fine-grained study above extends this observation to soft-output-conditioned logical error probabilities. 
As discussed in Sec.~\ref{subsec: extension model measure general logical operators}, if level-1 $X$ and $Z$ errors are uncorrelated, we expect an $H$-LMS with $d_1$ $H$-transformed logical readout gadgets to achieve full level-1 distance. Without this assumption, Corollary~\ref{corollary: fault probability for h lms} provides a worst-case guarantee for an $H$-LMS with $4d_1$ such gadgets. The observation that level-1 $X$ and $Z$ errors on an idling core are approximately uncorrelated provides a starting point for future studies of correlations among level-1 $X$ and $Z$ errors on an HLP, as well as the optimization of $H$-transformed readout gadgets and $H$-LMSs.

As described in the main text (Fig.~\ref{fig: hlp performance}(\textbf{b})), we test the accuracy of soft-output simulation on the following two HLPs:
\begin{enumerate}
    \item An HLP with the $[[4,2,2]]$ Iceberg code as the level-1 code. The HLP is run as a memory for ten level-1 SE rounds. 
    \item An HLP with the $[[64,34,4]]$ Square Berg code as the level-1 code. We simulate only five level-1 SE rounds on the HLP, since circuit-level simulation is costly at this scale.    
\end{enumerate}
As shown in Fig.~\ref{fig: so calibration}, the soft-output simulations using the fitted level-1 error rate ansatz $\mathrm{LER}_{0.65,1}$ produce results close to those obtained from circuit-level simulations. However, they consistently underestimate the logical error rates, and the gap grows slightly with increasing core distance $d_{0}$. To reduce this dependence on $d_{0}$, we use a slightly tuned ansatz $\mathrm{LER}_{0.5,0.9}$, which yields slightly higher estimates of level-1 error rates than $\mathrm{LER}_{0.65,1}$ at large soft-output values. With this ansatz, the gap between the soft-output and circuit-level simulations remains almost constant for both HLPs as $d_{0}$ increases. We therefore use $\mathrm{LER}_{0.5,0.9}$ for the soft-output simulations of the other HLPs in the following subsection. Moreover, to account for the remaining gap, we multiply the logical error rates obtained from the soft-output simulations by a factor of two when estimating the logical error rates of HLPs.    

\subsection{Extrapolation for logical error rates}

In this subsection, we describe how we extrapolate the performance of the RSC and HLPs to obtain the qubit and time overhead estimation in Fig.~\ref{fig: hlp performance}(\textbf{c}\textendash{}\textbf{d}). For the RSC, we perform circuit-level simulation of the idling circuit in Fig.~\ref{fig: so distribution testing}(\textbf{a}) with a correlated-matching decoder~\cite{fowler_optimal_2013} and distance $d_{0}$ ranging from $3$ to $10$. We fit the logical error rates (per SE round) as a function of the code distance to $\mathrm{LER}_{\mathrm{RSC}}^{a,b}(d_{0})=a^{-d_{0}}/b$ (Fig.~\ref{fig: hlp ler extrapolation}(\textbf{a})). For HLPs based on $[[n_1,n_1-2,2]]$ Iceberg codes with a single shuttle bus operating at a time, we use soft-output simulation (of $10$ level-1 SE rounds) to estimate their logical error rates for code sizes $n_1\in\{4,8,12\}$ and core distance $d_{0}$ ranging from $3$ to $7$ (Fig.~\ref{fig: hlp ler extrapolation}(\textbf{b})). Using data with $d_{0}\in\{5,6,7\}$, we fit the logical error rates per level-1 SE round to $\mathrm{LER}_{\mathrm{ICE}}^{a,b,c}(d_{0},n_1,r)=r^{c}\cdot n_1^2\cdot a^{-d_{0}}/b$, where $r$ is the length of each level-1 SE round~\cite{gidney_yoked_2025}. We expect the fitted parameter $c$ to be close to the Iceberg-code distance~\cite{gidney_yoked_2025}. Similarly, we use soft-output simulation to study HLPs based on Iceberg codes with two shuttle buses operating concurrently, and perform the same procedure described above (Fig.~\ref{fig: hlp ler extrapolation}(\textbf{b})). For HLPs based on the $[[256,194,4]]$ Square Berg code with different numbers of concurrent shuttle buses, we perform soft-output simulation of 10 level-1 SE rounds and use the data with $d_{0}\in\{5,6,7\}$ to fit the logical error rates per level-1 SE round to $\mathrm{LER}_{\mathrm{SQR}}^{a,b,c}(d_{0},r)=r^{c}\cdot a^{-d_{0}}/b$ (Fig.~\ref{fig: hlp ler extrapolation}(\textbf{c})). Based on these fitted logical error rate ansatzes, we can determine the core distance $d_{0}$ required to attain a target logical error rate. This then allows us to estimate the qubit overhead per logical qubit and time overhead per level-1 SE round.

\end{document}